\numberwithin{equation}{section}
\theoremstyle{plain}
\def\a{\mbox{\boldmath$a$}}
\def\H{\mbox{\boldmath$H$}}
\def\bR{\mbox{\boldmath$R$}}
\def\A{\mbox{\boldmath$A$}}
\def\B{\mbox{\boldmath$B$}}
\def\bG{\bm{\Gamma}}
\def\bC{\mbox{\boldmath$C$}}
\def\x{\mbox{\boldmath$x$}}
\def\y{\mbox{\boldmath$y$}}
\def\b{\mbox{\boldmath$b$}}
\def\d{\mbox{\boldmath$d$}}
\def\B{\mbox{\boldmath$B$}}
\def\W{\mbox{\boldmath$W$}}
\def\m{\mbox{\boldmath$m$}}
\def\v{\mbox{\boldmath$v$}}
\def\bbeta{\mbox{\boldmath$\beta$}}
\def\seq#1#2{#1{:}#2}
\newcommand{\bm}[1]{\boldsymbol{#1}}
\newcommand{\E}{\mathsf{E\,}}
\renewcommand{\d}{\mathrm{d}}
\renewcommand{\P}{\mathsf{P}}
\newcommand{\mF}{\mathcal{F}}
\newcommand{\iid}{\stackrel{iid}{\sim}}
\newcommand{\wh}[1]{\smash{\widehat{#1}}}
\newcommand{\var}{\mathsf{Var}\,}
\def\C {\,|\:}
\def\C {\,|\:}
\def\B{\bm{B}}
\def\A{\bm{A}}
\def\W{\bm{W}}
\def\y{\bm{y}}
\def\bg{\bm{\gamma}}
\def\b{\bm{\beta}}
\renewcommand{\d}{\mathrm{d}\,}
\newcommand{\e}{\mathrm{e}}
\newcommand{\mysetminusD}{\hbox{\tikz{\draw[line width=0.6pt,line cap=round] (3pt,0) -- (0,6pt);}}}
\newcommand{\mysetminusT}{\mysetminusD}
\newcommand{\mysetminusS}{\hbox{\tikz{\draw[line width=0.45pt,line cap=round] (2pt,0) -- (0,4pt);}}}
\newcommand{\mysetminusSS}{\hbox{\tikz{\draw[line width=0.4pt,line cap=round] (1.5pt,0) -- (0,3pt);}}}
\newcommand{\mysetminus}{\mathbin{\mathchoice{\mysetminusD}{\mysetminusT}{\mysetminusS}{\mysetminusSS}}}
\def\sls{{\mysetminus}}
\newtheorem{theorem}{Theorem}
\newtheorem{lemma}{Lemma}
\newtheorem{remark}{Remark}
\newtheorem{definition}{Definition}
\begin{document}

\begin{frontmatter}
\title{Dynamic Variable Selection \\with Spike-and-Slab Process Priors}
\runtitle{Dynamic Variable Selection with Spike-and-Slab Process Priors}

\begin{aug}
\author{VR}{Veronika  Rockova\thanksref{addr1,t1}\ead[label=e1]{veronika.rockova@chicagobooth.edu}},
\author{KM}{Kenichiro McAlinn\thanksref{addr1}\ead[label=e2]{kenichiro.mcalinn@chicagobooth.edu}}

\runauthor{Rockova and McAlinn}

\address[addr1]{Booth School of Business, University of Chicago,  5807 S Woodlawn Ave, Chicago, IL 60637
    \printead{e1} 
    \printead*{e2}
}

\thankstext{t1}{The authors gratefully acknowledge the support from the James S. Kemper Foundation Faculty Research Fund at the University of Chicago Booth School of Business.}

\end{aug}

\begin{abstract}
{We address the problem of dynamic variable selection in time series regression with unknown  residual variances, where the set of active predictors is allowed to evolve over time.}
To capture  time-varying variable selection uncertainty, we introduce new dynamic shrinkage priors for the time series of regression coefficients. 
These priors are characterized by two main ingredients: smooth parameter evolutions and intermittent zeroes for modeling predictive breaks. 
More formally,  our proposed {\em Dynamic Spike-and-Slab (DSS)}  priors are  constructed as mixtures of two processes: a spike process for the irrelevant coefficients and a slab autoregressive process for the active coefficients. 
The mixing weights are themselves time-varying and depend on lagged values of the series. 
Our $DSS$ priors are probabilistically coherent in the sense that their stationary distribution is fully known and characterized by {\em spike-and-slab  marginals}.   
For  posterior sampling  over dynamic regression coefficients, model selection indicators as well as unknown dynamic residual variances, we propose a {\em Dynamic SSVS}    algorithm based on
forward-filtering and backward-sampling. To scale our method to large data sets, we  develop a {\em Dynamic EMVS} algorithm for MAP smoothing. 
We demonstrate,  through simulation and a topical  macroeconomic dataset, that $DSS$ priors are very effective at separating active and noisy coefficients.
Our fast implementation significantly extends the reach of spike-and-slab methods to big time series data.

\end{abstract}

\begin{keyword}
\kwd{Autoregressive mixture processes}
\kwd{Dynamic sparsity}
\kwd{MAP smoothing}
\kwd{Spike and Slab}
\kwd{Stationarity}
\end{keyword}

\end{frontmatter}


\section{Dynamic Sparsity}

For dynamic linear modeling with many potential predictors, the assumption of a static generative model with a fixed subset of regressors (albeit with time-varying regressor effects) may be misleadingly  restrictive.  By obscuring variable selection uncertainty over time, confinement to a single inferential model may lead to poorer predictive performance, especially when the actual effective subset at each time is sparse. The potential for dynamic model selection techniques in time series modeling has been  recognized \citep{fruhwirth_wagner,groen, nakajima_west, kalli_griffin,chan_koop}. In  inflation forecasting, for example, {large sets of predictors are available  and it is expected that} the forecasting model changes over time, not only its coefficients \citep{KoopKorobilis2012,groen,kalli_griffin,wright2009forecasting}.
In particular, in recessions {we might see distress related factors be effective, while having no predictive power}  in expansions \citep{KoopKorobilis2012}. Motivated by such contexts, we develop a new dynamic shrinkage approach for time series models that exploits time-varying predictive subset sparsity.

We present our approach in the context of  dynamic linear models \citep{WestHarrison1997book2} (or varying coefficient models with a time effect modifier \citep{hastie}) that link a scalar response $y_t$ at time $t$ to a set of $p$ known regressors $\x_t=(x_{t1},\dots,x_{tp})'$ through the relation
\begin{equation}\label{model}
y_t=\x_t'\b_{t}^0+\varepsilon_t, \quad t=1,\dots, T,
\end{equation}
where $\b_{t}^0=(\beta_{t1}^0,\dots,\beta_{tp}^0)'$  is a time-varying vector of regression coefficients and where the innovations $\varepsilon_t$  come  from $\mathcal{N}(0,v_t)$. The observational variances $v_t$ are assumed to be unknown, where the precisions $\nu_t=1/v_t$ arise from the following Markov evolution model  (Chapter 10.8.2 of \cite{WestHarrison1997book2})
\begin{equation}\label{eq:dsvt}
\nu_t= c_t\nu_{t-1}/\delta,\quad\text{where}\quad c_{t}\sim\mathcal{B}(\delta n_{t-1}/2,(1-\delta)n_{t-1}/2)\quad\text{and}\quad n_t=\delta n_{t-1}+1
\end{equation}
with a discount parameter $\delta\in(0,1]$.

{The challenge of estimating the $T\times p$ coefficients in \eqref{model}, with merely $T$ observations, is typically made feasible with a smoothness inducing state-space model that} treats  $\{\b^0_t\}_{t=1}^T$ as realizations from a (vector autoregressive) stochastic process $
\b_{t}^0=f(\b_{t-1}^0)+\bm{e}_t$ with $\bm{e}_t\sim\mathcal{N}(0,\bm{\Lambda}_t)
$
for some $\bm{\Lambda}_t$ and $f(\cdot)$.
{Nevertheless, any regression model with a large number of potential predictors will still be vulnerable to overfitting.  This phenomenon is perhaps even more pronounced  here,}  where the regression coefficients are forced to be dynamically intertwined. The major concern is 
that overfitted coefficient evolutions disguise true underlying dynamics and provide misleading representations  with poor out-of-sample predictive performance.   
For long term forecasts, this concern is exacerbated by the proliferation of the state space.
As the model propagates forward, the non-sparse state innovation accumulates noise, further hindering the out-of-sample forecast ability.
With many potentially irrelevant predictors,  seeking sparsity is a natural remedy against the  loss of statistical efficiency and forecast ability.

We shall assume that $p$ is potentially very large, where possibly only a small portion of predictors is relevant for the outcome at any given time. 
{Besides  time-varying regressor effects,} we  adopt the point of view that the regressors are allowed to enter and leave the model as time progresses, rendering the subset selection problem ultimately dynamic. This anticipation can be reflected by the following sparsity manifestations in the matrix of regression coefficients $\B^0_{p\times T}=[\b^0_1,\dots,\b^0_T]$: (a) {\sl horizontal sparsity}, where each individual time series $\{\beta_{tj}^0\}_{t=1}^T$ (for $j=1,\dots,p$) allows for intermittent zeroes for when $j^{th}$ predictor is not a persisting predictor at all times, (b) {\sl vertical sparsity}, where only a subset of coefficients  $\b_t^0=(\beta_{t1}^0,\dots,\beta_{tp}^0)'$ (for $t=1,\dots,T$)  will be active at the $t^{th}$ snapshot in time.


This problem has been addressed in the literature by multiple authors including, for example,  \cite{groen,belmonte, koop, kalli_griffin, nakajima_west}. 
 We should like to draw particular attention to the latent threshold process of  \cite{nakajima_west}, 
 a related regime switching scheme for either shrinking coefficients exactly to zero or for leaving them alone on their autoregressive path:
\begin{align}
\beta_{tj}&=b_{tj}\gamma_{tj},\quad\text{where}\quad \gamma_{tj}=\mathrm{I}(|b_{tj}|>d_j),\label{nw1}\\
b_{tj}&=\phi_{0j}+\phi_{1j}(b_{t-1j}-\phi_{0j})+e_t,\quad |\phi_{1j}|<1,\quad e_t\iid\mathcal{N}(0,\lambda_1).\label{nw2}
\end{align}
The model assumes a latent autoregressive process $\{b_{tj}\}_{t=1}^T$, giving rise to the actual coefficients $\{\beta_{tj}\}_{t=1}^T$ only when it meanders away from a latent basin around zero $[-d_j,d_j]$. 
This process is reminiscent of a dynamic extension of point-mass mixture priors that exhibit exact zeros \citep{mitchell_beauchamp}. 
 Other related works include shrinkage approaches towards static coefficients in time-varying models  \citep{fruhwirth_wagner, bitto,lopes_mcc_tsay}. We approach the dynamic sparsity problem through the {lens of}  Bayesian variable selection  and develop it  further for varying coefficient models. Namely, we assume the traditional spike-and-slab setup  by assigning  each regression coefficient $\beta_{tj}$ a mixture prior underpinned by a binary latent indicator $\gamma_{tj}$, which flags {the coefficient} as being either active or inert.  While static variable selection with spike-and-slab priors 
  has received a considerable attention (\cite{carlin_chib, Clyde, GM93,GM97,mitchell_beauchamp,RG14}, to name a few), dynamic incarnations are yet to be fully explored \citep{george_sun,fruhwirth_wagner, nakajima_west,groen}. 
  To narrow this gap, this work proposes several new dynamic extensions of popular  spike-and-slab priors.

The main thrust of this work is to introduce {\sl Dynamic Spike-and-Slab ($DSS$) priors}, a new class of time series priors, which induce either smoothness or shrinkage towards zero. 
These processes are formed as mixtures of two (stationary) time series: one for the active  and {another} for the negligible coefficients. The $DSS$ priors pertain closely to the broader framework of mixture autoregressive ($MAR$) processes  with a given lag, where the mixing weights are allowed to depend on time. Despite the reported success of $MAR$ processes (and variants thereof)  for modeling non-linear time series \citep{wong_li, wong_li2, kalli, wood_kohn}, their potential as dynamic sparsity inducing priors has been unexplored. Here, we harness this potential within a dynamic variable selection framework. 
One feature of stationary variants of our $DSS$ priors, {that}  sets it apart from the latent threshold model, is that it yields  benchmark continuous spike-and-slab priors (such as the Spike-and-Slab LASSO of \cite{rockova15}) as its {\sl marginal stationary distribution}. This property guarantees marginal stability  in the selection/shrinkage dynamics and probabilistic coherence. {Non-stationary variants with a  random walk slab process are also possible within our framework.}

{For efficient posterior sampling under the Gaussian spike-and-slab process, we develop {\em Dynamic SSVS}, a new extension of SSVS of \cite{GM93} for time series regression with closed-form forward-smoothing and backward-sampling updates \citep{Schnatter1994}.
To scale our method to big data settings, we then develop a MAP smoother  called {\em Dynamic EMVS}, a time series incarnation of EMVS originally conceived for static regression
 \citep{RG14}. 
Dynamic EMVS is very fast and uses closed-form updates for both the mean and variance parameters. }
We also consider Laplace spike distributions and turn these mixture processes  into dynamic penalty constructs. We formalize the notion of prospective and retrospective shrinkage through doubly adaptive shrinkage terms that pull together past, current, and future information. We introduce asymmetric dynamic thresholding rules --extensions of existing rules for static symmetric regularizers \citep{fanli,antoniadis_fan}-- to characterize the behavior of joint posterior modes for MAP smoothing.  For calculations under the Laplace spike, we implement a one-step-late EM algorithm of \citep{green_OSL}, that capitalizes on fast closed-form one-site updates. 
Our dynamic penalties can be regarded as natural extensions of the spike-and-slab penalty functions  introduced by \cite{rockova15} and further developed by \cite{SSL}. 

{We demonstrate the effectiveness of our introduced $DSS$ priors with a thorough simulation study and a topical macroeconomic application.
Both studies highlight the comparative improvements --in terms of inference, forecasting, and computational time-- of $DSS$ priors over conventional and recent methods in the literature.
In particular, the macroeconomic application, using a large number of economic indicators to forecast inflation and infer on underlying economic structures, serves as a motivating example as to why dynamic sparsity is  effective, and even necessary, in these contexts.
}


 The paper is structured as follows: Section \ref{sec:assp} and Section \ref{sec:other} introduce the $DSS$ processes and their variants. 
 Sections \ref{sec:dynamic_SSVS} and \ref{sec:dynamic_EMVS} introduce Dynamic SSVS and EMVS, respectively. Section \ref{sec:pen} develops the penalized likelihood perspective, introducing the prospective and retrospective shrinkage terms. Section \ref{sec:mode} develops the one-step-late EM algorithm for Spike-and-Slab Fused LASSO MAP smoothing. Section \ref{sec:simul} illustrates the MAP smoothing deployment of $DSS$ on simulated examples and Section \ref{sec:high}  on a macroeconomic dataset. Section \ref{sec:dis} concludes with a discussion.
 

\section{Dynamic Spike-and-Slab  Priors}\label{sec:assp}
In this section, we introduce the class of Dynamic Spike-and-Slab ($DSS$) priors that constitute a coherent extension of benchmark spike-and-slab priors  for dynamic selection/shrinkage.
We will assume that the $p$ time series $\{\beta_{tj}\}_{t=1}^T$ (for $j=1,\dots, p$) in \eqref{model} follow independent and identical $DSS$ priors and thereby we suppress the subscript $j$ (for notational simplicity).

We start with a conditional specification of the $DSS$ prior. Given a binary indicator $\gamma_t\in\{0,1\}$, which encodes the spike/slab membership at time $t$, and a lagged value $\beta_{t-1}$, we assume that $\beta_t$  arises from a mixture of the form
\begin{equation}\label{betas}
\pi(\beta_{t}\C\gamma_{t},\beta_{t-1})=(1-\gamma_{t})\psi_0(\beta_{t}\C\lambda_0)+\gamma_{t}\psi_1\left(\beta_{t}\,|\,\mu_t,\lambda_1\right),
\end{equation}
where
\begin{equation}\label{mu}
\mu_t=\phi_{0}+\phi_{1}(\beta_{t-1}-\phi_0)\quad\text{with}\quad |\phi_1|<1
\end{equation}
and
\begin{equation}\label{gammas}
\P(\gamma_{t}=1\C\beta_{t-1})=\theta_t.
\end{equation}
For Bayesian variable selection, it has been customary to specify a zero-mean spike density $\psi_0(\beta\C\lambda_0)$, such that it concentrates at (or in a narrow vicinity of) zero.
Regarding the slab distribution $\psi_1(\beta_{t}\C\mu_t,\lambda_1)$, we require that it be moderately peaked around its mean $\mu_t$, where the amount of spread is regulated by a concentration parameter $\lambda_1>0$.  
The conditional $DSS$ prior formulation \eqref{betas} generalizes existing {\sl continuous} spike-and-slab priors \citep{GM93,ishrawan_rao_annals,rockova15} in two important ways. First, rather than centering the slab around zero, the $DSS$ prior anchors it around an actual model for the {\sl time-varying} mean  $\mu_t$. The non-central mean is defined as an autoregressive lag polynomial of the first order with  hyper-parameters $(\phi_0,\phi_1)$. While our framework can be extended to higher-order autoregressive polynomials where $\mu_t$ may also depend on values older than $\beta_{t-1}$, we outline our method for the first-order autoregression with $\phi_0=0$ due to its ubiquity  in practice \citep{fused_lasso,WestHarrison1997book2,Prado2010}. {The autoregressive parameter $\phi_1$ will be treated as unknown and estimated.}

It is illuminating to view the conditional prior \eqref{betas}  as a ``multiple shrinkage" prior \citep{george86a,george86b} with two shrinkage targets: (1) zero (for the gravitational pull of the spike), and (2) $\mu_{t}$ (for the gravitational pull of  the slab). It is also worthwhile to emphasize that the spike distribution $\psi_0(\beta_{t}\C\lambda_0)$ {\sl does not depend on $\beta_{t-1}$}, only the slab does.  The $DSS$ formulation thus  induces separation of regression coefficients  into two groups, where only the {\sl active} ones are assumed to walk on an autoregressive path.  

 The second important  generalization is implicitly hidden in the  hierarchical formulation of the mixing weights $\theta_t$ in \eqref{gammas}, which casts them as a smoothly evolving process (as will be seen in Section \ref{sec:weights}  below).
Before turning to this formulation, we discuss several special cases of $DSS$ priors.

\subsection{Spike and Slab Pairings}\label{sec:other}
One possible choice of the spike distribution is the  Laplace density $\psi_0(\beta\C\lambda_0)=\frac{\lambda_0}{2}\e^{-|\beta|\lambda_0}$ (with a relatively large penalty parameter $\lambda_0>0$) due to its ability to threshold  via sparse posterior modes, as will be elaborated on in Section \ref{sec:mode}. {Under the Laplace spike distribution (i.e. conditionally on $\gamma_t=0$) the series $\{\beta_t\}_{t=1}^T$ is  stationary, iid with a marginal density $\psi_0(\beta\C\lambda_0)$.}  
Another natural choice, a Gaussian spike, would impose no new computational challenges due to its conditional conjugacy. However,  additional thresholding would be required to obtain a sparse representation. 

Regarding the slab distribution, we will focus primarily on the Gaussian slab $\psi_1(\beta_{t}\C\mu_t,\lambda_1)$ (with  mean $\mu_t$ and variance $\lambda_1$) due to its ability to smooth over past/future values. 
Under the Gaussian slab distribution, $\{\beta_t\}_{t=1}^T$  follow a stationary Gaussian $AR(1)$ process 
\begin{equation}\label{slab_process}
\beta_{t}=\phi_{0}+\phi_{1}( \beta_{t-1}-\phi_0)+e_{t},\quad |\phi_1|<1,\quad e_{t}\iid \mathcal{N}\left(0,\lambda_1\right),
\end{equation}
whose stationary distribution is characterized by  univariate marginals
\begin{equation}\label{stat_marginal}
\psi_1^{ST}(\beta_{t}\C \lambda_1,\phi_0,\phi_1)\equiv\psi_1\left(\beta_{t}\,\Big|\,\phi_0,\frac{\lambda_1}{1-\phi_1^2}\right);
\end{equation}
a Gaussian density with mean $\phi_0$ and variance $\frac{\lambda_1}{1-\phi_1^2}$.
The availability of this tractable stationary distribution \eqref{stat_marginal} is another appeal of the conditional Gaussian slab distribution. 

{Rather than shrinking  to the {\em vicinity} of the past value, one might like to entertain the possibility of shrinking {\em exactly} to the past value \citep{fused_lasso} to obtain piece-wise constant reconstructions. Such a property  would be appreciated, for instance, in dynamic  sparse  portfolio allocation models to mitigate transaction costs associated with negligible shifts in the portfolio weights \citep{kaoru, brodie, jagan, puelz}. 
This extension has also desirable consequences for $h$-step ahead  forecasting, where $\beta_{j t+h}$ would be prevented from decaying (albeit slowly) over time.
One way of attaining the desired effect would be replacing the Gaussian slab  $\psi_1(\cdot)$ in \eqref{betas} with a Laplace distribution centered at $\mu_t$, i.e.
\begin{equation}\label{conditional_laplace}
\psi_1(\beta_t\C\mu_t,\lambda_1)=\frac{\lambda_1}{2}\e^{-|\beta_t-\mu_t|\lambda_1}
\end{equation}
and by considering $\phi_0=0$ and $\phi_1=1$.
While both the Gaussian and Laplace slab will lead to a conditional posterior mean which shrinks towards the past value, the conditional posterior mode will shrink exactly to the past value for the Laplace (and not the Gaussian). 
This relates  the non-stationary extensions discussed further in Remark \ref{remark:nonstat}.  A similar effect could be achieved with coefficient specific-autoregressive parameters by allowing for $\phi_{j0}\neq 0$ and $\phi_{j1}=0$ for $1\leq j\leq p$ \citep{lopes_mcc_tsay}. 
}

The stationary Laplace conditional construction \eqref{conditional_laplace} (with $|\phi_1|<1$), however, does not  imply the Laplace distribution marginally. The univariate marginals are defined through the characteristic function given in (2.7) of \cite{andel}. The lack of availability of the marginal density in a simple form thwarts the specification of transition weights in our $DSS$ framework. There are, however, avenues for constructing an autoregressive process with Laplace marginals, e.g., through the normal-gamma-autoregressive  ($NGAR$) process by \cite{kalli_griffin}. We define  the following Laplace autoregressive ($LAR$) process as a special case.

\begin{definition}
We define the Laplace autoregressive ($LAR$) process by
$$
\beta_t=\sqrt{\frac{\psi_t}{\psi_{t-1}}}\phi_1\beta_{t-1}+\eta_t,\quad \eta_t\sim\mathcal{N}\left(0,(1-\phi_1^2)\psi_t\right),
$$
where $\{\psi_t\}_{t=1}^T$ follow an exponential autoregressive process specified through $\psi_t\C\kappa_{t-1}\sim \mathrm{Gamma}(1+\kappa_{t-1},\lambda_1^2/[2(1-\rho)])$
 and $\kappa_{t-1}\C\psi_{t-1}\sim\mathrm{Poisson}\left(\frac{\rho}{2(1-\rho)}\lambda_1^2\psi_{t-1}\right)$ with a marginal distribution $Exp(\lambda_1^2/2)$.
 \end{definition}
The $LAR$ process exploits the scale-normal-mixture representation of the Laplace distribution, yielding  Laplace marginals  $\beta_t\sim \widetilde{\psi}_{ST}(\beta_t\C\lambda_1)\equiv Laplace(\lambda_1)$.  This coherence property can be leveraged within our $DSS$ framework as follows. If we replace the slab Gaussian $AR(1)$ process in \eqref{betas} with the $LAR$ process and 
deploy $\widetilde{\psi}_{ST}(\beta_t\C\lambda_1)$ instead of ${\psi}_{ST}(\beta_t\C\lambda_1)$ in  \eqref{weights}, we obtain a Laplace $DSS$ variant with the {\sl Spike-and-Slab LASSO} prior of \cite{rockova15} as its marginal distribution (according to Theorem \ref{thm1}).

It is worth pointing out an alternative autoregressive construction with Laplace marginals proposed  by \cite{andel}, {where  the following $AR(1)$ scheme is considered.}
\begin{equation}\label{andel}
\beta_t=\begin{cases}
&\phi_1\beta_{t-1} \quad \quad\quad\text{with probability}\quad \phi_1^2,\\
&\phi_1\beta_{t-1}+\eta_t\quad \text{with probability}\quad 1-\phi_1^2, \quad\text{where}\quad\eta_t\sim Laplace(\lambda_1).
\end{cases}
\end{equation}
The innovations in \eqref{andel} come from a mixture of a point mass at zero, providing an opportunity to settle at the previous value, and a Laplace distribution. Again, {by} deploying this process in  the slab, we  obtain the {\sl Spike-and-Slab LASSO} marginal distribution \citep{rockova15}.  While MCMC implementations can be obtained for the dynamic Spike-and-Slab LASSO method (e.g. embedding the sampler of \cite{kalli_griffin} within our MCMC approach outlined in Section \ref{sec:dynamic_SSVS}),
the  slab extensions with Laplace marginals are  more challenging for optimization. 
Throughout the rest of the paper, we thereby focus primarily on the Gaussian $AR(1)$ slab process. We will, however, consider both Gaussian and Laplace spike distributions.


\subsection{Evolving Inclusion Probabilities}\label{sec:weights}


A very appealing feature of $DSS$ priors that makes them suitable for dynamic subset selection is the opportunity they afford  for obtaining ``smooth" spike/slab memberships.
Recall that  the binary indicators in \eqref{gammas} determine which of the spike or slab regimes is switched on at time $t$, where $\P(\gamma_t=1\C\beta_{t-1})=\theta_t$. 
   It is desirable that the sequence of slab probabilities $\{\theta_t\}_{t=1}^T$ evolves smoothly over time, allowing for changes  in variable importance as time progresses and, at the same time, avoiding erratic regime switching. 
Because the series $\{\theta_t\}_{t=1}^T$ is a key driver of the sparsity pattern, it is important that  it be (marginally) stable and that it reflects all relevant information, including not only the previous value $\theta_{t-1}$, but {\sl also} the previous value $\beta_{t-1}$. Many possible constructions of $\theta_t$ could be considered. We turn to the implied stationary distribution as a guide for a principled construction of  $\theta_t$.


 For our formulation, we introduce a {\sl marginal} importance weight $0<\Theta<1$, a scalar parameter which controls the overall balance between the spike and the slab distributions. Given $(\Theta,\lambda_0,\lambda_1,\phi_0,\phi_1)$, the conditional inclusion probability $\theta_t$ (or a transition function $\theta(\beta_{t-1})$)  is  defined as
\begin{equation}\label{weights}
\theta_t\equiv\theta(\beta_{t-1})=\frac{\Theta\psi_1^{ST}\left(\beta_{t-1}|\lambda_1,\phi_0,\phi_1\right)}{\Theta\psi_1^{ST}\left(\beta_{t-1}|\lambda_1,\phi_0,\phi_1\right)+(1-\Theta)\psi_0\left(\beta_{t-1}|\lambda_0\right)}.
\end{equation}

The conditional mixing weight $\theta_t$ can be interpreted as the posterior probability of classifying the past coefficient $\beta_{t-1}$ as arriving from the {\sl stationary} slab {distribution} as opposed to the (stationary) spike {distribution}. This interpretation {reveals} how the weights $\{\theta_t\}_{t=1}^T$ proliferate  parsimony throughout the process $\{\beta_t\}_{t=1}^T$. 
Suppose that the past value $|\beta_{t-1}|$ was large, then $\theta(\beta_{t-1})$ will be close to one, signaling that the {\sl current} observation $\beta_t$  is more likely to be in the slab.  The contrary occurs when  $|\beta_{t-1}|$ is small, where $\beta_t$  will be discouraged from the slab because the inclusion weight $\theta(\beta_{t-1})$ will be small (close to zero). {Let us also note that the weights in \eqref{weights}  are {\sl different from} the conditional  probabilities for classifying  $\beta_{t-1}$ as arising from the {\sl conditional slab} in \eqref{betas}.} These weights will be introduced later in Section \ref{sec:pen}.

\smallskip

Now that we have elaborated on all the layers of the hierarchical model, we are ready to formally define the Dynamic Spike-and-Slab Process.

\begin{definition}
Equations \eqref{betas}, \eqref{mu}, \eqref{gammas} and \eqref{weights} define a
Dynamic Spike-and-Slab  Process (DSS) 
with parameters
$(\Theta,\lambda_0,\lambda_1,\phi_{0},\phi_{1})$.
We will write
$$
\{\beta_{t}\}_{t=1}^T\sim DSS(\Theta,\lambda_0,\lambda_1,\phi_{0},\phi_{1}).
$$
\end{definition}
The $DSS$ process relates to the Gaussian mixture of autoregressive ($GMAR$) process of \cite{kalli}, which was conceived as a model for time series data with regime switches. Here, we deploy it as a prior on time-varying regression coefficients within the spike-and-slab framework, allowing for {distributions} other than Gaussian. The $DSS$, being an instance/elaboration of the $GMAR$ process,  inherits elegant marginal characterizations (as will be seen below)

The $DSS$ construction has a strong conceptual appeal  in the sense that its marginal probabilistic structure is fully known. This property is  rarely available with conditionally defined non-Gaussian time series models, where not much is known about the stationary distribution  beyond just the mere fact that it exists. The $DSS$ process, on the other hand, guarantees well behaved stable marginals that can be described through benchmark spike-and-slab priors. The marginal distribution can be used as a prior for the initial vector at time $t=0$, which is typically estimated with the remaining coefficients.  
The following theorem  is an elaboration of  Theorem 1 of \cite{kalli}.

\begin{theorem}\label{thm1}
Assume $\{\beta_{t}\}_{t=1}^T\sim DSS(\Theta,\lambda_0,\lambda_1,\phi_0,\phi_1)$ with $|\phi_1|<1$. Then $\{\beta_{t}\}_{t=1}^T$ has a stationary distribution characterized by the following univariate marginal distributions:
\begin{equation}\label{stat_dist}
\pi^{ST}(\beta|\Theta,\lambda_0,\lambda_1,\phi_0,\phi_1)=\Theta\,\psi_1^{ST}(\beta\C \lambda_1,\phi_0,\phi_1)+(1-\Theta)\psi_0\left(\beta\C\lambda_0\right),
\end{equation}
where $\psi_1^{ST}(\beta\C \lambda_1,\phi_0,\phi_1)$ is the stationary slab distribution \eqref{stat_marginal}.
\end{theorem}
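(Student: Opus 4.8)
The plan is to exploit the fact that, after marginalizing out the latent indicator $\gamma_t$, the sequence $\{\beta_t\}_{t=1}^T$ is a time-homogeneous first-order Markov chain, so that establishing stationarity reduces to verifying that $\pi^{ST}$ in \eqref{stat_dist} is invariant under its one-step transition kernel. Integrating \eqref{betas} against \eqref{gammas} gives the transition density
\begin{equation*}
K(\beta_{t-1},\beta_t)=\bigl(1-\theta(\beta_{t-1})\bigr)\psi_0(\beta_t\C\lambda_0)+\theta(\beta_{t-1})\,\psi_1\bigl(\beta_t\C\mu_t,\lambda_1\bigr),
\end{equation*}
with $\mu_t=\phi_0+\phi_1(\beta_{t-1}-\phi_0)$. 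Once invariance is shown, initializing the chain from $\pi^{ST}$ makes every finite-dimensional law shift-invariant, i.e.\ the process is strictly stationary; I would state this reduction explicitly at the outset.

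The crux --and the only genuinely delicate step-- is to recognize that the weight \eqref{weights} is engineered as a Bayes posterior whose normalizing denominator is precisely $\pi^{ST}(\beta_{t-1})$. Writing $\beta'=\beta_{t-1}$, this yields the decoupling identities
\begin{equation*}
\theta(\beta')\,\pi^{ST}(\beta')=\Theta\,\psi_1^{ST}(\beta'\C\lambda_1,\phi_0,\phi_1),\qquad \bigl(1-\theta(\beta')\bigr)\pi^{ST}(\beta')=(1-\Theta)\psi_0(\beta'\C\lambda_0),
\end{equation*}
which turn the mixture weighting against $\pi^{ST}$ into two clean, un-normalized components. This is where the specific form of \eqref{weights}, as opposed to an arbitrary transition function, does all the work.

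With these identities in hand, I would substitute into the invariance integral $\int K(\beta',\beta)\,\pi^{ST}(\beta')\,\mathrm{d}\beta'$ and split it into a spike and a slab contribution. The spike piece factors as $\psi_0(\beta\C\lambda_0)\int(1-\Theta)\psi_0(\beta'\C\lambda_0)\,\mathrm{d}\beta'=(1-\Theta)\psi_0(\beta\C\lambda_0)$, since $\psi_0$ integrates to one. The slab piece collapses to $\Theta\int \psi_1^{ST}(\beta'\C\lambda_1,\phi_0,\phi_1)\,\psi_1(\beta\C\mu(\beta'),\lambda_1)\,\mathrm{d}\beta'$, which is exactly the statement that the stationary law \eqref{stat_marginal} is invariant under the Gaussian $AR(1)$ transition \eqref{slab_process}; this is the standard Gaussian convolution in which pushing $\mathcal{N}(\phi_0,\lambda_1/(1-\phi_1^2))$ through the kernel $\mathcal{N}(\mu(\beta'),\lambda_1)$ returns the same Gaussian, and evaluates to $\Theta\,\psi_1^{ST}(\beta\C\lambda_1,\phi_0,\phi_1)$. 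Summing the two pieces recovers $\pi^{ST}(\beta)$, completing the verification.

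I expect the routine Gaussian convolution for the slab to be mechanical (it is the content already asserted in \eqref{slab_process}--\eqref{stat_marginal}, so it may simply be cited), and the real substance to lie in the algebraic decoupling of the previous paragraph. For a fully rigorous claim that $\pi^{ST}$ is \emph{the} stationary distribution rather than merely \emph{a} stationary one, I would invoke irreducibility and aperiodicity of the kernel $K$ --or, more directly, cite the uniqueness established in Theorem 1 of \cite{kalli}, of which the present result is an elaboration.
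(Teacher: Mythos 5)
Your proposal is correct and follows essentially the same route as the paper's proof: the paper likewise uses the fact that the definition of $\theta_t$ in \eqref{weights} makes $\theta(\beta')\,\pi^{ST}(\beta')=\Theta\,\psi_1^{ST}(\beta')$ and $(1-\theta(\beta'))\,\pi^{ST}(\beta')=(1-\Theta)\,\psi_0(\beta')$ (this is its step forming the joint $\pi(\beta_1,\beta_0)$ from an initial draw $\beta_0\sim\pi^{ST}$), then integrates out $\beta_0$, with the spike term collapsing trivially and the slab term handled by the Gaussian $AR(1)$ stationarity. Your explicit Markov-kernel framing and the closing remarks on strict stationarity and uniqueness are additional polish beyond what the paper records, but the mathematical content is identical.
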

\begin{proof}
We assume an initial condition $\beta_{t=0}\sim \pi^{ST}(\beta_0|\Theta,\lambda_0,\lambda_1,\phi_0,\phi_1)$. Recall that the conditional density of $\beta_{1}$ given $\beta_{0}$ can be written as
\begin{align}
\pi(\beta_{1}\C\beta_0)&=(1-\theta_{1})\psi_0(\beta_{1}\C\lambda_0)+\theta_{1}\psi_1(\beta_{1}\C\mu_1,\lambda_1).
\end{align}
From the definition of  $\theta_1$ in \eqref{weights}, we can write the joint distribution  as
$$
\pi(\beta_1,\beta_0)=\Theta\,\psi^{ST}_1\left(\beta_0\C\lambda_1,\phi_0,\phi_1\right)\psi_1(\beta_1\C\mu_{1},\lambda_1)+{(1-\Theta)\psi_0\left(\beta_{0}\C\lambda_0\right)\psi_0(\beta_{1}\C\lambda_0)}.
$$
Integrating $\pi(\beta_1,\beta_0)$ with respect to $\beta_0$, we obtain
\begin{align*}
\pi(\beta_1)=\int \pi(\beta_1,\beta_0)d\beta_0=&\Theta \left[\int_{\beta_0}\psi_1\left(\beta_1\C\mu_{1},\lambda_1\right)\psi_1\left(\beta_{0}\,\Big|\, \phi_{0},\frac{\lambda_1}{1-\phi_{1}^2}\right)\d\beta_0\right]+(1-\Theta)\psi_0(\beta_1\C\lambda_0) \\
=&\Theta\, \psi^{ST}_1(\beta_1\C\lambda_1,\phi_0,\phi_1)+(1-\Theta)\psi_0(\beta_1\C\lambda_0).\quad\quad \qedhere
\end{align*}
\end{proof}
 Theorem \ref{thm1} describes the very elegant property of $DSS$ that the univariate marginals of this mixture process are  $\Theta$-weighted mixtures of marginals.
 It also suggests a more general recipe for mixing multiple stationary processes  through the construction of mixing weights \eqref{weights}. 
 
 \begin{remark}
For autoregressive polynomials of higher order $h>1$, the transition weights $\theta_t$ could be defined  in terms of a multivariate stationary distribution evaluated at the last $h$ values of the process, not only the last one. The marginals of such process could be then characterized in terms of a mixture of multivariate Gaussian distributions (Theorem 1 of  \cite{kalli}).
\end{remark}

{

It is tempting to regard $\Theta$ as the marginal proportion of nonzero coefficients. Such an interpretation is a bit misleading since
the sparsity levels are ultimately determined by the $\theta_t$ sequence, which is influenced by the component stationary distributions  $\psi_0(\cdot)$ and $\psi_1^{ST}(\cdot)$, in particular by the amount of their overlap around zero. 
With {\sl continuous} spike-and-slab mixtures considered here,  more caution is needed for calibration \citep{rockova15}. {This issue will be revisited  in Section \ref{sec:pen}}. One can nevertheless regard $\Theta$ as a global sparsity parameter, as we now show.

Unlike with point-mass spike and slab priors \citep{mitchell_beauchamp}, which assign prior mass directly on sparse vectors, our prior is continuous  where exact sparsity can be achieved through posterior modes (using the Laplace spike) or through thresholding. As with other continuous priors \citep{dirichlet_laplace,rockova15} one can quantify the ``effective dimensionality" defined as the number of coefficients which are large enough to be non-negligible. 
The availability of the  stationary distribution is helpful for understanding  the {\em marginal} prior effective dimensionality at each time $t$. For example, with the Laplace slab and the Gaussian spike \eqref{spike_gaussian} the (smaller) intersection points $\pm\delta$  between the stationary spike and slab densities   satisfy (choosing $\phi_0=0$ for simplicity)
$$
\delta=\frac{\lambda_1\lambda_0}{1-\phi_1^2}-\sqrt{\left(\frac{\lambda_1\lambda_0}{1-\phi_1^2}\right)^2-\frac{2\lambda_1}{1-\phi_1^2}\log\left(\frac{1-\Theta}{\Theta}\sqrt{\frac{2\pi\lambda_1}{1-\phi_1^2}}\frac{\lambda_0}{2}\right)}.
$$
Defining  $\gamma(\beta)=\mathbb{I}(|\beta|>\delta)$ as the indicator for whether or not  the coefficient is important, one obtains
\begin{align}
\P\left[\gamma(\beta)=1\right]&=\Theta\left[2\left(1-\Phi\left(\delta,0,\frac{\lambda_1}{1-\phi_1^2}\right)\right)+\frac{1}{\lambda_0}\phi\left(\delta,0,\frac{\lambda_1}{1-\phi_1^2}\right)\right]
\end{align}
where $\Phi(x,\mu,\sigma^2)$ and $\phi(x,\mu,\sigma^2)$ are the cumulative distribution function and the density of the Gaussian distribution with mean $\mu$ and variance $\sigma^2$.
Standard Gaussian tail bounds yield $\P\left[\gamma(\beta)=1\right]<\Theta\left[2+\frac{1}{\lambda_0\sqrt{2\pi\lambda_1/(1-\phi_1^2)}}\right]$, from which one  deduces that the parameter $\Theta$ takes the role of a {\em global sparsity parameter}.
Defining the {\em effective dimensionality} at time $t$ as $|\bg(\b_t)|=\sum_{j=1}^p\gamma(\beta_{tj})$,
it is desirable that  $|\bg(\b_t)|$ accumulates roughly around the true dimensionality $p_t=\sum_{j=1}^p\mathbb{I}[\beta_{tj}^0\neq0]$. 
Using the Chernoff bound for binomial random variables,  one obtains
$$
\P(|\bg(\b_t)|>C\,p_t)\leq\exp(-p_t\, C\log 2)\quad\text{when}\quad  \Theta\leq C_1 p_t/p\quad\text{for}\quad C_1>0
$$
and for any $C>2 C_1\e\left[2+\frac{1}{\lambda_0\sqrt{2\pi\lambda_1/(1-\phi_1^2)}}\right]$. 
This means that as long as the  parameter $\Theta$ does not overshoot the true sparsity proportion, the prior will concentrate on small subsets up to a constant multiple of the true model size. This property  will be ultimately reflected in the posterior.

}
{
\begin{remark}(Random Walk Extensions)\label{remark:nonstat}
The definition of $DSS$ transition weights in \eqref{weights} requires stationary distributions under the two spike and slab regimes.
It is possible to extend our framework to non-stationary  random walk slab process (obtained with $\phi_1=1$)   by modifying  transition weights $\{\theta_{t}\}_{t=1}^T$. 
 Because the series $\{\theta_{t}\}_{t=1}^T$ is a key driver of  sparsity, it is important that it be stable (not too erratic over time) and that it reflects all relevant information, including not only the previous value  $\theta_{t-1}$, but also the previous value $\beta_{t-1}$.  One viable strategy  would be to treat $\{\theta_t\}_{t=1}^T$ as random  and relate $\theta_t$ to the previous value $\theta_{t-1}$ via the conditional beta autoregressive process  \citep{casarin,casarin2} or a marginal beta autoregressive process  \citep{beta_marginal}. However,  the weights may be prone to  transitioning too often between the spike/slab states when treated as random.
For the random walk extensions, one can set $\theta_{t}$  equal to some deterministic sequence  (e.g. as in \cite{Nakajima2010})  or  to a fixed value $\theta_t=\Theta$ for $1\leq t\leq T$. 
\end{remark}}

{
\section{Dynamic SSVS}\label{sec:dynamic_SSVS}
In this section, we develop an MCMC algorithm   for dynamic spike-and-slab priors which can be regarded as the dynamic extension of SSVS of \cite{GM93}. The $DSS$ prior  specification  here departs slightly from our previous setup.
The Laplace spike distribution $\psi_0(\beta|\lambda_0)=\lambda_0/2\e^{-\lambda_0|\beta|}$ yields sparse posterior modes.
Since MCMC   ultimately reports the posterior mean (which is non-sparse even under the Laplace prior), we will  assume the Gaussian spike to capitalize on its direct conditional conjugacy for posterior updating. 
{In particular, we assume the following spike density  for $\lambda_0<<\lambda_1$
\begin{equation}\label{spike_gaussian}
	\psi_0(\beta\C \lambda_0) =\exp\{-\beta^2/(2\lambda_0)\}/\sqrt{2\pi\lambda_0}.
\end{equation}
This yields the following conditional Gaussian distribution
$$
\beta_t\,|\,\gamma_t,\beta_{t-1}\sim\mathcal{N}\left( \gamma_t\mu_t\,,\, \gamma_t\lambda_1+(1-\gamma_t)\lambda_0\right)
$$
and   transition weights $\theta_t$  in \eqref{weights} with the Gaussian stationary spike distribution $\psi_0^{ST}(\beta_{t-1}|\lambda_0)=\psi_0(\beta\C \lambda_0)$.
An extension to the Laplace spike is possible with an additional augmentation step, casting the Laplace distribution as a scale mixture of Gaussians with an exponential mixing distribution \citep{casella}.
The MCMC algorithm has a Gibbs structure, sampling iteratively from the conditional posteriors of the regression coefficients $\beta_{0:T}$,  latent indicators $\bg_{0:T}$ and variances $v_{0:T}$ (\citealt{Schnatter1994}; \citealt[][Sect 15.2]{WestHarrison1997book2}; \citealt[][Sect 4.5]{Prado2010}).

For the stationary $DSS$ prior, we assume that the autoregressive parameter $|\phi_1|<1$ is assigned the following beta prior (as in \citep{kim_etal})
\begin{equation}\label{beta_prior}
\pi(\phi_1)\propto \left(\frac{1+\phi_1}{2}\right)^{a0-1} \left(\frac{1-\phi_1}{2}\right)^{b0-1}\mathbb{I}(|\phi_1|<1)\quad\text{with $a_0=20$ and $b_0=1.5$},
\end{equation}
implying a prior mean of $2a_0/(a_0+b_0)-1=0.86$.  As was pointed out by  \cite{phillips},   a non-informative prior on $\phi_1$ might result in instability.
\cite{zellner_book} in Chapter 7 recommends a subjective beta prior peaked around one (see also  \cite{kastner_var,Nakajima2010}).
Alternatively, \cite{lopes_mcc_tsay} considered a grid of possible values for $\phi_1$ through a discretized  Gaussian prior distribution centered at one with a small variance. 
We will update $\phi_1$ with a Metropolis step, using a uniform proposal density on the interval $[0.8,1]$. While we assume $\phi_0=0$ throughout, one can update $\phi_0$  in a similar vein.


\begin{table}[!t]
\small
\begin{center}
\scalebox{0.8}{
\begin{tabular}{|lll|}
\hline
\multicolumn{3}{|c|}{\cellcolor[HTML]{C0C0C0} \textbf{Algorithm:} \textit{MCMC algorithm for $DSS$ with a Gaussian spike}} \\ \hline\hline
 &     &Initialize $\gamma_{tj}$ and $v_0$ for $0\leq t\leq T$ and $1\leq j\leq p$ and choose $n_0,d_0$.\\
\multicolumn{3}{|c|}{\cellcolor[HTML]{C0C0C0}Sampling Regression Coefficients}                                                  \\
       &  \em Forward filtering  &  For $1\leq t\leq T$\\
       &       & Compute  $\bm a_t=\bm H_t+\bm \Gamma_t(\bm m_{t-1}-\bm H_t)$.                                                                \\
       &       & Compute  $\bm R_t=\bm\Gamma_t\bm C_{t-1}\bm \Gamma_t'+\bm W_t$.                                                                \\
       &       & Compute  $f_t=\bm x_t'\bm a_t$.                                                                \\
       &       & Compute  $q_t=\bm x_t'\bm R_t\bm x_t+v_t$ and $e_t=y_t-f_t$.                                                                \\
       &       & Compute  $\bm m_t=\bm a_t+  \bm A_te_t$ and $\bm C_t=\bm R_t-\bm A_t\bm A_t'q_t$ with $\bm A_t=\bm R_t\bm x_t/q_t$.                  \\
& \em Backward sampling & Simulate $\b_T\sim\mathcal{N}(\bm m_T,\bm C_T)$. \\
& & For $t=T-1,\dots, 0$ \\
& &  Compute $\bm a_T(t-T)=\bm \m_t +\bm B_t[\b_{t+1}- \bm a_{t+1} ]$. \\
& &  Compute $\bm R_T(t-T)=\bm C_t-\bm B_t\bm R_{t+1}\bm B_t'$, where $\bm B_t=\bm C_t\bG_{t+1}'\bm R_{t+1}^{-1}$. \\
& &  Simulate $\b_t\sim\mathcal{N}(\bm a_T(t-T),R_T(t-T))$.\\
\multicolumn{3}{|c|}{\cellcolor[HTML]{C0C0C0}Sampling Indicators}    \\
& & For $j=1,\dots, p$                                              \\
             &           & Compute $\theta_{tj}=\theta(\beta_{t-1j}) $ for $1\leq t\leq T$ from \eqref{weights}.                                             \\ 
& & Compute $p^\star_{tj}=p^\star_{tj}(\beta_{tj})$ for $1\leq t\leq T$  from \eqref{pstar}. \\
& & Compute $p^\star_{0j}=\theta(\beta_{0j})$   from \eqref{weights}. \\
& & Sample $\gamma_{tj}\sim\mathrm{Bernoulli}[p^\star_{tj}(\beta_{tj})]$ for $0\leq t\leq T$. \\
\multicolumn{3}{|c|}{\cellcolor[HTML]{C0C0C0}Sampling Precisions $\nu_t=1/v_t$}    \\
& & For $t=1,\dots, T$                                              \\
&  \em Forward filtering       & Compute $n_t=\delta n_{t-1}+1$ and $d_t=\delta d_{t-1}+r_t^2$, where $r_t=y_t-\bm x'_t\b_t$.\\
&  \em Backward sampling  & Sample $\nu_T\sim G(n_T/2,d_T/2)$.\\
& & For $t=1,\dots, T$    	\\
& & Sample $\eta_{T-t}\sim G[(1-\delta)n_{T-t}/2,d_{T-t}/2]$.\\
&& Set $\phi_{T-t}=\eta_{T-t}+\delta\phi_{T-t+1}.$\\
\hline\hline 
\end{tabular}}
\end{center}
\caption{\small An MCMC algorithm with $DSS$ priors and a Gaussian spike. Note that $G(a,b)$ denotes a gamma distribution with a mean $a/b$.}
\label{EM}
\end{table}

\subsubsection{MCMC  Step 1: Sampling Regression Coefficients $\beta_t$} 
Conditionally on the inclusion indicators $\gamma_{tj}$ for $0\leq t\leq T$ and $1\leq j\leq p$ and variances $v_t$,  we  have a  conjugate dynamic linear model  $y_t=\x_t'\b_t+\varepsilon_t, \quad \varepsilon_t\sim \mathcal{N}(0,v_t)$ with
\begin{align*}
	\b_t&=\H_t+\bG_t(\b_{t-1}-\H_t)+\bm{e}_t, \quad \bm{e}_t\sim \mathcal{N}(0, \W_t),
\end{align*}
 where
 \begin{align}
 \W_t&=\textrm{diag}\left\{\gamma_{tj}\lambda_{1}+(1-\gamma_{tj})\lambda_{0}\right\}_{j=1}^p,\label{variance_state}\\
 \bG_t &= \textrm{diag}\left\{\gamma_{tj}\phi_{1}\right\}_{j=1}^p,\\
 \H_t &= \phi_0\bg_t'.
 \end{align}
 We note that the initial vector at time zero $\b_{0}$ is subject to estimation as well. As its prior, we use the stationary distribution $\b_{0}\sim\mathcal{N}(\m_0 , \bC_0),$
 where 
 \begin{equation}\label{muC}
 \m_0=\phi_0\bg_0\quad\text{ and} \quad \bC_0=\mathrm{diag}\{\gamma_{0j}\lambda_1/(1-\phi_1^2)+(1-\gamma_{0j})\lambda_0\}_{j=1}^p
 \end{equation}
  are obtained from \eqref{initial} with a Gaussian spike.

For simplicity of notation we will denote  with $\b_{t:T}=[\b_{t},\dots,\b_T]$ the collection of all $p$ coefficient series from time $t<T$ to time $T$ (similarly $\x_{t:T}$ and $\bg_{t:T}$ for the observations and latent inclusion indicators). Writing   $\y_{1:T}=(y_1,\dots, y_T)'$, 
the conditional posterior $\pi(\b_{0:T}|\y_{\seq1T},\bg_{0:T},\bm v_{0:T})$ can be simulated  from using the standard  FFBS 
algorithm \citep{Schnatter1994}.  In detail, the calculations  proceed as follows. 

\begin{itemize} 
\item[]{\em\bf Forward filtering:} As described in Section 4.3.1 of \cite{Prado2010}, for each $t>0$ we perform the following steps.
\begin{itemize} 
	\item[1.]{\em Time $t-1$ posterior as a prior for $\b_t$:} Based on the information up to $t-1$, we obtain the following Gaussian prior for the state vector $\b_t$:
		\begin{align*}
		\b_{t}\,|\, \bm y_{\seq1{t-1}},\bg_{1:t},\bm v_{\seq1{t-1}}&\sim \mathcal{N}(\a_{t}, \bR_t), 
		\end{align*}
		where $\a_{t}=\H_t+\bG_t(\m_{t-1}-\H_t)$ and $\bR_{t}=\bG_t\bm C_{t-1}\bG_t'+\W_t$. 
		 
	\item[3.]{\em  One-step-ahead predictive distribution:} At time $t-1$ compute
		$$
		y_t \,|\, \y_{\seq1{t-1}},  \bm v_{1:t} \sim \mathcal{N}(f_t,q_t),
		$$ 
		where
		$$
		f_t=\x_t'\a_{t}\quad \textrm{and}\quad q_t=\x_t'\bR_t\x_t+v_t.
		$$
		Observing $y_t$ produces the forecast error $e_t=y_t-f_t$.
	\item[4.]{\em  Posterior for  $\b_t$:} Given  current information up to time $t$, we have  
		\begin{align*}
		\b_{t}\,|\, \y_{\seq1{t}},\bg_{\seq1{t}},\bm v_{1:t}&\sim \mathcal{N}(\m_{t}, \bC_{t}),
		\end{align*}
		 with mean and covariance   $\m_t=\a_{t}+\A_t e_t$ and $ 	\bC_{t}=\bR_t-\bm A_t\bm A_t'q_t,$
		where $\A_t=\bR_t\x_t/q_t$. 
\end{itemize} 
	\item[]{\em\bf Backward sampling:}  Having run the forward filtering analysis up to time $T,$ one then extrapolates into the past with backward sampling. 
	This proceeds as follows.
	 \begin{itemize}
	 \item[a.] At time $T$, simulate $\b_T$ from the normal posterior 
	 $$
	 \b_T\sim\mathcal{N}(\m_T,\bC_T).
	 $$
	 \item[b.]{\em Recursively sample backwards in time}.  For any $t\leq T$,   sample  
	 	$\b_t$ from the conditional normal posterior 
	 			
				$$
				 \b_{t}\,|\, \b_{t+1:T}, \y_{\seq1T},\bg_{\seq1T},\bm v_{1:T}\sim \mathcal{N}\left(\bm a_T(t-T),\bm R_{T}(t-T)\right),
				$$ 
				where  
				\begin{align}
				\bm a_T(t-T)&=\bm \m_t +\bm B_t[\b_{t+1}- \bm a_{t+1} ],\\
				\bm R_T(t-T)&=\bm C_t-\bm B_t\bm R_{t+1}\bm B_t',
				\end{align}
				where $\bm B_t=\bm C_t\bG_{t+1}'\bm R_{t+1}^{-1}$. 				
	 	 \end{itemize} 
\end{itemize} 
\subsubsection{MCMC   Step 2:   Sampling the Inclusion Indicators $\bg_{0:T}$}

 Conditionally on the most recently sampled values of the DLM parameters $\b_{0:T},$ the  MCMC calculation proceeds with sampling the 
inclusion indicators $\bg_{0:T}$ from their full conditional posterior. This amounts to   sampling   each entry $\gamma_{tj}$ individually,  making distributed implementations possible, if needed.
For each $1\leq t\leq T$ and $1\leq j\leq p$ we perform the following steps.
\begin{itemize} 
	\item[1.]{\em Compute the mixing weight $\theta_{tj}$:} 
	
	We first recall the stationary spike and slab distributions
		\begin{equation*}
			\psi_0^{ST}(\beta \C \lambda_0)\equiv \mathcal{N}( 0,{\lambda_0})\quad\text{and}\quad
				\psi_1^{ST}(\beta \C \lambda_1,\phi_0,\phi_1)\equiv \mathcal{N}\left( \phi_0,\frac{\lambda_1}{1-\phi_1^2}\right).
		\end{equation*}
		Given $\Theta$, we then compute the mixing weight $\theta_{tj}$ as
		\begin{equation}\label{theta_for_MCMC}
			\theta_{tj}\equiv\theta(\beta_{t-1,j})=\frac{\Theta\psi_1^{ST}\left(\beta_{t-1,j}|\lambda_1,\phi_0,\phi_1\right)}{\Theta\psi_1^{ST}\left(\beta_{t-1,j}|\lambda_1,\phi_0,\phi_1\right)+(1-\Theta)\psi_0^{ST}\left(\beta_{t-1,j}|\lambda_0\right)}.
		\end{equation}

	\item[2.]{\em Compute the conditional inclusion probability $p^\star_{tj}(\beta_{tj})$:}

        First, we update $\mu_{tj}$ from
	$	\mu_{tj}=\phi_{0}+\phi_{1}(\beta_{t-1,j}-\phi_0)$
	 and   recall the conditional spike and slab distributions
	\begin{equation*}
			\psi_0(\beta_{tj}\C  \lambda_0)\equiv \mathcal{N}(0,{\lambda_0})\quad\text{and}\quad
			\psi_1(\beta_{tj}\C \mu_t, \lambda_1)\equiv \mathcal{N}\left(\mu_{tj},\lambda_1\right).
		\end{equation*}
	
	We then compute $p^\star_{tj}(\beta_{tj})$  as
	\begin{equation}\label{pstar_first}
		p^\star_{tj}(\beta_{tj})\equiv\frac{\theta_{tj}  \psi_1(\beta_{tj}\C\mu_{tj},\lambda_1)}{\theta_{tj}  \psi_1(\beta_{tj}\C\mu_{tj},\lambda_1)+(1-\theta_{tj}) \psi_0(\beta_{tj}\C\lambda_0)}.
	\end{equation}
	
	\item[3.]{\em  Sample the indicator $\gamma_{tj}$:}
	
		Given $p^\star_{tj}(\beta_{tj})$, we  sample $\gamma_{tj}$ from 
		\begin{equation}\label{eq:sample_gamma}
		\gamma_{tj}\sim \mathrm{Bernoulli}\, [p^\star_{tj}(\beta_{tj})].
		\end{equation}
\end{itemize} 
Finally, to update the indicators at time $t=0$, we sample
$
\gamma_{0j}\sim \mathrm{Bernoulli}[ \theta(\beta_{0j})].
$

{
\subsubsection{MCMC  Step 3: Sampling Observation Variances $v_t$} 
Recall that the observation precisions $\nu_t=1/v_t$ follow the discounted stochastic volatility process \eqref{eq:dsvt}. Proceeding similarly as in Sections 10.8.2 and 10.8.4 of \cite{WestHarrison1997book2}, one can forward-filter and backward-sample from the conditional distributions. Our calculations here are slightly different because we are {\em conditioning} on $\b_{0:T}$ rather than margining them out (as in, e.g.,  Theorem 4.3 of \cite{WestHarrison1997book2}). Starting with a gamma prior $\phi_0\sim G(n_0/2,d_0/2)$, at time $t>0$ the prior distribution
$$
\nu_t\,\C\, \y_{1:(t-1)},\b_{1:(t-1)}\sim G(\delta n_{t-1}/2,\delta d_{t-1}/2)
$$
updates $y_t$ into the posterior distribution 
\begin{equation}\label{eq:recurrent_variance}
\nu_t\,\C\, \y_{1:t},\b_{1:t}\sim G( n_{t}/2,  d_{t}/2)\quad\text{with}\quad n_t=\delta n_{t-1}+1\quad\text{and}\quad d_t=\delta d_{t-1}+r_t^2
\end{equation}
where $r_t=y_t-\x_t'\b_t$. Note that  in this parametrization, the mean equals
\begin{equation}\label{eq:mean}
1/S_{t}\equiv \E [\nu_t\C  \y_{1:t},\b_{1:t}] =n_{t}/d_{t}.
\end{equation}
These  forward  equations are followed by backward sampling. First, one samples $\nu_T\,\C\,  \y_{1:T},\b_{1:T} \sim G( n_{T}/2,  d_{T}/2)$.
Using the recurrent relations (page 364 of \cite{WestHarrison1997book2})
\begin{equation}\label{eq:back_sample}
\phi_{T-t}=\eta_{T-t}+\delta \phi_{T-t+1}\quad\text{where}\quad \eta_{T-t}\sim G[(1-\delta)n_{T-t}/2, d_{T-t}/2]
\end{equation}
one then draws $\eta_{T-t}$ to obtain a sample $\phi_{T-t}$ from \eqref{eq:back_sample}.
}

In addition to the MCMC algorithm, we also derive MAP smoothers using a penalized likelihood approach.

\section{Dynamic EMVS}\label{sec:dynamic_EMVS}
Unlike previous developments \citep{nakajima_west,kalli_griffin}, this paper also views  Bayesian dynamic shrinkage  through the lens of optimization. Rather than distilling posterior samples  to learn about $\b_{1:T}=[\b_1,\dots,\b_T]$,  we focus  on finding the MAP trajectory $\wh{\b}_{1:T}=\arg\max\pi(\b_{1:T}\C\y_{1:T})$.  MAP sequence estimation problems  (for non-linear non-Gaussian dynamic models)  were addressed previously with,  e.g., Viterbi-style algorithms \citep{godsill}. 
Our optimization strategy is conceptually very different and builds on the  EMVS procedure of \cite{RG14}.  First, we focus on the Gaussian spike prior variant \eqref{spike_gaussian} which allows for very fast block updates in closed form.

{A (local) posterior mode $\wh{\b}_{0:T}$ can be obtained  indirectly through an EM algorithm, treating  $\bG$ and precision parameters $\nu_t=1/v_t$  as the missing data. 
The initial vector $\b_{t=0}=(\beta_{01},\dots,\beta_{0p})'$ at time $t=0$  will be estimated together with all the remaining coefficients $\b_{1:T}$. We assume that $\b_0$ comes from the  stationary distribution  
described in Theorem \ref{thm1},
\begin{equation}\label{initial}
\pi(\b_0|\bg_0)=\prod_{j=1}^p\left[\gamma_{0j}\psi_1^{ST}(\beta_{0j}\C\lambda_1,\phi_0,\phi_1)+(1-\gamma_{0j})\psi_0(\beta_{0j}\C\lambda_0)\right],
\end{equation}
where $\bg_0=(\gamma_{01},\dots,\gamma_{0p})'$  are independent binary indicators with $\P[\gamma_{0j}=1\C\Theta]=\Theta$ for $1\leq j\leq p$. 
Knowing the stationary distribution is thereby useful   for specifying the initial conditions. The goal is obtaining the   mode $\wh{\b}_{0:T} $ of the functional 
$\pi(\b_{0:T}|\y_{1:T})$. To this end, we proceed iteratively by augmenting this objective function with the missing data $\bg_{0:T}$, as prescribed by \cite{RG14},  and then maximizing w.r.t. $\b_{0:T}$. 
An important  observation, that facilitates the derivation of the algorithm, is that the prior distribution $\pi(\b_{0:T},\bg_{0:T}, \v_{1:T})$ can be factorized into the following products
$$
\pi(\b_{0:T},\bg_{0:T}, \v_{1:T})=\pi(\b_0|\bg_0)\pi(\bg_0)\prod_{t=1}^T\left[\pi(v_t\C v_{t-1})\prod_{j=1}^p\pi(\beta_{tj}|\gamma_{tj},\beta_{t-1j})\pi(\gamma_{tj}|\beta_{t-1j})\right],
$$
where $\pi(\beta_{tj}|\gamma_{tj},\beta_{t-1j})$ and $\pi(\gamma_{tj}|\beta_{t-1j})$ are defined in \eqref{betas} and \eqref{gammas}, respectively.
For simplicity, we will outline the procedure assuming $\phi_0=0$ and thereby $\mu_{tj}=\phi_1\beta_{t-1j}$. Then, we can write
\begin{align}\label{Q1}
&\log\pi(\b_{0:T},\bg_{0:T}, \v_{1:T}|\bm y_{1:T})=C(\bm v_{1:T},\phi_1) + \sum_{t=1}^T\sum_{j=1}^p\left[\gamma_{tj}\log\theta_{tj}+(1-\gamma_{tj})\log(1-\theta_{tj})\right]\\
&\quad-\sum_{t=1}^T\left\{\frac{(y_t-\x_t'\b_t)^2}{2v_t}+\sum_{j=1}^p\left[ \gamma_{tj}\frac{(\beta_{tj}-\phi_1\beta_{t-1j})^2}{2\lambda_1} +  (1-\gamma_{tj})\frac{\beta_{tj}^2}{2\lambda_0}\right]+\log\pi(v_t\C v_{t-1})\right\}\notag\\
&\quad-\sum_{j=1}^p\left[\gamma_{0j}\frac{\beta_{0j}^2(1-\phi_1^2)}{2\lambda_1}+(1-\gamma_{0j})\frac{\beta_{0j}^2}{2\lambda_0}-\gamma_{0j}\log\Theta-(1-\gamma_{0j})\log(1-\Theta)\right]\notag.
\end{align}
We will  endow  the parameters  $\b_{0:T}$ with a superscript $m$ to designate their most recent values at the $m^{th}$ iteration. In the E-step, we compute the conditional expectation 
of \eqref{Q1} with respect to the conditional distribution of $[\bg_{0:T},\bm \nu_{1:T}]$, given $\b_{0:T}^{(m)}$ and $\bm y_{1:T}$. This boils down to computing conditional inclusion probabilities
 $
 p^\star_{tj}=\P(\gamma_{tj}=1|\beta_{tj}^{(m)},\beta_{t-1j}^{(m)},\theta_{tj})
 $ 
 from \eqref{pstar_first}, when $t>0$, and $p^\star_{0j}\equiv\theta_{1j}\equiv\theta(\beta_{0j})$ from \eqref{weights},  and replacing all the $\gamma_{tj}$'s in \eqref{Q1} with $p^\star_{tj}$'s. 
 Additionally, one replaces $1/v_t$ with the conditional expectation $\E[\nu_t\C \b_{0:T}, \bm y_{1:T}]$ available in closed from the recurrent relations (\cite{WestHarrison1997book2} on page 364)
 $$
 \E[\nu_t\C\b_{0:T}^{(m)}, \bm y_{1:T}]=(1-\delta)n_{t}/d_t+\delta \E[\nu_{t+1}\C \b_{0:T}^{(m)}, \bm y_{1:T}]\quad\text{for}\quad 1\leq t<T,
 $$
 where $n_t$ and $d_t$ are obtained from \eqref{eq:recurrent_variance} and where $ \E[\nu_T\C \b_{0:T}^{(m)}, \bm y_{1:T}]=n_T/d_T$.
In the M-step, we set out to maximize $\E_{\bg_{0:T},\bm \nu_{1:T}|\cdot}\log\pi(\b_{0:T},\bg_{0:T},\bm v_{1:T}|\y_{1:T})$ w.r.t. $\b_{0:T}$.  This is achieved in a block-wise fashion, where we update $\b_t$ given the most recent updates of $\b_{t-1}$ and $\b_{t+1}$. Given the conjugacy of the Gaussian distribution, these updates have closed forms (similarly as in the EMVS procedure of \cite{RG14}). We summarize the steps in the Table \ref{EM}. 
It is worth pointing out that the matrix inversion $\Sigma_t^{-1}$ in the step M1 in Table \ref{EM} can be avoided  using the fact that $\x_t\x_t'$ is a rank-one matrix. 
Denote with $D_t=\mathrm{diag}\{\frac{p_{tj}^\star}{\lambda_1}+\frac{1-p_{tj}^\star}{\lambda_0}+\mathbb{I}({t<T})\frac{\phi_1^2 p^\star_{t+1j}}{\lambda_1} \}_{j=1}^p$. Then the Woodburry-Shermann matrix inversion lemma yields
$$
\Sigma_t^{-1}=D_t^{-1}-\nu_t^\star D_t^{-1}\frac{\x_t\x_t'}{1+\nu_t^\star \x_t'D_t^{-1}\x_t}D_t^{-1}.
$$
Due to this trick, the computation of the M-step is extremely fast.  Since each update $\b_t$ is conditional on all $\beta_j,j\neq t$, we are performing conditional maximization in the spirit of Expectation-Conditional-Maximization \cite{meng_ECM}.
In order to speed up convergence, we can afford to loop over these simple updates inside each M-step. We found loops of size $100$ to perform well.

Additionally, we can estimate the autoregressive parameter $\phi_1$ under (a discretized version) of the prior \eqref{beta_prior} by updating $\phi_1$ at each iteration with the value that maximizes 
the expected log-complete posterior
$\E_{\bg_{0:T},\bm \nu_{1:T}|\cdot}\log\pi(\b_{0:T},\bg_{0:T},\bm v_{1:T}|\y_{1:T})$. One can compute this criterion for a grid of values $\phi_1$ and pick the one value that maximizes the expected log-complete posterior. Estimation of $\phi_0$ can be incorporated in a similar vein.

In the next section, we develop a penalized likelihood approach to MAP smoothing using a Laplace spike prior.
}

\begin{table}[!t]
\small
\begin{center}
\scalebox{0.8}{
\begin{tabular}{|lll|}
\hline
\multicolumn{3}{|c|}{\cellcolor[HTML]{C0C0C0} \textbf{Algorithm:} \textit{Dynamic EMVS algorithm}} \\ \hline\hline
 &     &Initialize $\beta_{tj}$ for $t=0,\dots, T$ and $j=1,\dots, p$.\\
\multicolumn{3}{|c|}{\cellcolor[HTML]{C0C0C0}E-Step}    \\
& & For $j=1,\dots, p$                                              \\
E1:             & Compute mixing weights                  & Compute $\theta_{tj}=\theta(\beta_{t-1j}) $ for $1\leq t\leq T$ from \eqref{weights}.                                             \\ 
& & Compute $p^\star_{tj}=p^\star_{tj}(\beta_{tj)}$ for $1\leq t\leq T$  from \eqref{pstar}. \\
& & Compute $p^\star_{0j}=\theta(\beta_{0j})$   from \eqref{weights}. \\
E1:             & Compute precisions                  & For $t=1,\dots, T$                                             \\ 
       &    & Compute $n_t=\delta n_{t-1}+1$ and $d_t=\delta d_{t-1}+r_t^2$, where $r_t=y_t-\bm x'_t\b_t$.\\
       &    & Set  $\nu_T^\star=n_T/d_T$.\\
& & For $t=T-1,\dots, 1$ set $\nu_{t}^\star= (1-\delta)n_t/d_t+\delta\nu_{t+1}^\star.$\\
\multicolumn{3}{|c|}{\cellcolor[HTML]{C0C0C0}M-Step: Gaussian spike version}                                                  \\
M1:  & Compute regression coefficients   & For $t=1,\dots,T$\\
	&     & Compute $\Sigma_t=\nu_t^\star\x_t\x_t' +\mathrm{diag}\{\frac{p_{tj}^\star}{\lambda_1}+\frac{1-p_{tj}^\star}{\lambda_0}+\mathbb{I}({t<T})\frac{\phi_1^2 p^\star_{t+1j}}{\lambda_1} \}_{j=1}^p$\\
	&     & Compute $\bm \mu_t=\nu_t^\star y_t \x_t+\frac{\phi_1}{\lambda_1}\b_{t-1}\odot \bm p^\star_{t}+\mathbb{I}({t<T})\frac{\phi_1}{\lambda_1}\b_{t+1} \odot \bm p^\star_{t+1}$\\
	&     & Update $\b_t =\Sigma_t^{-1}\bm \mu_t$\\
	&     & Compute $\Sigma_0=\mathrm{diag}\{\frac{(1-\phi_1^2)p_{0j}^\star}{\lambda_1}+\frac{1-p_{0j}^\star}{\lambda_0}+\frac{\phi_1^2 p^\star_{1j}}{\lambda_1} \}_{j=1}^p$\\
	&     & Update $\b_0 =\frac{\phi_1}{\lambda_1}\Sigma_0^{-1}\b_{1} \odot \bm p^\star_{1}  $\\
\multicolumn{3}{|c|}{\cellcolor[HTML]{C0C0C0}M-Step: Laplace spike version}                                                  \\
       &                                    & For $j=1,\dots, p$  and $t=1,\dots, T$                                             \\
M2: &  Update  regression coefficients  & Compute $\beta_{0j}$  using \eqref{beta0_update}. \\
       &                                    & Compute $\beta_{tj}$  using \eqref{beta_update}.                                                                \\
\hline\hline
\end{tabular}}
\end{center}
\caption{\small  Dynamic EMVS algorithm for both the Gaussian spike \eqref{spike_gaussian} and the Laplace spike. The notation $\bm a\odot \bm b$ denotes elementwise vector multiplication. 
}
\label{EM}
\end{table}

}

}

\section{Dynamic Spike-and-Slab Penalty}\label{sec:pen}
Spike-and-slab priors give rise to self-adaptive penalty functions for MAP estimation, as detailed in \cite{rockova15} and \cite{SSL}. Here, we introduce  elaborations for dynamic shrinkage implied by the $DSS$ priors. 

\begin{definition}
For a given set of parameters $(\Theta,\lambda_0,\lambda_1,\phi_0,\phi_1)$, we define a {\sl prospective} penalty function implied by  \eqref{betas} and \eqref{weights}
as follows:
\begin{equation}\label{penalty1}
pen(\beta\C\beta_{t-1})=\log\left[\left(1-\theta_t\right)\psi_0(\beta\C\lambda_0)+\theta_t\,\psi_1(\beta\C\mu_t,\lambda_1)\right].
\end{equation}
Similarly, we define a {\sl retrospective}  penalty $pen(\beta_{t+1}\C\beta)$  as a function of the second argument $\beta$ in \eqref{penalty1}.
The Dynamic Spike-and-Slab  (DSS)  penalty is then defined as 
\begin{equation}\label{penalty_total}
Pen(\beta\C\beta_{t-1},\beta_{t+1})=pen(\beta\C\beta_{t-1})+pen(\beta_{t+1}\C\beta)+C,
\end{equation}
where $C\equiv-Pen(0\C\beta_{t-1},\beta_{t+1})$ is a norming constant.
\end{definition}
\begin{remark}
Note that the dependence on the previous value $\beta_{t-1}$ in $pen(\beta\C\beta_{t-1})$ is hidden in $\theta_t$ and $\mu_t$. Throughout the paper, we will  write $\partial \theta_t/\partial\beta_{t-1}$ and $\partial\mu_t/\partial\beta_{t-1}$  without reminding ourselves of this implicit relationship.
\end{remark}

\begin{figure}[!t]
     \subfigure[$pen(\beta|\beta_{t-1}=1.5)$]{
     \begin{minipage}[t]{0.32\textwidth}
       \hskip-1pc  \scalebox{0.27}{\includegraphics{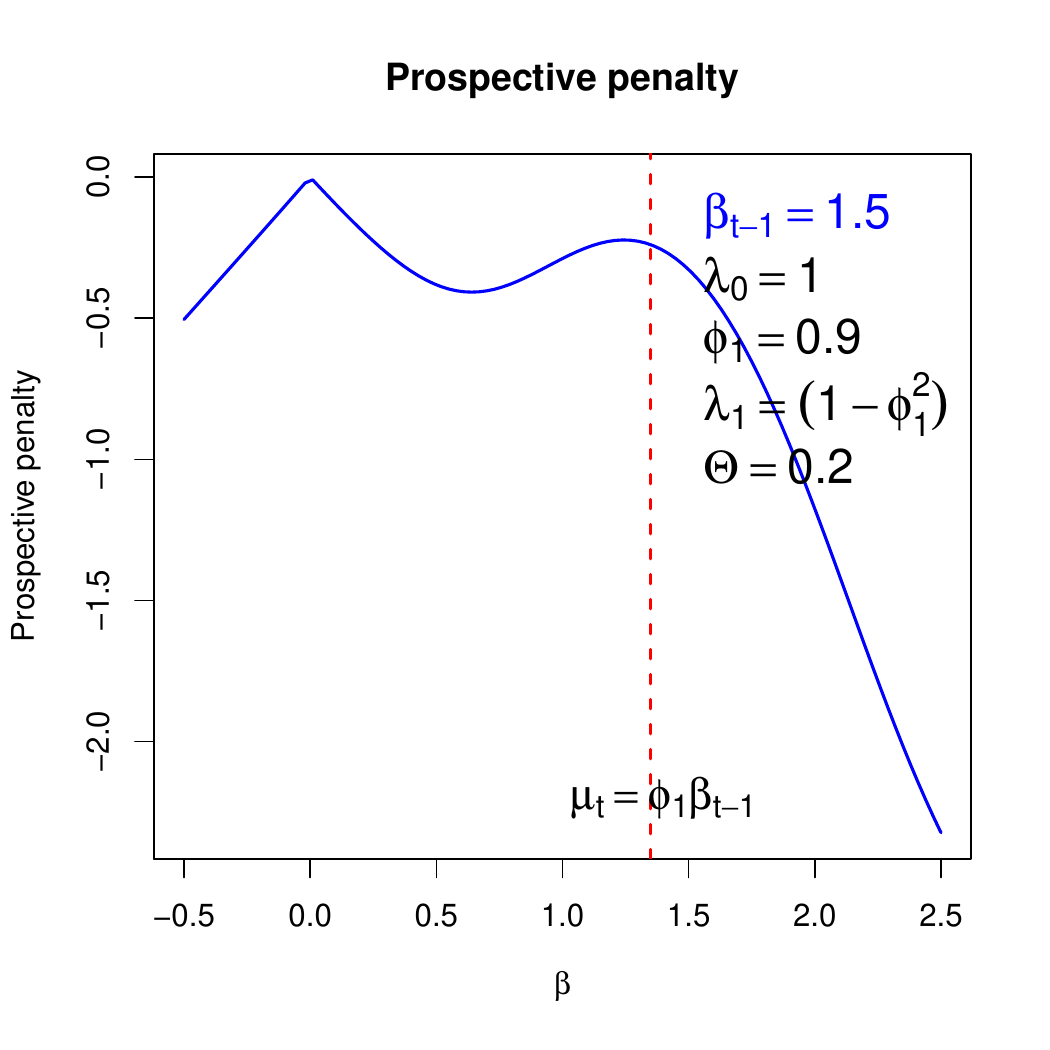}}  \label{fig1a}
    \end{minipage}}
     \subfigure[$pen(\beta|\beta_{t-1}=1.5)$]{
    \begin{minipage}[t]{0.32\textwidth}
    \hskip-1pc  \scalebox{0.27}{\includegraphics{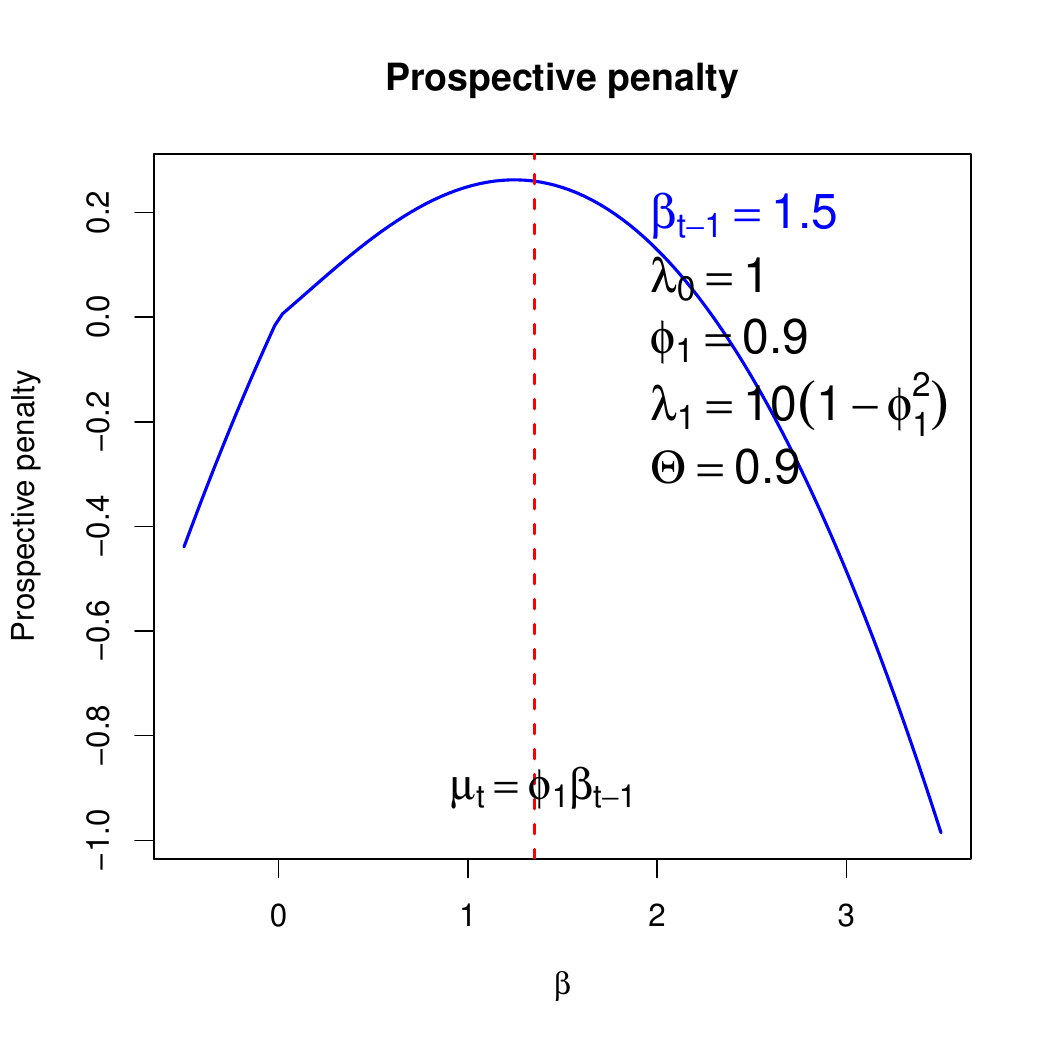}}     \label{fig1b}
    \end{minipage}}
     \subfigure[$pen(\beta|\beta_{t-1}=0.5)$]{
    \begin{minipage}[t]{0.32\textwidth}
     \hskip-1pc \scalebox{0.27}{\includegraphics{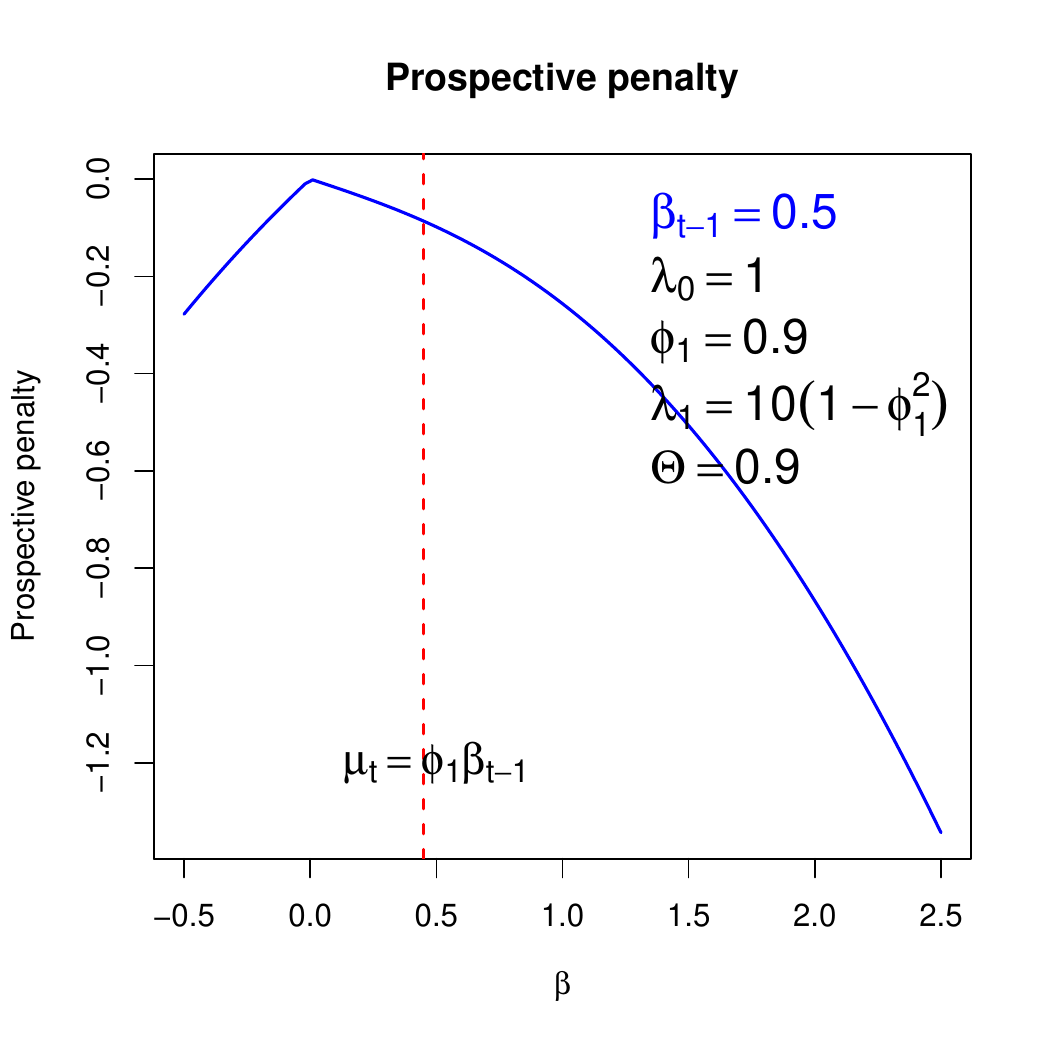}}   \label{fig1c}
    \end{minipage}}
    \caption{Plots of the prospective penalty function under the Laplace spike.}    \label{fig1}
\end{figure}

As an example, we consider the Laplace spike prior $\psi_0(\beta\C\lambda_0)=\lambda_0/2\e^{-\lambda_0|\beta|}$.
Figure \ref{fig1} portrays the prospective penalty for two choices of $\beta_{t-1}$ and two sets of tuning parameters $\phi_1,\lambda_1,\lambda_0$ and $\Theta$ (assuming $\phi_0=0$). 
Because the conditional transfer equation \eqref{betas}  is a mixture, $pen(\beta\C\beta_{t-1})$ is apt to be multimodal.  Figure \ref{fig1a} shows an obvious peak at zero (due to the Laplace spike), but also  a peak around $\mu_t=0.9\times\beta_{t-1}$,  prioritizing values in the close vicinity of the previous value (due to the non-central slab).  From an implementation viewpoint, however, it is more desirable that the penalty be uni-modal, reflecting the size of the previous coefficient without ambiguity by suppressing one of the peaks. Such behavior is illustrated in Figure \ref{fig1b} and Figure \ref{fig1c}, where the  penalty flexibly adapts to $|\beta_{t-1}|$ by promoting  either zero {\sl or} a value close to  $\beta_{t-1}$. This effect is achieved with a relatively large stationary slab variance, such as $\lambda_1/(1-\phi_1^2)=10$, a mild Laplace peak $\lambda_0=1$ and the marginal importance weight  $\Theta=0.9$. Smaller values  $\Theta$ would provide an overwhelming support for the zero mode.  The parameter $\Theta$,  thus should not be regarded as a proportion of active coefficients (as is customary with point-mass mixtures), but rather an interpretation-free tuning parameter. 

Figure \ref{fig1} plots $pen(\beta\C\beta_{t-1})$ prospectively as a function of $\beta$, given the previous value $\beta_{t-1}$. It is also illuminating to plot $pen(\beta_{t+1}\C\beta)$ retrospectively as a function of $\beta$, given the future value $\beta_{t+1}$. Two such retrospective penalty plots are provided in Figure \ref{fig11a} and Figure \ref{fig11b}. When the future value is relatively large ($\beta_{t+1}=1.5$ in Figure \ref{fig11b}),  the penalty $pen(\beta_{t+1}\C\beta)$ has a peak near $\beta_{t+1}$, signaling that the value $\beta_{t}$ must be large too. When the future value is small ($\beta_{t+1}=0$ in Figure \ref{fig11a}), the penalty has a peak at zero signaling that the current value $\beta_{t}$ must have been small. Again, this balance is achieved with a relatively large stationary slab variance and a large  $\Theta$.  Note that under the Gaussian spike   \eqref{spike_gaussian}, the penalty functions will be differentiable at zero.

\begin{figure}[!t]
     \subfigure[$pen(\beta|\beta_{t+1}=0)$]{
     \begin{minipage}[t]{0.32\textwidth}
       \hskip-1pc  \scalebox{0.27}{\includegraphics{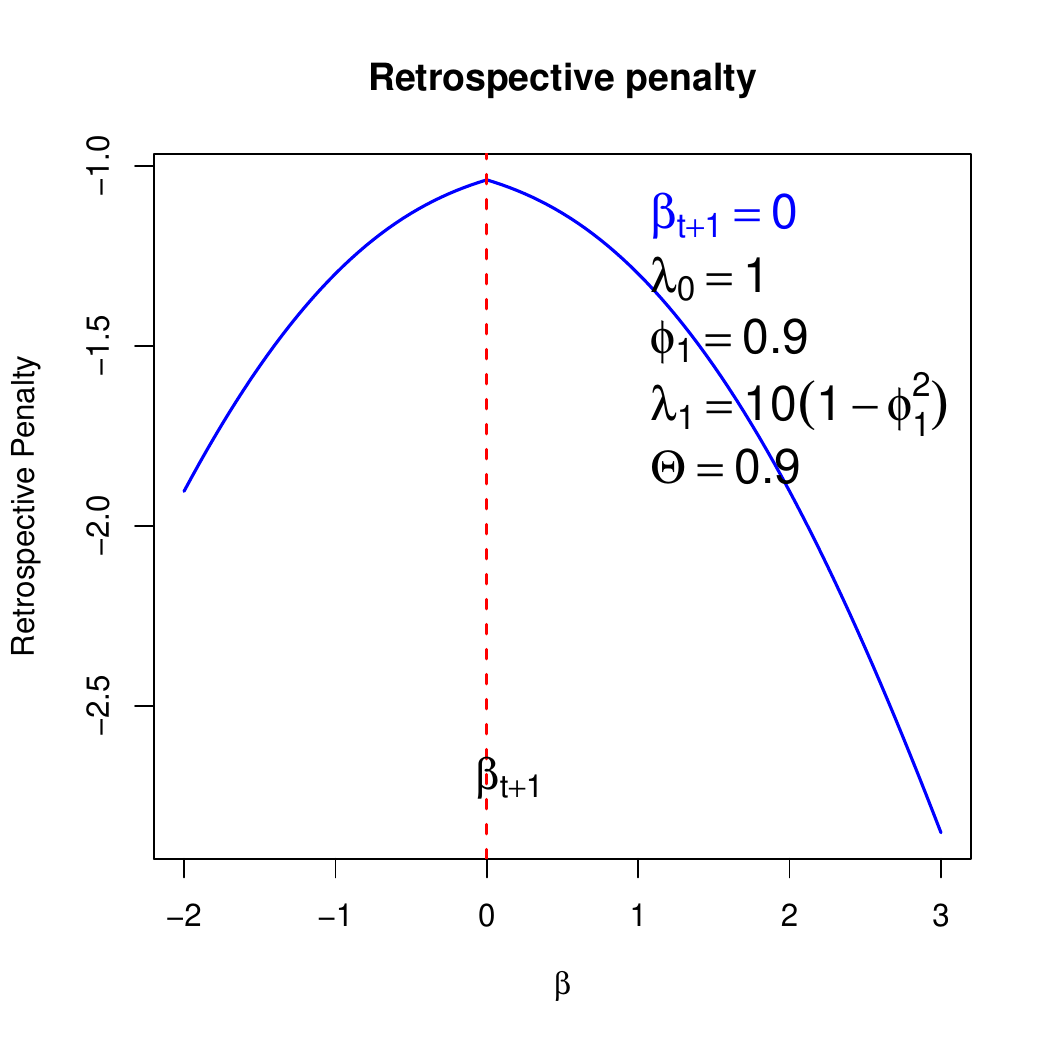}}  \label{fig11a}
    \end{minipage}}
     \subfigure[$pen(\beta|\beta_{t+1}=1.5)$]{
    \begin{minipage}[t]{0.32\textwidth}
    \hskip-1pc  \scalebox{0.27}{\includegraphics{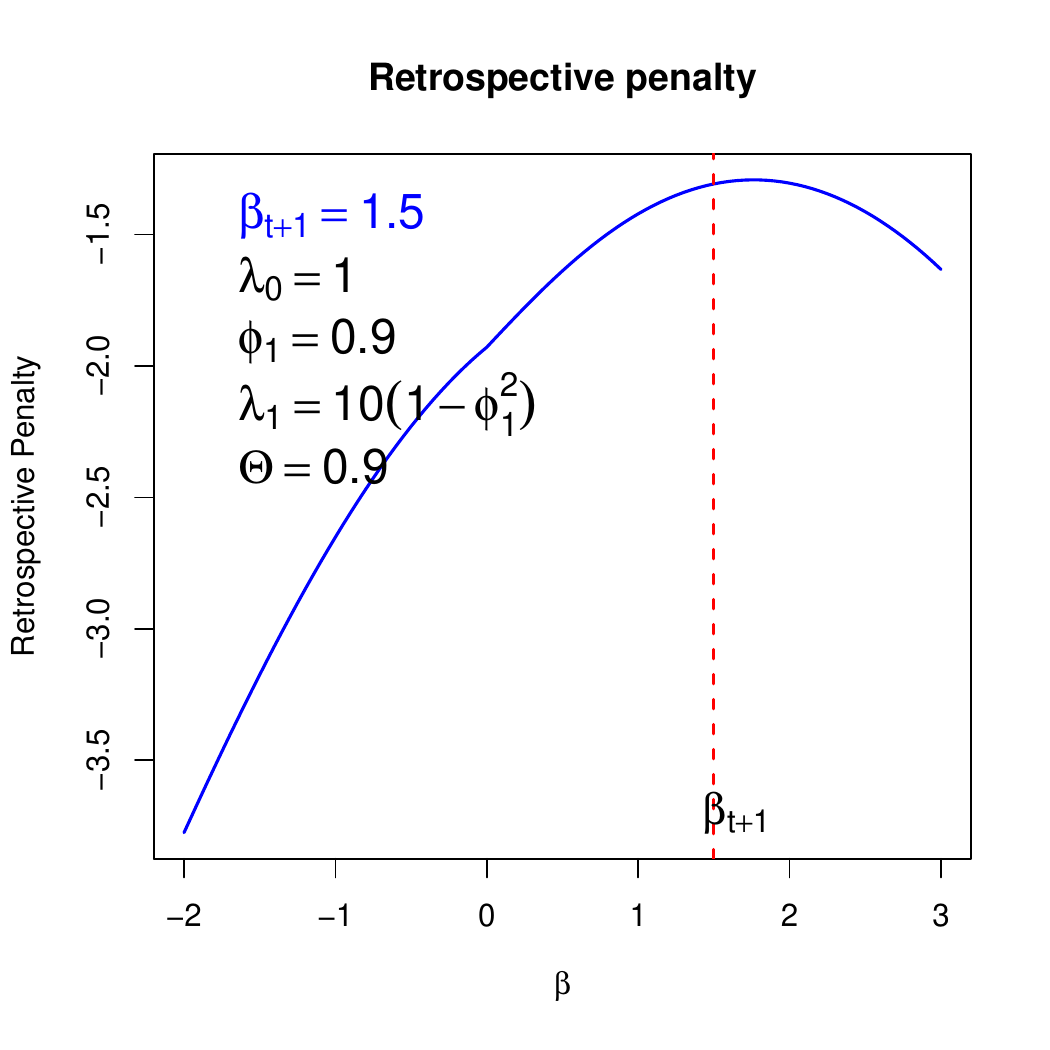}}     \label{fig11b}
    \end{minipage}}
     \subfigure[$\phi_0=0,\phi_1=0.9,\lambda_1=10(1-\phi_1^2)$]{
    \begin{minipage}[t]{0.32\textwidth}
     \hskip-1pc \scalebox{0.27}{\includegraphics{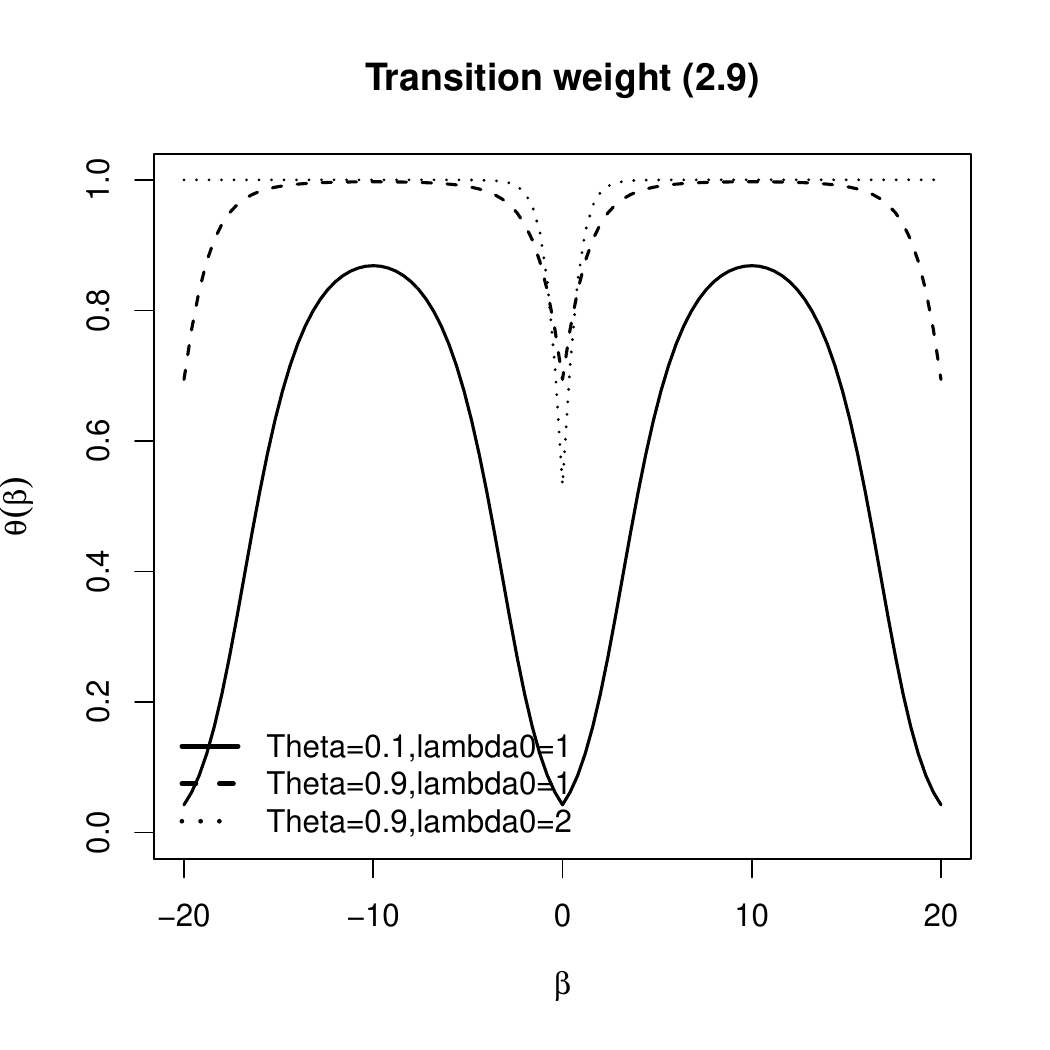}}   \label{fig11c}
    \end{minipage}}
    \caption{Plots of the retrospective penalty function and the mixing weight \eqref{weights} under the Laplace spike.}    \label{fig11}
\end{figure}

The behavior of  the prospective and retrospective penalties is ultimately tied to the mixing weight $\theta_t\equiv\theta(\beta)$  in \eqref{weights}. It is desirable that $\theta(\beta)$ is increasing with  $|\beta|$. However, Laplace tails  will begin to dominate for large enough $|\beta|$, where the probability $\theta(\beta)$ will begin to drop (for $|\beta|$ greater than $\delta\equiv(\lambda_0+\sqrt{2C/A})A$, where $A=\lambda_1/(1-\phi_1^2)$ and $C=\log[(1-\Theta)/\Theta\lambda_0/2\sqrt{2\pi A}]$). However, we can make the turning point $\delta$ large enough with larger values $\Theta$ and smaller values $\lambda_0$, as indicated in Figure \ref{fig11c}.

To describe  the shrinkage dynamics implied by the  penalty \eqref{penalty_total}, it is useful to study the partial derivative $\partial Pen(\beta\C\beta_{t-1},\beta_{t+1})/\partial|\beta|$. This  term encapsulates how much shrinkage we expect at time  $t$, conditionally on $(\beta_{t-1},\beta_{t+1})$.
We will separate the term into two pieces: a prospective shrinkage effect $\lambda^\star(\beta\C\beta_{t-1})$, driven by the past value $\beta_{t-1}$, and a retrospective shrinkage effect $\widetilde{\lambda}^\star(\beta\C\beta_{t+1})$, driven by the future value $\beta_{t+1}$. More formally, we write
$$
\frac{\partial\, Pen(\beta\C\beta_{t-1},\beta_{t+1})}{\partial|\beta|}\equiv -\Lambda^\star(\beta\C\beta_{t-1},\beta_{t+1}),
$$
where
\begin{equation}\label{big_lambda}
\Lambda^\star(\beta\C\beta_{t-1},\beta_{t+1})=\lambda^\star(\beta\C\beta_{t-1})+\widetilde{\lambda}^\star(\beta\C\beta_{t+1}),
\end{equation}
and
$$
\lambda^\star(\beta\C\beta_{t-1})=-\frac{\partial\, pen(\beta\C\beta_{t-1})}{\partial|\beta|}\quad\text{and}\quad \widetilde{\lambda}^\star(\beta\C\beta_{t+1})=-\frac{\partial\, pen(\beta_{t+1}|\beta)}{\partial|\beta|}.
$$

\subsection{Shrinkage ``from the Past"}\label{sec:prosp}
The prospective shrinkage term $\lambda^\star(\beta\C\beta_{t-1})$ pertains to Bayesian penalty mixing introduced by \cite{rockova15} and \cite{SSL} in the sense that it can be characterized as an adaptive linear combination of individual spike and slab shrinkage terms. In particular, we can write
\begin{align}
\lambda^\star(\beta\C\beta_{t-1})&=-p^\star_t(\beta)\frac{\partial\log \psi_1(\beta\C\mu_{t},\lambda_1)}{\partial|\beta|}-[1-p^\star_{t}(\beta)]\frac{\partial\log \psi_0(\beta\C\lambda_0)}{\partial|\beta|},
\end{align}
where 
\begin{equation}\label{pstar}
p^\star_{t}(\beta)\equiv\frac{\theta_{t}  \psi_1(\beta\C\mu_{t},\lambda_1)}{\theta_{t}  \psi_1(\beta\C\mu_{t},\lambda_1)+(1-\theta_{t}) \psi_0(\beta\C\lambda_0)}.
\end{equation}
For example, using the Laplace spike, one obtains
$$
\lambda^\star(\beta\C\beta_{t-1})=p^\star_t(\beta)\left(\frac{\beta-\mu_t}{\lambda_1}\right)\mathrm{sign}(\beta)+[1-p_t^\star(\beta)]\lambda_0\label{lambdastar}.
$$
Two observations are in order: first, by writing $p^\star_{t}(\beta)=\P(\gamma_{t}=1|\beta_t=\beta,\beta_{t-1},\theta_t)$, \eqref{pstar} can be viewed as a posterior probability for classifying $\beta$ as arising from the {\sl conditional} slab (versus  the spike) at  time $t$,  given the previous value $\beta_{t-1}$.  Second, these weights are {\sl very different from} $\theta_{t}$ in \eqref{weights}, which are classifying $\beta$  as arising from the {\sl marginal} slab (versus the spike). 
From \eqref{pstar}, we can see how $p_{t}^\star(\beta)$ hierarchically transmits information about the past value $\beta_{t-1}$ (via  $\theta_t$) to determine the right shrinkage for  $\beta_t$. This is achieved with a doubly-adaptive  chain reaction. 
Namely, if the previous value $\beta_{t-1}$ was large, $\theta_t$ will be close to one signaling that the next coefficient $\beta_t$ is prone to be in the slab.  Next,  if $\beta_t$ is in fact  large, $p^\star_t(\beta_t)$  will be close to one, where the first summand in \eqref{lambdastar}  becomes the leading term and  shrinks $\beta_t$ towards $\mu_t$. If $\beta_t$ is small, however, $p^\star_t(\beta_t)$ will be  small as well, where the second term in \eqref{lambdastar} takes over to shrink $\beta_t$ towards zero. This gravitational pull is accelerated when the previous value $\beta_{t-1}$ was negligible (zero), in which case $\theta_t$ will be even smaller, making it even more difficult for the next coefficient $\beta_t$ to escape the spike. This mechanism explains how the prospective penalty adapts to both $(\beta_{t-1},\beta_t)$, promoting smooth forward proliferation of spike/slab  allocations and coefficients.

\subsection{Shrinkage ``from the Future"}\label{sec:retro} While the prospective shrinkage term promotes smooth forward proliferation, the retrospective shrinkage term $\widetilde{\lambda}^\star(\beta\C\beta_{t+1})$ operates backwards.
For the Laplace spike,  we can write
 \begin{align}\label{shrinkage_future}
 \widetilde{\lambda}^\star(\beta\C\beta_{t+1})=&-\frac{\partial \theta_{t+1}}{\partial |\beta|}\left[\frac{p^\star_{t+1}(\beta_{t+1})}{\theta_{t+1}}-\frac{1-p^\star_{t+1}(\beta_{t+1})}{1-\theta_{t+1}}\right]\notag\\
 &-p^\star_{t+1}(\beta_{t+1})\phi_1\mathrm{sign}(\beta)\left[\frac{\beta_{t+1}-\mu_{t+1}}{\lambda_1}\right], 
\end{align}
where 
\begin{equation}\label{theta_der}
\frac{\partial\theta_{t+1}}{\partial|\beta|}=\theta_{t+1}(1-\theta_{t+1})\left[\lambda_0-\mathrm{sign}(\beta)\left(\frac{\beta-\phi_0}{\lambda_1/(1-\phi_1^2)}\right)\right].
\end{equation}
For simplicity, we will write $p^\star_{t+1}=p^\star_{t+1}(\beta_{t+1})$. Then we have
\begin{align}
& \widetilde{\lambda}^\star(\beta\C\beta_{t+1})=\notag\\
&\quad\left[\lambda_0-\mathrm{sign}(\beta)\left(\frac{\beta-\phi_0}{\lambda_1/(1-\phi_1^2)}\right)\right]\Big[
 (1-p^\star_{t+1})\theta_{t+1}-p^\star_{t+1}(1-\theta_{t+1})
  \Big]\label{first_summand}\\
  &\quad-p^\star_{t+1}\phi_1\mathrm{sign}(\beta)\left(\frac{\beta_{t+1}-\mu_{t+1}}{\lambda_1}\right).\label{second_summand}
  \end{align}
  The retrospective term  synthesizes information from both $(\beta_{t+1},\beta_t)$ to contribute to shrinkage at time $t$.
When $(\beta_{t+1},\beta_t)$ are both large, we obtain $p^\star_t(\beta_{t+1})$ and $\theta_{t+1}$ that are both close to one. The shrinkage  is then driven by the second summand in  \eqref{second_summand}, forcing $\beta_t$ to be shrunk towards the future value $\beta_{t+1}$ (through $\mu_{t+1}=\phi_0+\phi_1(\beta_t-\phi_0)$). When either $\beta_{t+1}$ or $\beta_t$ are small, shrinkage is  targeted towards the stationary mean through the dominant term \eqref{first_summand}.

\subsection{Dynamic Spike-and-Slab Fused LASSO}\label{sec:mode}
As we now show, the Laplace spike has the advantage of shrinking coefficient directly to zero, where no additional thresholding is needed for variable selection \citep{rockova15}. 
This has beneficial consequences for computation, where calculations can be narrowed down to active sets of coefficients. In this section, we develop a dynamic coordinate-wise strategy, building on the Spike-and-Slab LASSO method of \cite{SSL}   for static high-dimensional variable selection.

The key to our approach will be drawing upon the penalized likelihood perspective developed in Section \ref{sec:pen}. 
To illustrate the functionality of the dynamic penalty  from Section \ref{sec:pen}, we start by assuming $p=1$ and $x_t=1$ in \eqref{model}. This simple case corresponds to a sparse normal-means model, where the means are dynamically intertwined.  We begin by characterizing some basic properties of the {\em conditional} posterior mode
$$
\wh{\b}=\arg\max_{\b}\pi(\b\C\y,\bm v),
$$
given the variances $\bm v=(v_1,\dots,v_T)'$, where $\y=(y_1,\dots,y_T)'$ arises from \eqref{model} and $\b=(\beta_1,\dots,\beta_T)'$ is assigned the $DSS$ prior. One of the attractive features of the Laplace spike in  \eqref{betas} is that $\wh{\b}$ has a thresholding property.  This property is revealed from necessary characterizations for each $\wh{\beta}_t$ (for $t=1,\dots,T$), once we condition on the rest of the directions through $(\wh{\beta}_{t-1},\wh{\beta}_{t+1})$. 
The {\sl conditional} thresholding rule  can be characterized using standard arguments, as with similar existing regularizers \citep{zhang_mcp,fanli,antoniadis_fan,zhang_zhang,SSL}.  While the typical sparsity-inducing penalty functions are symmetric,  the penalty \eqref{penalty_total} is not, due to its dependence on the previous and future values $(\beta_{t-1},\beta_{t+1})$. Thereby, instead of a single selection threshold, we  have two: 
{\begin{align}
\Delta^-(x,\beta_{t-1},\beta_{t+1})&=\sup_{\beta<0}\left\{\frac{\beta x^2}{2}-\frac{v_t\,Pen(\beta\C\beta_{t-1},\beta_{t+1})}{\beta}\right\}\label{left}\\
\Delta^+(x,\beta_{t-1},\beta_{t+1})&=\inf_{\beta>0}\left\{\frac{\beta x^2}{2}-\frac{v_t\,Pen(\beta\C\beta_{t-1},\beta_{t+1})}{\beta}\right\}\label{right}.
\end{align}}

 The following necessary characterization  links  the behavior of $\wh{\b}$ to the shrinkage terms characterized in Section \ref{sec:prosp} and Section \ref{sec:retro}.
\begin{lemma}\label{lemma2}
Denote by $\wh{\b}=(\wh{\beta}_1,\dots,\wh{\beta}_T)'$  the global mode of $\pi(\b_{1:T}\C\y_{1:T},\bm v_{1:T})$ and by $\Delta_{t}^-$ and $\Delta_{t}^-$ the selection thresholds \eqref{left} and \eqref{right} with $x=1$, $\beta_{t-1}=\wh{\beta}_{t-1}$ and $\beta_{t+1}=\wh{\beta}_{t+1}$.
Then, conditionally on $(\wh{\beta}_{t-1},\wh{\beta}_{t+1})$, we have for $1< t< T$
\begin{equation}
\wh{\beta}_{t}=\begin{cases} 
0&\quad\text{if}\quad \Delta_{t}^- <y_t<\Delta_{t}^+\\
\left[|y_t|-v_t\Lambda^\star(\wh{\beta}_t\C\wh{\beta}_{t-1},\wh{\beta}_{t+1})\right]_+\mathrm{sign}(y_t)&\quad\text{otherwise},
\label{nonzero2}\end{cases}
\end{equation}
where $\Lambda^\star(\wh{\beta}_t\C\wh{\beta}_{t-1},\wh{\beta}_{t+1})$ was defined in \eqref{big_lambda}.
\end{lemma}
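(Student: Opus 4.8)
The plan is to collapse the $T$-dimensional maximization into a family of one-dimensional thresholding problems through coordinate-wise optimality, and then to analyze each univariate objective by weighing its value at a nonzero candidate against its value at the origin. First I would invoke the defining property of a global maximizer: if $\wh\b$ maximizes $\pi(\b\C\y)$, then for each $t$ the scalar $\wh\beta_t$ must also maximize the univariate restriction of the log-posterior obtained by freezing the remaining coordinates at $(\wh\beta_s)_{s\neq t}$. Because $\{\beta_t\}_{t=1}^T$ is a first-order Markov process, only the neighbors $(\wh\beta_{t-1},\wh\beta_{t+1})$ enter this restriction, which for $1<t<T$ equals, up to the additive constant $C$, the function $Q(\beta)\equiv-\tfrac12(y_t-\beta)^2+Pen(\beta\C\wh\beta_{t-1},\wh\beta_{t+1})$. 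This step is immediate but it is the conceptual pivot: it licenses treating $(\wh\beta_{t-1},\wh\beta_{t+1})$ as known constants and reading off a thresholding rule exactly as for the static symmetric regularizers of \cite{antoniadis_fan} and \cite{SSL}, the only novelty being the asymmetry of $Pen$.

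Next I would characterize when the maximizer of $Q$ sits at the origin. Since $Pen(0\C\cdot)=0$ by the norming constant and $Q$ is differentiable away from $\beta=0$ but carries a cusp at $\beta=0$ inherited from the Laplace spike $\psi_0$, I would form the difference $Q(\beta)-Q(0)=y_t\beta-\tfrac12\beta^2+Pen(\beta\C\cdots)$ and divide by $\beta$ separately on $\{\beta>0\}$ and $\{\beta<0\}$. Imposing $Q(\beta)\le Q(0)$ on each half-line and optimizing the resulting difference quotient produces precisely the two thresholds: the infimum over $\beta>0$ yields $\Delta_t^+$ and the supremum over $\beta<0$ yields $\Delta_t^-$, the $\sup$/$\inf$ swap on the negative half-line being a consequence of dividing an inequality by a negative number. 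This delivers $\wh\beta_t=0$ exactly on the interval $\Delta_t^-<y_t<\Delta_t^+$.

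For the complementary regime I would impose first-order stationarity at the nonzero maximizer, namely $(y_t-\beta)+\partial Pen/\partial\beta=0$. Substituting the identity $\partial Pen/\partial|\beta|=-\Lambda^\star(\beta\C\wh\beta_{t-1},\wh\beta_{t+1})$ from \eqref{big_lambda} and carrying the factor $\mathrm{sign}(\beta)$ gives the fixed-point relation $\wh\beta_t=\mathrm{sign}(y_t)\,[\,|y_t|-\Lambda^\star(\wh\beta_t\C\wh\beta_{t-1},\wh\beta_{t+1})\,]_+$, after verifying that the nonzero mode inherits the sign of $y_t$ and that the bracketed quantity is nonnegative on this regime, which is what forces the soft-thresholding map to agree with the threshold boundary.

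The main obstacle is that $Pen$ is neither convex nor symmetric and, as the paper itself stresses, $Q$ can be multimodal, so the stationarity equation may admit several roots. I would therefore emphasize that the lemma asserts only a \emph{necessary} characterization of the global mode: I need not prove uniqueness of the univariate maximizer nor that a given stationary point is globally optimal; it suffices that the global mode, wherever it lies, obeys coordinate-wise optimality and hence one of the two displayed relations. The delicate bookkeeping is thus concentrated in the zero-versus-nonzero dichotomy, namely handling the non-differentiability at the origin through one-sided directional derivatives rather than an ordinary gradient, and orienting the $\sup$/$\inf$ and the $\mathrm{sign}$ factors so that the asymmetric thresholds $\Delta_t^{\pm}$ and the shrinkage map $[\,|y_t|-\Lambda^\star\,]_+\mathrm{sign}(y_t)$ glue together continuously as $y_t$ crosses $\Delta_t^{\pm}$.
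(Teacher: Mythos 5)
Your proposal is correct and follows essentially the same route as the paper's own proof: reduce to the one-dimensional problem \eqref{one_site1} via coordinate-wise optimality of the global mode (with the Markov structure confining the dependence to the two neighbors), characterize the zero case by comparing $Q(\beta)$ with $Q(0)$ through half-line difference quotients --- which is exactly the criterion of \cite{zhang_zhang} that the paper invokes and which produces the thresholds \eqref{left}--\eqref{right} --- and read off the nonzero case from the first-order necessary condition using $\partial\, Pen(\beta\C\beta_{t-1},\beta_{t+1})/\partial|\beta|=-\Lambda^\star(\beta\C\beta_{t-1},\beta_{t+1})$. The only distinction is one of completeness: you spell out the difference-quotient derivation, the sign/sup--inf bookkeeping, and the necessity-only caveat that the paper compresses into its citation and its phrase ``the rest of the proof follows from the definition of $\Delta_t^+$ and $\Delta_t^-$.''
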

\begin{proof}
We  begin by noting that $\wh{\beta}_{t}$ is a maximizer in $t^{th}$ direction while keeping  $(\wh{\beta}_{t-1},\wh{\beta}_{t+1})$ fixed, i.e.
\begin{equation}\label{one_site1}
\wh{\beta}_{t}=\arg\max_{\beta}\left\{-\frac{1}{2v_t}(y_{t}-\beta)^2+Pen(\beta\C\wh{\beta}_{t-1},\wh{\beta}_{t+1})\right\}.
\end{equation}
 It turns out that $\wh{\beta}_{t}=0$ iff 
$
\beta\left(y_t-\frac{\beta}{2}+v_t\frac{Pen(\beta\C\wh{\beta}_{t-1},\wh{\beta}_{t+1})}{\beta}\right)<0,\, \forall\beta\in\mathbb{R}\backslash\{0\}
$ \citep{zhang_zhang}. The rest of the proof follows from the definition of $\Delta_{t}^+$ and $\Delta_{t}^-$ in \eqref{left} and \eqref{right}. Conditionally on $(\wh{\beta}_{t-1},\wh{\beta}_{t+1})$,  the global mode $\wh{\beta}_t$, once nonzero, has to satisfy \eqref{nonzero2} from the first-order necessary condition.
\end{proof}
Lemma \ref{lemma2} formally certifies that the posterior  mode {\em under the Laplace spike} exhibits  both (a) sparsity and (b) smoothness (through the prospective/retrospective shrinkage terms). 
\begin{remark}
While  Lemma \ref{lemma2} assumes $1<t<T$, the characterization applies also for $t=1$, once we specify the initial condition $\beta_{t=0}$.  The value $\beta_{t=0}$ is not assumed   known and will be  estimated together with all the remaining parameters . For $t=T$, an analogous characterization exists, where the shrinkage term and the selection threshold  only contain the prospective portion of the penalty. 
\end{remark}

When $p>1$, there is a delicate interplay between the multiple series, where overfitting in one direction may impair recovery in other directions.
As will be seen in Section~\ref{sec:simul}, anchoring on sparsity is a viable remedy to these issues. We obtain analogous characterizations of the global mode.
We will denote with $\Delta_{tj}^-$ and $\Delta_{tj}^-$ the selection thresholds \eqref{left} and \eqref{right} with $x=x_{tj},\beta_{t-1}=\wh{\beta}_{t-1j}$, and
$\beta_{t+1}=\wh{\beta}_{t+1j}$.

{
\begin{lemma}\label{thm2}
Denote by $\wh{\B}=\{\wh{\beta}_{tj}\}_{t,j=1}^{T,p}$  the global mode of $\pi(\b_{1:T}\C\y_{1:T},\bm v_{1:T})$ and $\wh{\B}_{\sls tj}$ all but the $(t,j)^{th}$ entry in $\wh{\B}$. Let
$z_{tj}=y_t-\sum_{i\neq j}x_{ti}\wh{\beta}_{ti}$ and $Z_{tj}=x_{tj}{z_{tj}}$. Then $\wh{\beta}_{tj}$ satisfies the following necessary condition
$$
\wh{\beta}_{tj}=
\begin{cases}
\frac{1}{x_{tj}^2}\left[|Z_{tj}|-v_t\Lambda^\star(\wh{\beta}_{tj}\C\wh{\beta}_{t-1j},\wh{\beta}_{t-1j})\right]_+\mathrm{sign}(Z_{tj})&\quad\quad \text{otherwise}\\
0&\text{if}\quad \Delta_{tj}^-<Z_{tj}<\Delta^+_{tj}.
\end{cases}
$$
\end{lemma}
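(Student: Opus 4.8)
The plan is to reduce the multivariate problem to the one-dimensional characterization of Lemma \ref{lemma2} through a coordinate-wise optimality argument. Since $\wh{\B}$ is the global maximizer of $\pi(\B|\Y)$, it must in particular be optimal along every individual direction: holding all entries $\wh{\B}_{\sls tj}$ fixed, the scalar $\wh{\beta}_{tj}$ maximizes the log-posterior viewed as a function of $\beta_{tj}$ alone. The first step is therefore to isolate the dependence of the log-posterior on this single coordinate.

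In the Gaussian log-likelihood $-\tfrac12\sum_{s=1}^T(y_s-\x_s'\b_s)^2$ only the $s=t$ summand involves $\beta_{tj}$; writing the partial residual $z_{tj}=y_t-\sum_{i\neq j}x_{ti}\wh{\beta}_{ti}$, that summand equals $-\tfrac12(z_{tj}-x_{tj}\beta_{tj})^2$ up to terms free of $\beta_{tj}$. Because the $p$ series carry independent $DSS$ priors, no coordinate of series $j'\neq j$ enters the $\beta_{tj}$-slice of the penalty; and within series $j$ the full log-prior is the chain $\sum_s pen(\beta_{sj}\C\beta_{s-1j})$, of which only two links depend on $\beta_{tj}$: the prospective term $pen(\beta_{tj}\C\wh{\beta}_{t-1j})$ and the retrospective term $pen(\wh{\beta}_{t+1j}\C\beta_{tj})$. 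By \eqref{penalty_total} these sum, up to the norming constant, to $Pen(\beta_{tj}\C\wh{\beta}_{t-1j},\wh{\beta}_{t+1j})$. Collecting, the one-site problem becomes
\begin{equation*}
\wh{\beta}_{tj}=\arg\max_{\beta}\left\{-\tfrac12(z_{tj}-x_{tj}\beta)^2+Pen(\beta\C\wh{\beta}_{t-1j},\wh{\beta}_{t+1j})\right\},
\end{equation*}
and after expanding the square, discarding the constant $-\tfrac12 z_{tj}^2$, and setting $Z_{tj}=x_{tj}z_{tj}$, this is exactly the objective of Lemma \ref{lemma2} with $y_t$ replaced by $Z_{tj}$ and the quadratic carrying the factor $x_{tj}^2$.

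From here the argument mirrors Lemma \ref{lemma2}. I would invoke the sign characterization of \cite{zhang_zhang}: $\wh{\beta}_{tj}=0$ if and only if $\beta\bigl(\tfrac12 x_{tj}^2\beta+Pen(\beta\C\wh{\beta}_{t-1j},\wh{\beta}_{t+1j})/\beta-Z_{tj}\bigr)<0$ for every $\beta\neq0$, which by the definitions \eqref{left}--\eqref{right} of $\Delta_{tj}^{\pm}$ taken with $x=x_{tj}$ is precisely the statement that $Z_{tj}$ falls between the two thresholds. When $\wh{\beta}_{tj}\neq0$, the first-order stationarity condition $x_{tj}^2\beta-Z_{tj}=\partial Pen/\partial\beta$, combined with $\partial Pen/\partial|\beta|=-\Lambda^\star$ from \eqref{big_lambda} and the chain rule, rearranges to the soft-thresholding form, the prefactor $1/x_{tj}^2$ appearing when one solves the stationarity equation for $\beta$.

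The only genuinely new content relative to the one-dimensional case is the design entry $x_{tj}$: the substitution $y_t\mapsto Z_{tj}$ and the rescaling of the quadratic and of the final estimate by $x_{tj}^2$ must be tracked consistently through the threshold definitions and the soft-thresholding expression. I anticipate that the main obstacle is the bookkeeping in the second step—confirming that exactly the two links $pen(\beta_{tj}\C\wh{\beta}_{t-1j})$ and $pen(\wh{\beta}_{t+1j}\C\beta_{tj})$ carry the entire $\beta_{tj}$-dependence of the penalty and that their sum reproduces $Pen(\beta_{tj}\C\wh{\beta}_{t-1j},\wh{\beta}_{t+1j})$, since each coefficient sits in two consecutive links of the chain. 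The cross-series decoupling, by contrast, is immediate from prior independence, and the thresholding step is a verbatim reuse of Lemma \ref{lemma2}.
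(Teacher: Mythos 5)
Your proposal is correct and takes essentially the same route as the paper: the paper's proof simply notes that $\wh{\beta}_{tj}$ maximizes the one-site objective $-\tfrac12(z_{tj}-x_{tj}\beta)^2+Pen(\beta\C\wh{\beta}_{t-1j},\wh{\beta}_{t+1j})$ and invokes Lemma~\ref{lemma2}, which is precisely your coordinate-wise reduction followed by the same thresholding argument. One remark: your derivation places $\wh{\beta}_{tj}=0$ when $Z_{tj}$ lies between the thresholds, consistent with Lemma~\ref{lemma2}, whereas the printed statement of the present lemma has the two cases interchanged; this is a typo in the statement rather than a flaw in your argument, since the paper's own proof, resting on Lemma~\ref{lemma2}, yields the same case assignment as yours.
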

\proof
Follows from Lemma \ref{lemma2}, noting 
 that $\wh{\beta}_{tj}$ is a maximizer in $(t,j)^{th}$ direction while keeping  $\wh{\B}_{\sls tj}$ fixed, i.e.
\begin{equation}\label{one_site2}
\wh{\beta}_{tj}=\arg\max_{\beta}\left\{-\frac{1}{2v_t}(z_{tj}-x_{tj}\beta)^2+Pen(\beta\C\wh{\beta}_{t-1j},\wh{\beta}_{t+1j})\right\}.\qquad\quad\quad\quad\square
\end{equation}
}
Lemma \ref{thm2} evokes coordinate-wise optimization for obtaining the posterior mode. However,  the computation of selection thresholds $(\Delta^-_{tj},\Delta^+_{tj})$ (as well as the one-site maximizers \eqref{one_site1}) requires numerical optimization. The lack of availability of closed-form thresholding  hampers practicality when  $T$ and $p$ are even moderately large. In the next section, we propose an alternative strategy which capitalizes on closed-form thresholding rules.



A (local) posterior mode $\wh\b_{0:T}$ can be obtained either directly, by cycling over one-site updates \eqref{one_site2}, or indirectly through an EMVS algorithm outlined in the previous section. The direct
algorithm consists of integrating out $\bg_{0:T}$ and solving a sequence of non-standard optimization problems \eqref{one_site2}, which necessitate numerical optimization. The EMVS algorithm, on the other hand, obviates the need for numerical optimization by offering closed form one-site updates. The E-step is very similar to the Gaussian case. The expected previsions $\v_t^\star$ can be calculated as before. In the calculation of $p^\star_{tj}$ and $\theta_{tj}$, we now have to replace the Laplace spike density. For updating $\b_{0:T}$,
we proceed coordinate-wise, iterating over the following single-site updates while keeping all the remaining parameters fixed.  For $1<t<T$, we have
$$
\beta_{tj}^{(m+1)}=\arg\max_\beta Q_{tj}(\beta),
$$
where
\begin{align}
Q_{tj}(\beta)=&-\frac{\nu^\star_t}{2}(z_{tj}-x_{tj}\beta)^2-\frac{p^\star_{tj}}{2\lambda_1}(\beta-\phi_1\beta_{t-1j}^{(m)})^2-
\frac{p^\star_{t+1j}}{2\lambda_1}(\beta_{t+1j}^{(m)}-\phi_1\beta)^2\notag\\
&-(1-p^\star_{tj})\lambda_0|\beta|+p^\star_{t+1j}\log\theta_{t+1j}+(1-p^\star_{t+1j})\log(1-\theta_{t+1j})\label{q_tj},
\end{align}
and where $z_{tj}=y_t-\sum_{i\neq j}x_{ti}\beta_{ti}^{(m)}$. 
From the first-order condition, the solution $\beta_{tj}^{(m+1)}$, if nonzero, needs to satisfy  $\partial Q_{tj}(\beta)/\partial\beta\big|_{\beta=\beta_{tj}^{(m+1)}}=0$. 
To write the derivative slightly more concisely, we introduce the following notation: 
$$
Z_{tj}=\nu^\star_t x_{tj}z_{tj}+\frac{p^\star_{tj}\phi_1}{\lambda_1}\beta_{t-1j}^{(m+1)}+\frac{p^\star_{t+1j}\phi_1}{\lambda_1}\beta_{t+1j}^{(m+1)}
\quad\text{and}
\quad W_{tj}=\left(\nu^\star_t{x_{tj}^2}+\frac{p^\star_{tj}}{\lambda_1}+\frac{p^\star_{t+1j}\phi_1^2}{\lambda_1}\right).
$$
Then  we can write for $\beta\neq0$
\begin{align}\label{derivative}
\frac{\partial Q_{tj}(\beta)}{\partial \beta}=&Z_{tj}-
W_{tj}\beta
-(1-p^\star_{tj})\lambda_0\,\mathrm{sign}(\beta)+\frac{\partial \theta_{t+1j}}{\partial\beta}\left[\frac{p^\star_{t+1j}}{\theta_{t+1j}}-\frac{1-p^\star_{t+1j}}{1-\theta_{t+1j}}\right],
\end{align}
where 
$$
\frac{\partial\theta_{t+1j}}{\partial \beta}=\theta_{t+1j}(1-\theta_{t+1j})\left[\lambda_0\,\mathrm{sign}(\beta)-\frac{\beta(1-\phi_1^2)}{\lambda_1}\right]
$$
is obtained from \eqref{theta_der}.
 Recall that $\theta_{t+1j}$, defined  in \eqref{weights}, depends on $\beta_{tj}$ (denoted by $\beta$ above). This complicates the tractability of the M-step.  If $\theta_{t+1j}$ was fixed, we could obtain a simple closed-form solution $\beta_{tj}^{(m+1)}$  through an elastic-net-like update \citep{zou_hastie}.  We can take advantage of this fact with a one-step-late (OSL) adaptation of the EM algorithm \citep{green_OSL}. The OSL EM algorithm bypasses intricate  M-steps by evaluating the intractable portions of the penalty derivative   at the most recent value, rather than the new value. We apply this trick to the last summand in \eqref{derivative}.
Instead of treating $\theta_{t+1j}$ as a function of $\beta$ in 
\eqref{derivative}, we fix it at the most recent value $\beta_{tj}^{(m)}$. The solution for $\beta$, implied by \eqref{derivative}, is then (when $\Lambda_{tj}>0)$
\begin{equation}\label{beta_update}
\beta^{(m+1)}_{tj}=\frac{1}{W_{tj}+(1-\phi_1^2)/\lambda_1M_{tj}}\left[|Z_{tj}|-\Lambda_{tj}\right]_+\mathrm{sign}(Z_{tj}),\quad\text{for}\quad 1<t<T,
\end{equation}
where $M_{tj}=p^\star_{t+1j}(1-\theta_{t+1j})-\theta_{t+1j}(1-p^\star_{t+1j})$ and $\Lambda_{tj}=\lambda_0[(1-p^\star_{tj})-M_{tj}]$. The update \eqref{beta_update} is a  thresholding rule, with a shrinkage term that reflects the size of $(\beta_{t-1j}^{(m)},\beta_{tj}^{(m)},\beta_{t+1j}^{(m)})$. The exact thresholding property is obtained  from sub-differential calculus, because $Q_{tj}(\cdot)$ is not differentiable at zero (due to the Laplace spike). A very similar update is obtained also for $t=T$, where all the terms involving $p^\star_{t+1j}$ and $\theta_{t+1j}$  in $\Lambda_{tj}, W_{tj}$ and $Z_{tj}$ disappear. For $t=0$, we have
\begin{equation}\label{beta0_update}
\beta_{0j}^{(m+1)}=\frac{1}{p^\star_{1j}\phi_1^2+p^\star_{0j}(1-\phi_1^2)}\left[p^\star_{0j}|\beta_{1j}|\phi_1-(1-p^\star_{0j})\lambda_0\lambda_1\right]_+\mathrm{sign}(\beta_{1j}).
\end{equation}
The updates \eqref{beta_update} and \eqref{beta0_update} can be either cycled-over at each M-step, or performed just once for each M-step. 

{It is straightforward to implement  a random-walk variant of this procedure with $\theta_{tj}=\Theta$  by setting $M_{tj}=0$ in \eqref{beta_update}.}



\begin{remark}
For autoregression with a higher order $h>1$, the retrospective penalty would be similar where $\mu_t$ would depend on $h$ lagged values. The prospective penalty would consist of  not just one, but $h$ terms. For the EM implementation, one would proceed analogously by evaluating the  derivatives of $\theta_{t+1},\dots,\theta_{t+h}$ w.r.t. $\beta$  at the most recent update of the process from the previous iteration and keep them fixed for each one-site update.  
\end{remark}

To illustrate  the ability of  the $DSS$ priors to suppress noise and recover true signal, we consider a high-dimensional synthetic dataset and a topical macroeconomic dataset. 

\section{Synthetic High-Dimensional Data}\label{sec:simul}
We first illustrate our dynamic variable selection procedure on a simulated example with $T=100$ observations generated from the model  \eqref{model} with $p=50$ predictors and with $v_t=0.25$. 
The predictor values $x_{tj}$ are  obtained independently from a standard normal distribution. Out of the 
 $50$ predictors, $46$ never contribute to the model  (predictors $x_{t5}$ through $x_{t50}$), where $\beta_{t5}^0=\beta_{t6}^0=...=\beta_{t50}^0=0$ at all times. The predictor $x_{t1}$ is a persisting predictor, where $\{\beta_{t1}\}_{t=1}^T$ is generated according to an $AR(1)$ process \eqref{slab_process}  with $\phi_0=0$ and $\phi_1=0.98$ and where $|\beta_{t1}^0|>0.5$. 
The remaining three predictors are allowed to enter and leave the model as time progresses. The regression coefficients  $\{\beta^0_{t2}\}_{t=1}^T,\{\beta^0_{t3}\}_{t=1}^T$ and $\{\beta^0_{t4}\}_{t=1}^T$ are again generated from an $AR(1)$ process ($\phi_0=0$ and $\phi_1=0.98$). However, the values are rescaled and thresholded to zero whenever the absolute value of the process drops below $0.5$, creating zero-valued periods. The true sparse series of coefficients are depicted in Figure \ref{fig_simul} (black lines).

We begin with the standard DLM approach, which is equivalent to  $DSS$ when the selection indicators are switched on at all times, i.e.,  $\gamma_{tj}=1$ for $t=0,\dots, T$ and $j=1,\dots,p$. 
This is equivalent to setting $\Theta=1$ in our prior. 
{The autoregressive parameter $\phi_1$ is assigned the prior \eqref{beta_prior} and estimated. We also estimate the variances $v_t$ using the discount stochastic volatility  model \eqref{eq:dsvt} with $\delta=0.9$ and $n_0=d_0=10$.  
Plots of the estimated posterior mode trajectories of the
first $6$ series (including the $4$ active ones) are in Figure \ref{fig_simul} (red broken lines).  With the absence of the spike, the estimated series of coefficients cannot achieve sparsity. 
By failing to discern the coefficients as active or inactive, the state process confuses the source of the signal, distributing it across the redundant covariates. 
This results in loss of efficiency and  poor recovery.

 \begin{figure}[t!]
\begin{center}
\includegraphics[width=15cm,height=8cm]{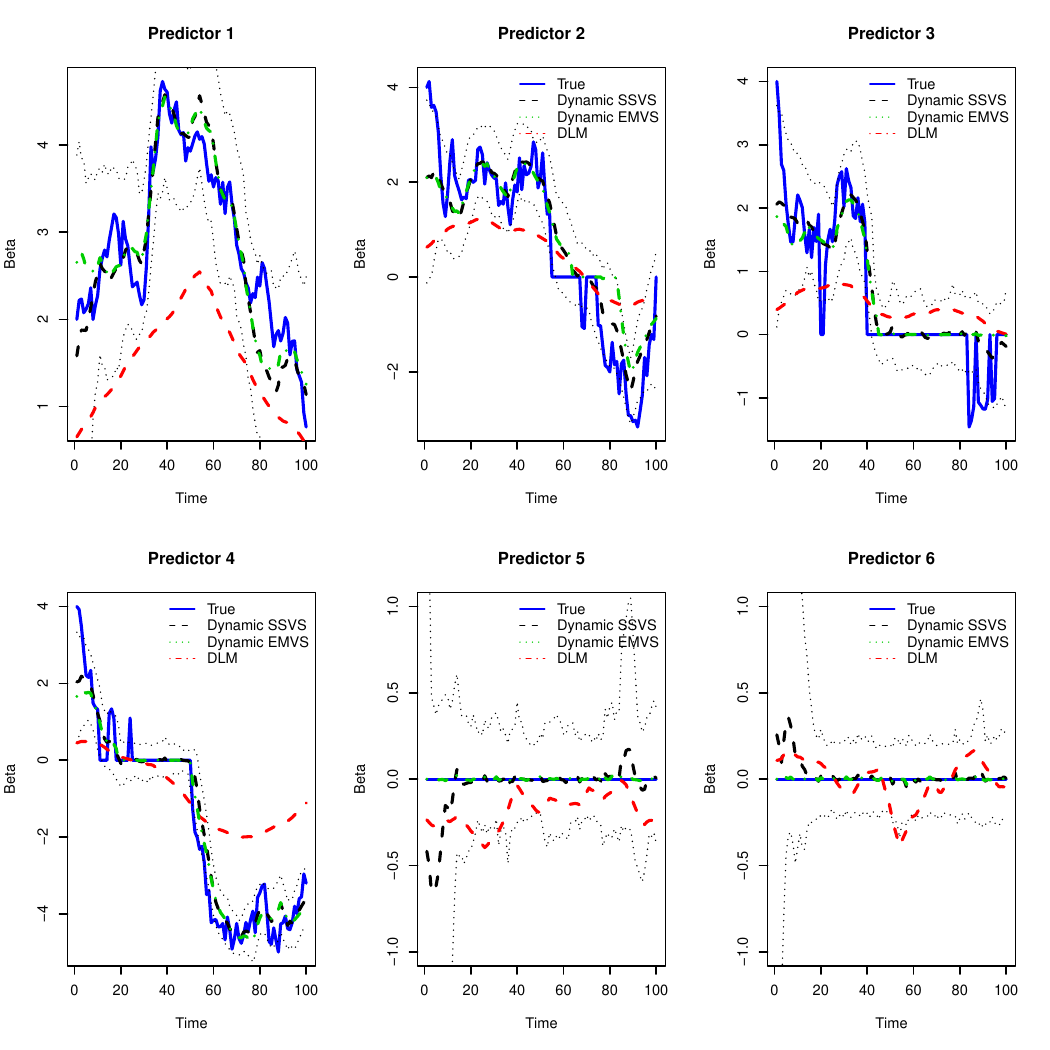}
\end{center}
\caption{The first six regression coefficients of the true (blue solid lines) and estimated  regression coefficients in  the simulated example with $p=50$. The estimates are posterior means from Dynamic SSVS (black broken line) and modes from Dynamic EMVS (green broken line). Comparisons are made with DLM (red dotted line). The black dotted lines denote pointwise credible intervals. }\label{fig_simul}
\end{figure}

 \begin{figure}[htbp!]
\begin{center}
\includegraphics[width=15cm,height=8cm]{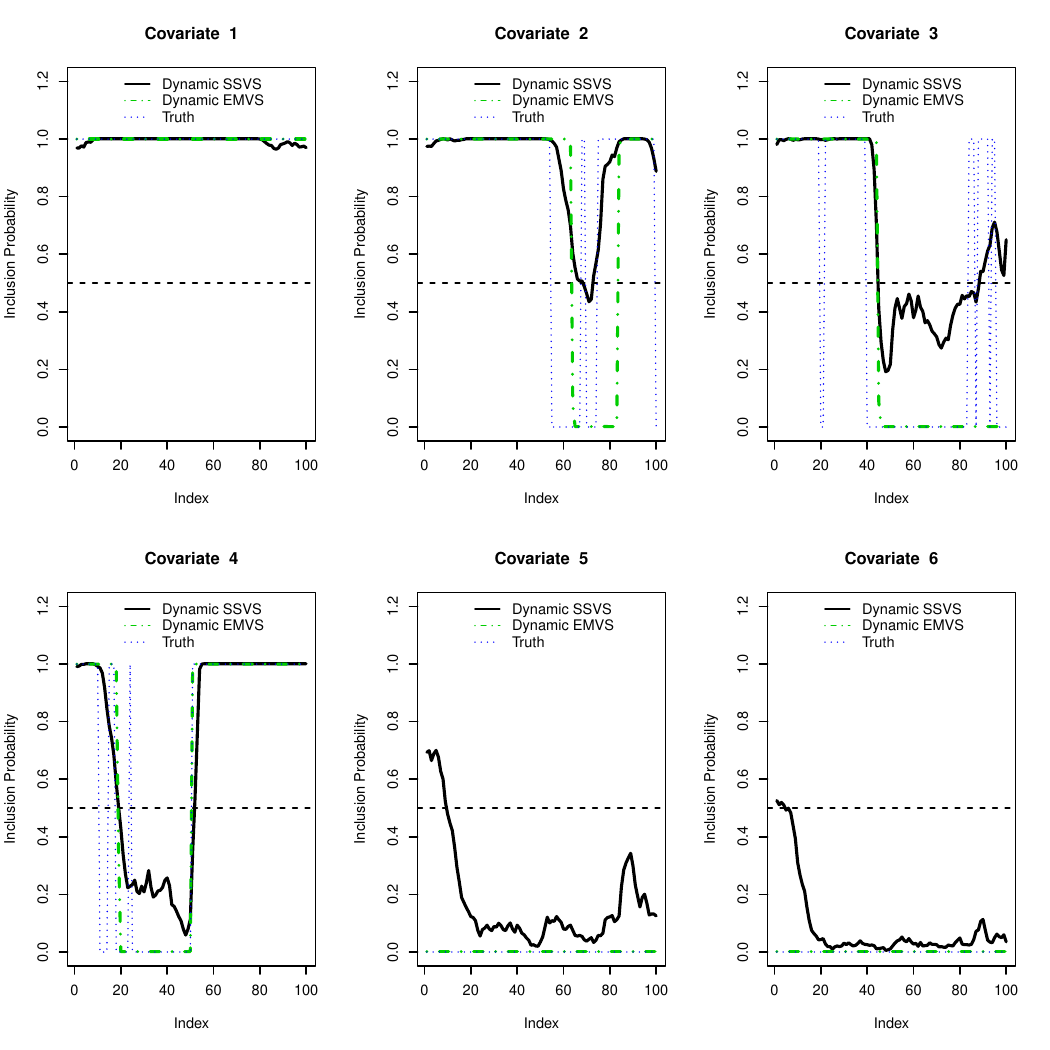}
\end{center}
\caption{Posterior inclusion probabilities  $\P(\gamma_{tj}=1\C\y_{1:T})$ (Dynamic MCMC) and conditional inclusion probabilities  $\P(\gamma_{tj}=1\C\wh\beta_{tj}, \y_{1:T})$ (Dynamic EMVS) and true pattern of sparsity for the first six series.}\label{fig_theta}
\end{figure}

With the hope to improve on this recovery, we deploy the $DSS$ process with a sparsity inducing spike. First, we apply Dynamic SSVS  with $1\,000$ iterations and $200$ burnin time. We set the spike and slab parameters $\Theta=0.1,\lambda_1=0.1$ and $\lambda_0=0.01$ so that the ratio between spike and slab variances is sufficiently large \citep{GM93}.
 The autoregressive parameter $\phi_1$ is estimated under the prior \eqref{beta_prior} and the stochastic volatilities are also estimated with $\delta=0.9$ and $n_0=d_0=10$.
 We plot the posterior mean of the regression coefficients in Figure \ref{fig_simul} (black broken line) together with the credible sets (black dotted lines).
 The recovered series have a strikingly different pattern compared to the non-sparse DLM solution (red dotted lines). First, the estimated series is seen to track closely the periods of predictor importance/irrelevance, achieving dynamic variable selection. Second, by harnessing sparsity, the $DSS$ priors alleviate bias in the nonzero directions, outputting a  cleaner representation of the true underlying signal. The posterior mean of the autoregressive parameter $\phi_1$ is $0.94$. In addition, we  plot the posterior inclusion probabilities $\P(\gamma_{tj}=1\C \y_{1:T})$ for the first $6$ predictors (Figure \ref{fig_theta}, black lines). These quantities can be used to guide variable selection by focusing on those coefficients whose inclusion probability is at least $0.5$ \citep{barbieri}. Indeed, we can see that these estimated probabilities drop below $0.5$ when the true signal is absent, effectively recovering the ``pockets of predictability". The posterior mean of the coefficient $\phi_1$ was estimated at $0.981$ (very close to the true value $0.98$) with posterior a credible interval $(0.973,0.989)$.  The computation  took $151.8$ seconds in R.  We will now turn to Dynamic EMVS to see whether similarly successful recovery can be achieved with less time. 

We apply Dynamic EMVS considering the same spike and slab hyper-parameters, i.e.  $\lambda_1=0.1$ and $\lambda_0=0.01$. The global sparsity weight $\Theta$ can be regarded as a tempering parameter, where $\Theta=1$ corresponds to the DLM case. By choosing smaller values $\Theta$, the posterior becomes more multi-modal making it easier for the EM to get trapped. Since the EMVS computation is very fast, we can alleviate local entrapments by applying a deterministic annealing strategy,  similar to the one suggested in \cite{RG14}. We will consider not only one value $\Theta=0.1$, but a whole sequence of decaying values $\Theta\in\{1,0.9,0.5,0.1\}$ with warm starts. 
Namely, the output obtained with a larger value $\Theta$ will be used as an initialization for the computation at the next smaller value $\Theta$ in the chosen sequence.
 In this way, we obtain an entire solution path (not only one single solution), we accelerate convergence and  increase the chances for the EM to find a promising mode. 
  We successfully apply this  strategy for $\Theta\in\{1,0.9,0.5,0.1\}$ and, similarly as before,  we estimate $\phi_1$ and all the variances $v_t$ under the same priors. The  estimated regression coefficients obtained with $\Theta=0.1$ are depicted in Figure \ref{fig_simul} (green broken lines). We can again see dramatic improvements over DLM (obtained with $\Theta=1$) and, interestingly, a very similar recovery to the posterior mean with Dynamic SSVS.
    The R computations  took $15$ seconds for $\Theta=0.9$,  $6$ seconds for $\Theta=0.5$ and $8$ seconds for $\Theta=0.1$, yielding nontrivial computational dividends compared to MCMC ($151.8$ s). In addition,  Dynamic EMVS outputs conditional inclusion probabilities $\P[\gamma_{tj}=1\C\wh\beta_{tj},\y_{1:T}]$ which can be regarded as the conditional counterpart to the marginal posterior inclusion probabilities $\P[\gamma_{tj}=1\C \y_{1:T}]$ estimated from MCMC. As can be seen from Figure  \ref{fig_theta}, these conditional probabilities track closely the marginal  ones  and, again,  drop below $0.5$ when the true signal is not present. 
  These companion plots are helpful visualizations of the time-varying sparsity profile. In conclusion, the plots for Dynamic SSVS and Dynamic EMVS largely agree.

 \begin{figure}[t!]
\begin{center}
\includegraphics[width=15cm,height=8cm]{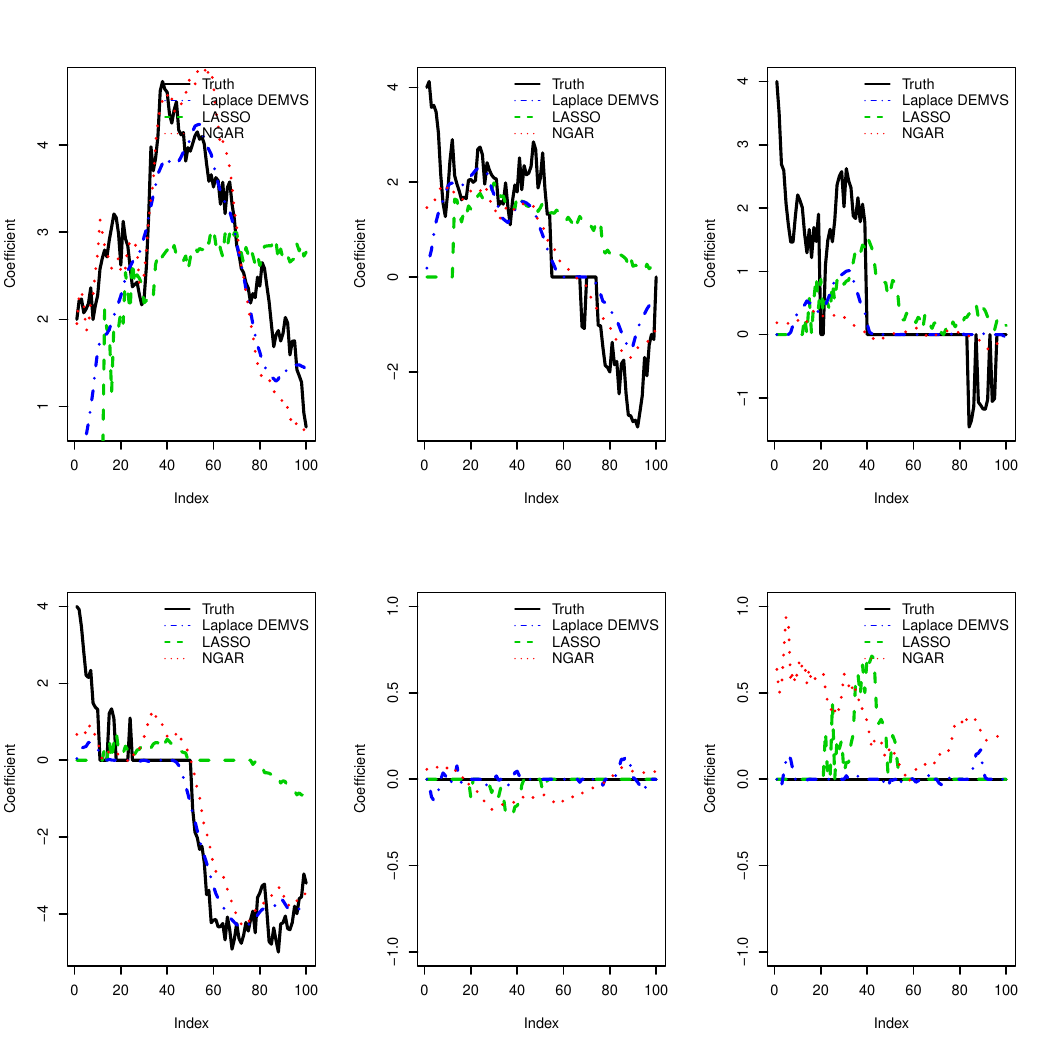}
\end{center}
\caption{The first six regression coefficients of the true and estimated  regression coefficients in  the high-dimensional simulated example with $p=50$. 
We compare DSS (Laplace version) with NGAR and LASSO.}\label{fig_demvs}
\end{figure}

Next, we deploy the Laplace spike variant of DEMVS with a random walk  slab prior and with $\lambda_1=0.1,\lambda_0=1$ and $\Theta=0.5$. 
 This hyper-parameter choice corresponds to a very mild separation between the  spike and slab distributions.
 We apply the one-step-late EM algorithm outlined in Section \ref{sec:mode}, initializing the calculation with the output from DLM. We assume that the initial vector $\b_{t=0}$ is drawn from a $\Theta$-weighted mixture distribution between the Laplace density and a zero-mean Gaussian density with variance one and  and we estimate it together with all the other parameters, as prescribed in Section \ref{sec:mode}.

We also compare the performance to the NGAR process of \cite{kalli_griffin} and the LASSO method. {The latter} does not take into account the temporal nature of the problem.
For NGAR, we use the default settings, $b^*=s^*=0.1$, with 1,000 burn-in and 2,000 MCMC iterations.
For LASSO, we sequentially run a static regression in an extending window fashion, {where the LASSO regression is refit using $1{:}t$ for each $t=1{:}T$ to produce a series of quasi-dynamic coefficients; a common practice for using static shrinkage methods for time series data \citep{bai2008forecasting,de2008forecasting,stock2012generalized,li2014forecasting}}, choosing $\lambda$ via 10-fold cross-validation. The estimated trajectories are depicted in Figure \ref{fig_demvs}.

For the first series, the only persistent series,  both $DSS$ (Laplace) and NGAR  succeeds well in tracing the true signal.
This is especially true in contrast to DLM and LASSO, which significantly under-valuate the signal. 
The estimated coefficient evolutions for DLM and LASSO become inconclusive for assessing variable importance,  where the coefficient estimates for the relevant variables have been polluted by the elevated estimates for the irrelevant variables. For the second to fourth series with intermittent zeros, we see that $DSS$ and NGAR are able to separate the true zero/nonzero signal (noted by the flat coefficient estimates during inactive periods). 
The LASSO method produces sparse estimates, however the variable selection is not linked over time and thereby erratic.
For the two zero series (series five and six), both $DSS$ and LASSO truly shrink noise to zero.
The $DSS$ priors mitigate overfitting by eliminating noisy coefficients and thereby leaving enough room for the true predictors to capture the trend.  


We repeat this experiment $10$ times, generating different responses and regressors using the same set of coefficients. We compare the sum of squared error (SSE)  between the recovered estimates $\wh{\B}$ and the true series $\B_0$ as well as the  Hamming distance between the true and estimated sparsity patterns.
For the MCMC version, the sparsity pattern will be estimated from the matrix of posterior inclusions $\Pi=(\pi_{tj})$ where $\pi_{tj}\equiv\P[\gamma_{tj}=1\C\y_{1:T}]$ according to the median probability model rule. 
We then  define the Hamming distance as
$$
\text{Ham}(\Pi, \B_0)=\sum_{j=1}^p\sum_{t=1}^T|\mathbb{I}(\pi_{tj}>0.5)-\mathbb{I}(\beta_{tj}^0\neq 0)|.
$$
For Dynamic EMVS (with Gaussian spike), we can   use the conditional inclusion probabilities instead of $\pi_{tj}$. 
Alternatively, one can obtain sparsity patterns by thresholding out coefficients whose magnitude is smaller than the intersection point between the stationary spike and slab densities (as we explained in Section \ref{sec:weights}).

 Table \ref{tab_synth} reports average performance metrics over the $10$ experiments.
The performance of $DSS$ is compared to the full DLM model \citep{WestHarrison1997book2}, NGAR \citep{kalli_griffin} and LASSO.
For DLM and NGAR, we use the same specifications as above.
For $DSS$, we now explore a multitude of combinations of hyper-parameters using both Dynamic SSVS and Dynamic EMVS.
For the Gaussian spike, we choose $\lambda_1\in\{0.1,0.01\}$, $\lambda_0\in\{0.01,0.001\}$, and $\Theta\in\{0.9,0.5,0.1\}$.  We also consider a random-walk prior variant with $\theta_{tj}=\Theta$. For Dynamic EMVS, we initialize the calculations at DLM solutions apart from the settings marked with a star, where we use warm starts (as explained above).
 We focus  on the Gaussian EMVS variant, where additional simulations for the Laplace spike are reported in Table \ref{tab_aux1} in the Appendix.

\begin{table}[!t]
\centering
\caption{\small Performance evaluation of $DSS$, LASSO, NGAR and DLM on the simulated example with $p=50$. The results are split for the signal parameters ($x_{1:4}$) and noise parameters ($x_{5:50}$). Dynamic SSVS uses $1\,000$ iterations with $100$ burnin. RW stands for a random-walk variant with $\theta_{tj}=\Theta$. EMVS calculations are initialized at DLM solutions besides settings denoted with  $^\star$where we use warm starts. }
\vspace{0.5cm}
\label{tab_synth}
\scalebox{0.8}{\begin{tabular}{l l l  r  rr   rr  rr }
\hline\hline
                    &      &  &    & \multicolumn{2}{c}{$x_{1:50}$}   & \multicolumn{2}{c}{$x_{1:4}$}   & \multicolumn{2}{c}{$x_{5:50}$} \\
$p=50$       &       &  & Time (s) &   SSE         & Ham.        & SSE         & Ham.       & SSE         & Ham.      \\ \hline
NGAR         &&    &     564.2    &539.2&4708&320.9&108&218.3&4600  \\
LASSO            &&  & 9.3&1621.8&281.2&1595.1&186.4&26.7&94.8      \\
\multicolumn{3}{l}{Dynamic SSVS (Gaussian) }    &                        &   &                        &   &            &             \\
\hline
$\lambda_1=.1$ && $\Theta=1$ (DLM) 			& 164.9&1625.4&4708&1426.8&108&198.5&4600\\
$\lambda_1=.1$ & $\lambda_0=.01$ & $\Theta=.5$  & 132.6&1541.6&2447.7&1457.2&108&84.4&2339.7\\
$\lambda_1=.1$ & $\lambda_0=.01$ &  $\Theta=.1$ & 118.6&108.7&51.4&104.3&40.5&4.5&10.9\\
$\lambda_1=.1$ & $\lambda_0=.01$ & $\Theta=.1$ (RW) & 113.6&683.2&86.5&660.6&63.6&22.6&22.9     \\
$\lambda_1=.01$ &$\lambda_0=.001$&$\Theta=.1$  &117.9&177.6&431&141.2&105.2&36.4&325.8 \\
\hline
\multicolumn{3}{l}{Dynamic EMVS (Gaussian) }    &            &             &   &           &             &                           \\
\hline
$\lambda_1=.1$  & &$\Theta=1$ (DLM) & 5.1&1241&4708&975.2&108&265.7&4600   \\
$\lambda_1=.1$ &$\lambda_0=.01$ &$\Theta=.9^\star$  & 16.5&286.3&106.3&241.9&54.4&44.3&51.9    \\
$\lambda_1=.1$ &$\lambda_0=.01$ &$\Theta=.5^\star$  & 12.4&294.7&99.6&254.7&56&40&43.6   \\
$\lambda_1=.1$ &$\lambda_0=.01$ &$\Theta=.1^\star$  & 14.3&309.6&93&267.5&59.2&42.1&33.8      \\
$\lambda_1=.1$ &$\lambda_0=.01$ &$\Theta=.1$ (RW)  &8&495.8&91.8&494.1&88.2&1.7&3.6   \\
\hline\hline
\end{tabular}}
\end{table}

Looking at Table~\ref{tab_synth}, $DSS$ performs better in terms of both SSE and Hamming distance compared to DLM, NGAR, and LASSO for the majority of the hyperparameters considered. To gain more insights, the table is divided into three blocks: overall performance on $\beta_{1:50}$, active coefficients $\beta_{1:4}$ and noise coefficients $\beta_{5:50}$.
Because DLM and NGAR only shrink (and do not select), the Hamming distance for the block of noisy coefficients is $100\%$. 
For Dynamic SSVS, the sum of squared errors and the Hamming distance is seen to increase with $\Theta$.  It is interesting to note the difference in performance between our stationary $DSS$ version, where $\theta_{tj}$ are dynamically evolving, and the random-walk (RW) version, where $\theta_{tj}=\Theta$.  In this stationary situation, these is a clear advantage in linking the weights over time using the deterministic construction \eqref{weights}.
We found the settings $\lambda_1=0.1,\lambda_0=0.01$ and $\Theta=0.1$ to work well on this example, where the threshold of practical significance (i.e. the intersection point between the stationary spike and slab densities as discussed in Section \ref{sec:weights}) equals $0.086$.  Decreasing this threshold to $0.05$ with a sharper spike-and-slab prior ($\lambda_1=0.01,\lambda_0=0.001$ and $\Theta=0.1$), many more false discoveries  occur (i.e. increased Hamming distance for the noise coefficients) due to the fact that even very small noisy effects can be assigned to the slab distribution. 


 Dynamic EMVS reconstructs  signal much faster compared to the 1\,000 iterations of Dynamic SSVS. While the MAP trajectory is not as good  in terms of SSE (which is expected from a (local) posterior mode), its  performance is still better than  LASSO, DLM and NGAR. Again, we found the setting  $\lambda_1=0.1,\lambda_0=0.01$ and $\Theta=0.1$ to work  well and we can clearly see dividends of  dynamic weighting relative to the random-walk prior. 
Comparing the results with DLM and LASSO, $DSS$ showcases the benefits of combining dynamics and shrinkage, since DLM (only dynamics) and LASSO (only shrinkage) underperform significantly.
{Regarding timing comparisons with NGAR, we need to point out that  NGAR was run with 2,000  iterations and 1,000 burnin, which we found to be sufficient for obtaining satisfactory results. }

\begin{table}[!t]
\centering
\caption{\small Performance evaluation of $DSS$, LASSO  and DLM on the simulated example with $p=200$. The results are split for the signal parameters ($x_{1:4}$) and noise parameters ($x_{5:50}$). RW stands for a random-walk variant with $\theta_{tj}=\Theta$. EMVS calculations are initialized at previous solutions (warm starts), as designated by the  $^\star$ sign. FD stands for False Discoveries (noise variables which were identified as active at least once), FN stands for False Nondiscoveries (number of true variables which were removed from the model at all time points), DIM is estimated number of covariates identified as active at least once. }
\vspace{0.5cm}
\label{tab_synth2}
\scalebox{0.75}{\begin{tabular}{l l l  r  rr   rr  rr     r r r }
\hline\hline
                    &      &  &    & \multicolumn{2}{c}{$x_{1:50}$}   & \multicolumn{2}{c}{$x_{1:4}$}   & \multicolumn{2}{c}{$x_{5:50}$} & &&\\
$p=200$       &       &  & Time (s) &   SSE         & Ham.        & SSE         & Ham.       & SSE         & Ham.      &  FD & FN & DIM \\ \hline
LASSO            &&  & 29.8&1760.2&395&1744.7&197.1&15.5&197.9&34.1&0.1&38  \\
\multicolumn{3}{l}{Dynamic EMVS (Gaussian) }    &                        &   &                        &   &            &       &&      \\
\hline
$\lambda_1=.1$ && $\Theta=1$ (DLM) 			& 16.2&2253.3&19708&2209.9&108&43.4&19600&196&0&200\\
$\lambda_1=.1$ & $\lambda_0=.01$ & $\Theta=.99^\star$  & 113.5&555.9&580.6&479&91.6&76.8&489&26.2&0.2&30\\
$\lambda_1=.1$ & $\lambda_0=.01$ &  $\Theta=.9^\star$ &50.9&469.8&153.7&422.4&85.2&47.4&68.5&4.4&0.3&8.1\\
$\lambda_1=.1$ & $\lambda_0=.01$ &  $\Theta=.5^\star$ &43.8&500.3&154.5&447.3&89.5&53&65&3.9&0.3&7.6\\
$\lambda_1=.1$ & $\lambda_0=.01$ & $\Theta=.1^\star$ & 51.9&534.2&151.1&473.6&94&60.6&57.1&3.4&0.3&7.1\\
$\lambda_1=.1$ &$\lambda_0=.01$&$\Theta=.1^\star$ (RW)  &34.2&550&122.9&502.9&91.3&47.1&31.6&2.7&0.3&6.4\\
\hline\hline
\end{tabular}}
\end{table}

Now, we explore a far more challenging scenario, repeating the example with $p=200$ instead of 50.
The coefficients and data generating process are the same with $p=50$, but now instead of 46 noise regressors, we have 196.
This high regressor redundancy rate is representative of the ``$p>>n$" paradigm (``$p>>T$" for time series data) and can test the limits of any sparsity inducing procedure.
Note that the number of coefficients to estimate is $p\times T=20\,000$. This is a very challenging scenario where  we will be able to truly evaluate the efficacy of $DSS$  when there is a large number of predictors with sparse signals.

The results are collated in Table~\ref{tab_synth2}.  Across considered hyper-parameter settings, Dynamic EMVS  does extremely well also for $p=200$, dramatically reducing SSE over DLM and LASSO.
We have performed the warm start strategy for $\Theta\in\{1,0.99,0.9,0.5,0.1\}$. Moving from $\Theta=1$ to $\Theta=0.99$ already yields considerable improvements in terms of separating the signal  from noise. Reducing $\Theta$ even further, one obtains  reduced  Hamming distance for redundant covariates, i.e.  noise  is being absorbed inside the spike. 
While LASSO does perform well in terms of  the Hamming distance, it does not do so well in terms of SSE.
Because LASSO lacks dynamics, the pattern of sparsity is not smooth over time, leading to erratic coefficient evolutions.
Because of the smooth nature of its sparsity, $DSS$ harnesses the dynamics to discern signal from noise, improving in both SSE and Hamming distance.
We have also added global variable selection performance metrics: False Discoveries (FD), False Non-discoveries (FN) and Dimension (DIM).
FD is defined as the number of noise variables (out of the 196 noise predictors) which were included in the model at least once during the time $t=1,\dots, 100$.
Similarly, FN is the number of true signal variables (out of the 4 true predictors) which were left out of the model  at all times points $t=1,\dots, 100$. Finally, DIM is the estimated number of predictors
identified as active at least once. We can see that LASSO includes too many noise variables, while Dynamic EMVS effectively reduces the dimensionality. 
In this vein, Dynamic EMVS can be regarded as a fast screening rule which can be followed by a more thorough analysis using only a smaller subset of more meaningful predictors.

\section{Macroeconomic Data}\label{sec:high}

We further illustrate the effectiveness of $DSS$ through a  macroeconomic dataset analyzed  in \cite{kalli_griffin}. The data consists of quarterly measurements of the US inflation (the personal consumption expenditure (PCE) deflator) and 31 potential explanatory variables including previous lags of inflation, activity variables (such as economic growth rate or output gap), unemployment rate etc. The dataset was obtained from the FRED (Federal Reserve Bank of St. Louis) economic database, the consumer survey database of the University of Michigan, the Federal Reserve Bank of Philadelphia, and the Institute of Supply Management (see \cite{kalli_griffin} and Figure \ref{inflation_data} for more details).


 {For this example, we will treat the US inflation (Figure~\ref{fig_inf}) as the dependent variable and infer its sources of covariation  with the other variables.
 Inflation forecasting has been of substantial interest within the macroeconomic literature \citep{stock1999forecasting,KoopKorobilis2012,groen,kalli_griffin,wright2009forecasting,stock2007has}. 
 The primary goal of our analysis is to retrospectively identify underlying economic indicators that are pertinent to inflation. In addition, we evaluate the one-step-ahead forecasting ability of our models.
Because the economy is dynamic, it is natural to assume certain indicators to be effective during a certain period but useless during another.
For example, one might expect financial indicators to play a significant role in the economy during a financial crisis.
 The necessity of capturing these dynamic trends have been discussed and explored in \cite{stock2007has}, who point out that forecasting inflation has become harder due to trend cycles and dynamic volatility processes.
Unlike \cite{stock2007has}, where they model this trend via an unobserved component trend-cycle, we explore this characteristic through dynamic sparsity in the covariate space. Similar inflation forecasting applications were considered by  many other authors including \cite{kalli_griffin,  KoopKorobilis2012}.
The dataset  has a long span (from  the second quarter of 1965 to the first quarter of 2011), capturing  oil shocks  in 1973 and 1979, mild recession in 1990, the dot-com bubble as well as  the Great Recession in  2007-2009.
We would expect our forecasting model to change during these periods.}



 \begin{figure}[t!]
\begin{center}
\includegraphics[width=4.5cm,height=4.5cm]{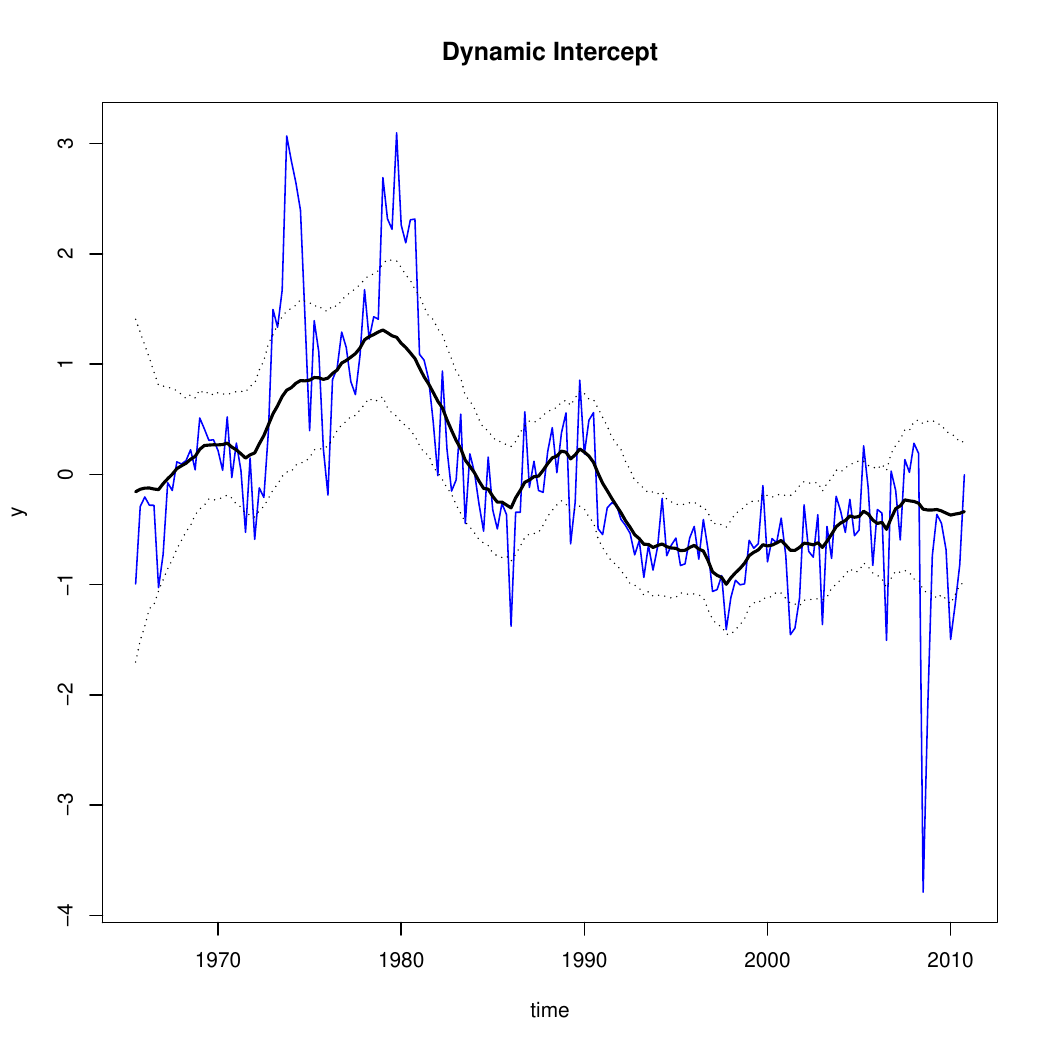}
\includegraphics[width=4.5cm,height=4.5cm]{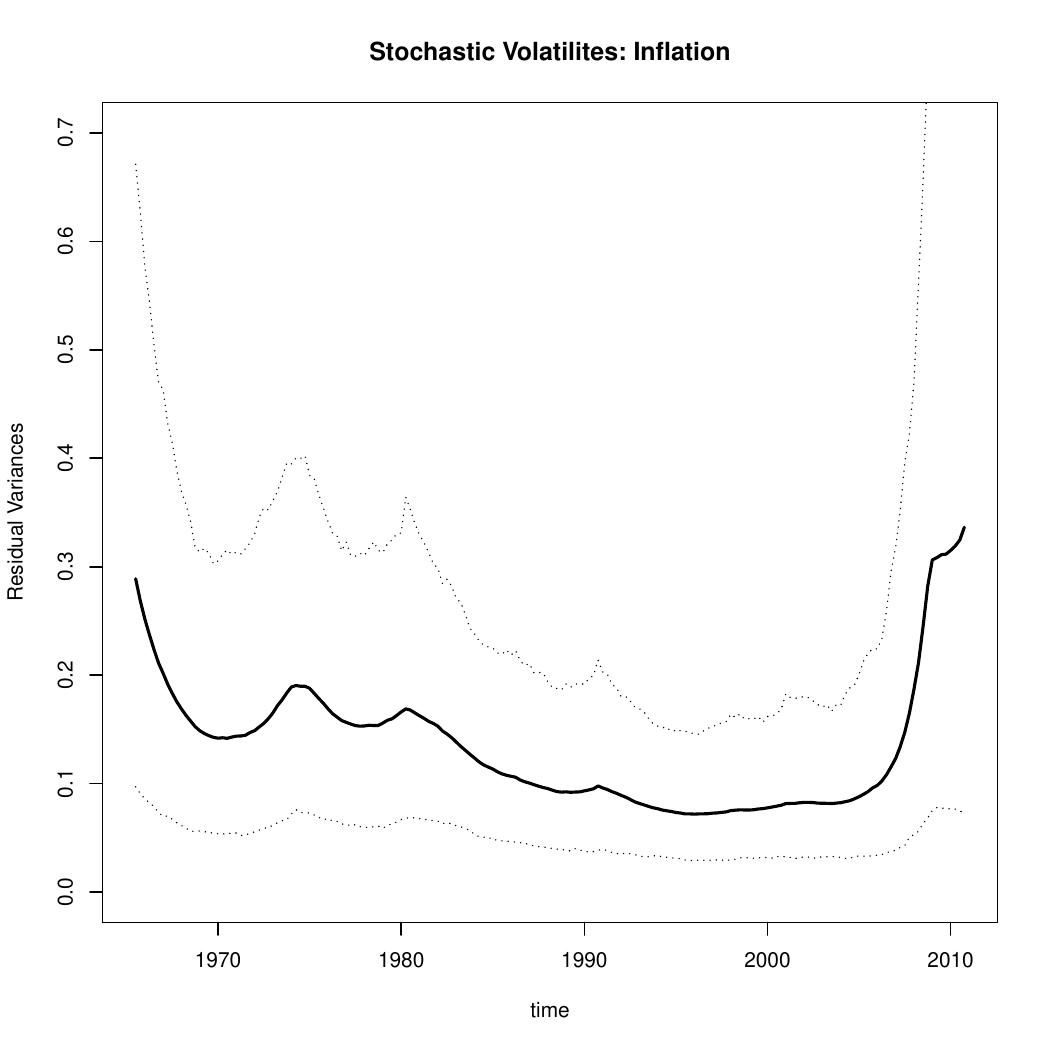}
\includegraphics[width=4.5cm,height=4.5cm]{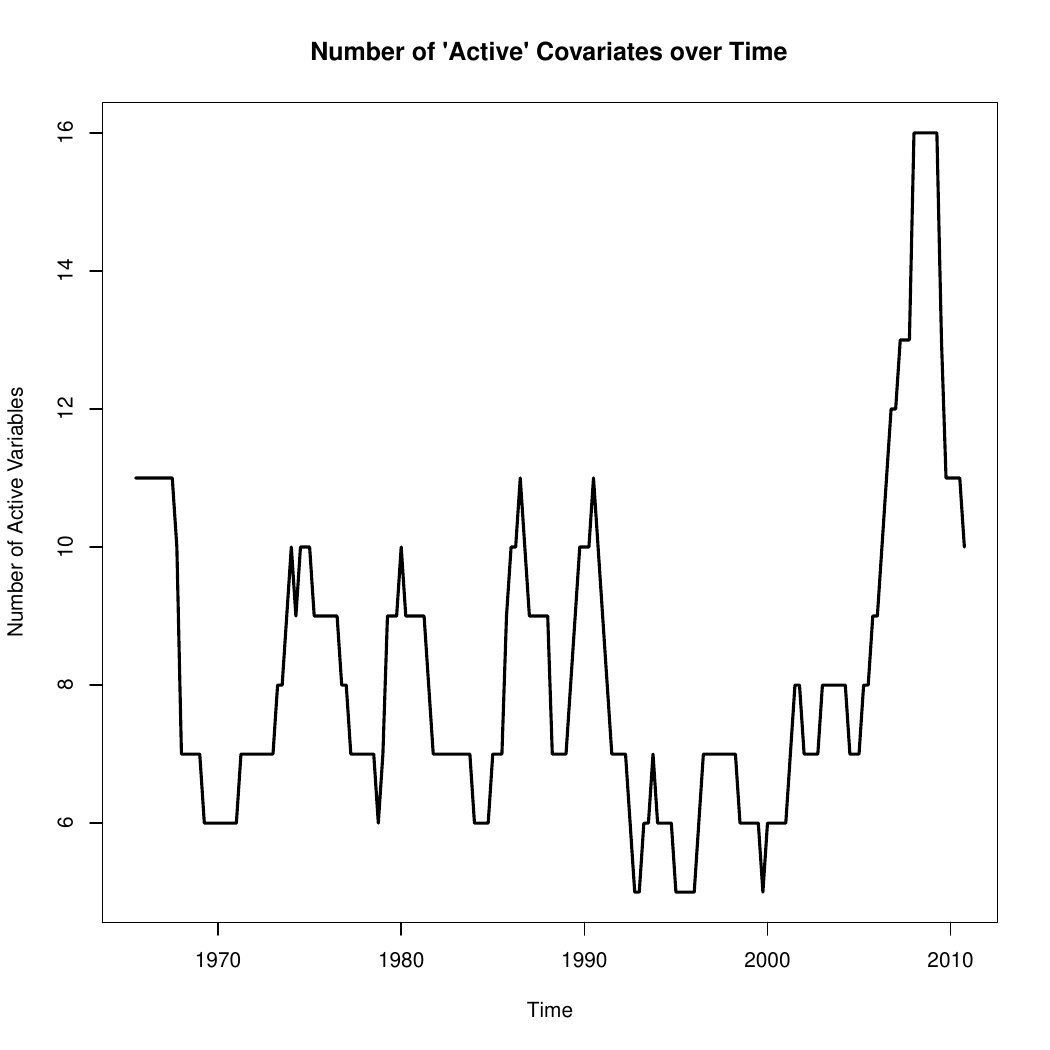}
\end{center}
\caption{\small (Left) Observed quarterly US inflation (recentered and rescaled) from 1965/2 to 2011/1 (blue time series). The black lines are the posterior mean of the dynamic intercept together with $95\%$ pointwise credible bands.  
(Middle)  Posterior means of residual variances (together with $95\%$ pointwise credible bands) under the discount stochastic volatility model.  (Right) Number of covariates with a posterior inclusion probability above 0.5}\label{fig_inf}
\end{figure}

To evaluate our method, we first measure its forecasting ability by conducting one month ahead point forecasts and computing the  mean squared cumulative forecast error (MSFE).
The analysis is done by cutting the data in half, training the methods using the first half of the data from 1965/2 to 1987/3.
We then sequentially update the forecasts through the second half from 1987/7 to 2011/1, updating and rerunning estimation to produce 1-month ahead forecasts every time we observe a new data at each $t$ (using data from $1{:}t$ to forecast $t+1$ for $t=1{:}T-1$, where $t=1$ is 1965/2, and $t=T-1$ is 2010/4). Namely, we refit the full  MCMC analysis (with $500$ MCMC iterations and $100$ burn-in) of each model to define the posterior based on $\y_{1:t}$ and to obtain forecasts $f_{t+1}=\x_{t+1}'\bm a_{t+1}$ (using the notation from Section \ref{sec:dynamic_SSVS}). We use conditional forecast densities as explained in the next paragraph. For Dynamic EMVS, we replace posterior means with modes in the forecast calculations.
Out-of-sample forecasting is thus conducted and evaluated in a way that {no future information is used to analyze and evaluate the results}.
At the end of the analysis (2011/1), we estimate the retrospective coefficients throughout 1965/2 to 2011/1 in order to infer on the recovered signals, given all the data used in the analysis.
As with Section~\ref{sec:high}, we compare $DSS$ against the full DLM, null DLM (only intercept)  and LASSO (expanding window).
For $DSS$, we use multiple hyperparameters   to discern which combination produces best forecasts.
{The  initial condition for the SV variance is  $1/v_{0}\sim G(n_0/2,d_0/2)$ with $n_0=1$ and $d_0=1$. The discount factor is set to $0.9$.}

{ 
On the comparison of forecast ability  (Table \ref{tab:forecast}), it is curious that the null DLM model  actually performs better than the full DLM model. While the predictors have some explanatory power, the full DLM model is unable 
to tease out the signal and badly overfits, clearly hurting forecasts.
Dynamic EMVS (SSVS) is able to improve on the null model by capitalizing on the (albeit weak) signal hidden in the predictors.
$DSS$  thus significantly improves over the full DLM, which is unsurprising since the full DLM model is plagued with overfitting and false discoveries.
A surprising result is that Dynamic EMVS outperforms Dynamic SSVS in this example. This can be explained by the fact that  the posterior MAP trajectory is  sparser (shrunk towards zero more aggressively) and smoother than the posterior mean (which performs model averaging).  The added benefit of smoothing (in addition to sparsity) can be seen by comparing Dynamic EMVS to the  LASSO, which  achieves shrinkage, but does not capture the dynamics of signals.
The fact that the forecasting results of LASSO and Dynamic EMVS are similar suggests that the gains from shrinkage are similar.
$DSS$, capturing and capitalizing on both dynamics and shrinkage, achieves  forecast gains relative to just shrinkage (LASSO) or just dynamics (DLM).
The improved performance of Dynamic EMVS  is reassuring in the sense that the faster implementation can still yield  point forecasts that are very similar, if not better, 
to the ones obtained from the more time consuming MCMC.  

We also compare forecasting performance in terms of a metric that involves the entire predictive distribution (not just its mean), namely the sum of log-predictive likelihoods evaluated at observed values $y_{t+1}$ \citep{KoopKorobilis2012}. 
We use a conditional variant of the predictive likelihood $\pi(y_{t+1}\C \y_{1:t},\wh v_{t+1},\wh \bg_{t+1})$, where we condition on the posterior mean of the inclusion indicators and variances, i.e. $\wh \gamma_{t+1j}=\E[\gamma_{t+1j}\C \y_{1:t}]$ and $\wh v_{t+1}=\E[v_{t+1}\C \y_{1: t},\wh\b_{1:t}]$  with $\wh\beta_{tj}=\E[\beta_{tj}\C\y_{1:t}]$.

}

\begin{table}
\scalebox{0.8}{\begin{tabular}{l  |c c || l  |c c }
\hline\hline
 & MSFE & MAFE  &   & MSFE & MAFE\\
\hline
\bf Dynamic SSVS &    &  &\bf Dynamic EMVS &    &  \\
$\Theta=1,\lambda_1=0.01$ (DLM) Intercept               & 38.9   &37.03& LASSO                                                                         &        32.97    &34.38 \\
$\Theta=1,\lambda_1=0.01$ (DLM) Full  	                   & 51.01   &45.56 &$\Theta=1,\lambda_1=0.01$ (DLM) Full                       &46.59  &45.25\\
$\Theta=0.1,\lambda_1=0.01,\lambda_0=0.001$         & 36.32   &34.28   &$\Theta=0.1,\lambda_1=0.01,\lambda_0=0.001$         &31.61    &32.73\\
$\Theta=0.5,\lambda_1=0.01,\lambda_0=0.001$         & 41.31   &37.67 &$\Theta=0.5,\lambda_1=0.01,\lambda_0=0.001$         &31.69    &32.83\\
$\Theta=0.5,\lambda_1=0.01,\lambda_0=0.001$ (RV) & 42.69   &42.36 &$\Theta=0.5,\lambda_1=0.01,\lambda_0=0.001$ (RV) &41.37  &43.34\\
\hline\hline
\end{tabular}}
\caption{\small Mean squared (absolute) one-step-ahead forecast errors for Dynamic SSVS and Dynamic EMVS. RW stands for the random walk prior variant, ``Intercept" stands for a model with only a dynamic intercept and ``Full" 
stands for a full DLM model with no selection shrinkage. }\label{tab:forecast}
\end{table}

We now deploy $DSS$ priors using Dynamic SSVS ($2\,000$ posterior samples with a $500$ burn-in period) on the entire dataset to recover the series of regression coefficients. 
In order to capture   more subtle signals, we set $\lambda_1=0.01,\lambda_0=0.001$ and $\Theta=0.5$ so that the intersection point between the stationary spike-and-slab densities (i.e. our perceived selection threshold for practical significance) is $0.05$. We assume the discount stochastic volatility model with $\delta=0.9$ and $n_0=d_0=1$ and include an intercept term which is devoid of shrinkage (i.e. the intercept is  in the slab distribution at all times).
Out of the $31$ indicators (not including the intercept) only $12$ (\texttt{GDP, PCE, GPI, RGEGI,IMGS,NFP,M2,ENERGY,FOOD,MATERIALS,OUTPUT GAP, GS10}) had their posterior inclusion probability $\P[\gamma_{tj}=1\C \bm y_{1:T}]$ above $0.5$ at least $10$ times throughout the $182$ observations. We plot the number of ``active" covariates (i.e. with a posterior inclusion probability above 0.5) over time in Figure \ref{fig_inf} on the right. 
Note that the definition of an active coefficient is  ultimately tied to our choice of hyper-parameters and our practical significance threshold $0.05$.   More strict shrinkage priors would lead to fewer active covariates.
This plot presents evidence that the forecasting model is changing over time. More predictors are seen to contribute around the oil shocks, around   $1990$ and during the financial crisis, mirroring the inflation changes  during these periods.
A similar conclusion was also found in \cite{KoopKorobilis2012}.

The coefficient evolutions of the top $9$ predictors relevant for inflation are plotted in Figure  \ref{fig:evolutions}. Many of these predictors were also identified by \cite{kalli_griffin} with very similar estimated coefficient trajectories.
In particular, the explanatory power of \texttt{IMGS} (import of goods and services) growth is seen to peak around the oil shocks in the 1970's and around late 2000's. 
The two largest signals are the production growth indicator \texttt{GDP} and the consumption growth \texttt{PCE} with their coefficients largely stable with a marked increase during crisis in the late 2000's.
Interestingly, conventional indices of the labor market (including unemployment) are not recovered  with  a very strong signal (see Figure \ref{fig:evolutions2} in the Appendix).
The characteristics of these coefficients demonstrate how $DSS$ is successful in dynamically shrinking coefficients to zero during regime changes (recessions) as we would expect to happen. 
It is worth noting that the credible intervals absorb zero, indicating inherent sparsity/low signal of the contributing predictors. This is in line with earlier conclusions reached by \cite{KoopKorobilis2012} who found only very
 few predictors to be  relevant for (one-step ahead) inflation forecasting using dynamic model averaging.  From the plot of the evolution of dynamic intercept in Figure \ref{fig_inf} (on the left), we can see that the intercept itself is nicely tracking the data, leaving room for the other predictors to explain the two shocks in the 1970's and the drop around late 2000's. The plot of the estimated variances (Figure \ref{fig_inf} on the right) 
 identifies these structural shocks with increased estimated volatility, especially in the late 2000's. This companion plot indicates that structural changes affect both mean and variance.
 

 \begin{figure}[t!]
\begin{center}
\includegraphics[width=0.9\textwidth]{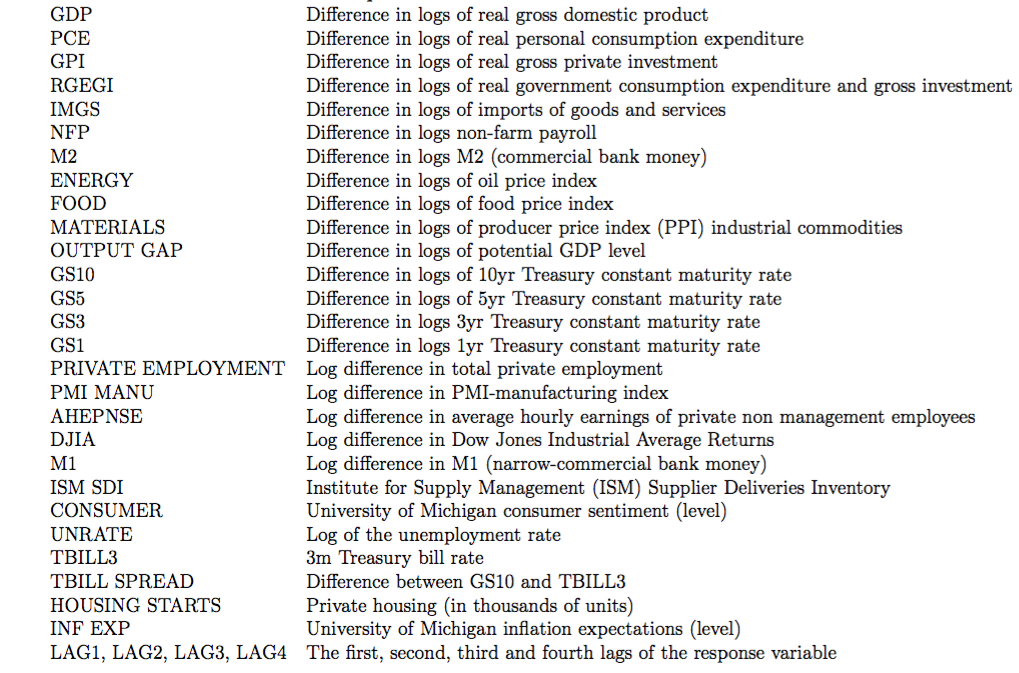}
\end{center}
\caption{A list of potential predictors for inflation forecasting (see \cite{kalli_griffin} for more details).}\label{inflation_data}
\end{figure}

\begin{figure}
\includegraphics[width=5cm,height=2cm]{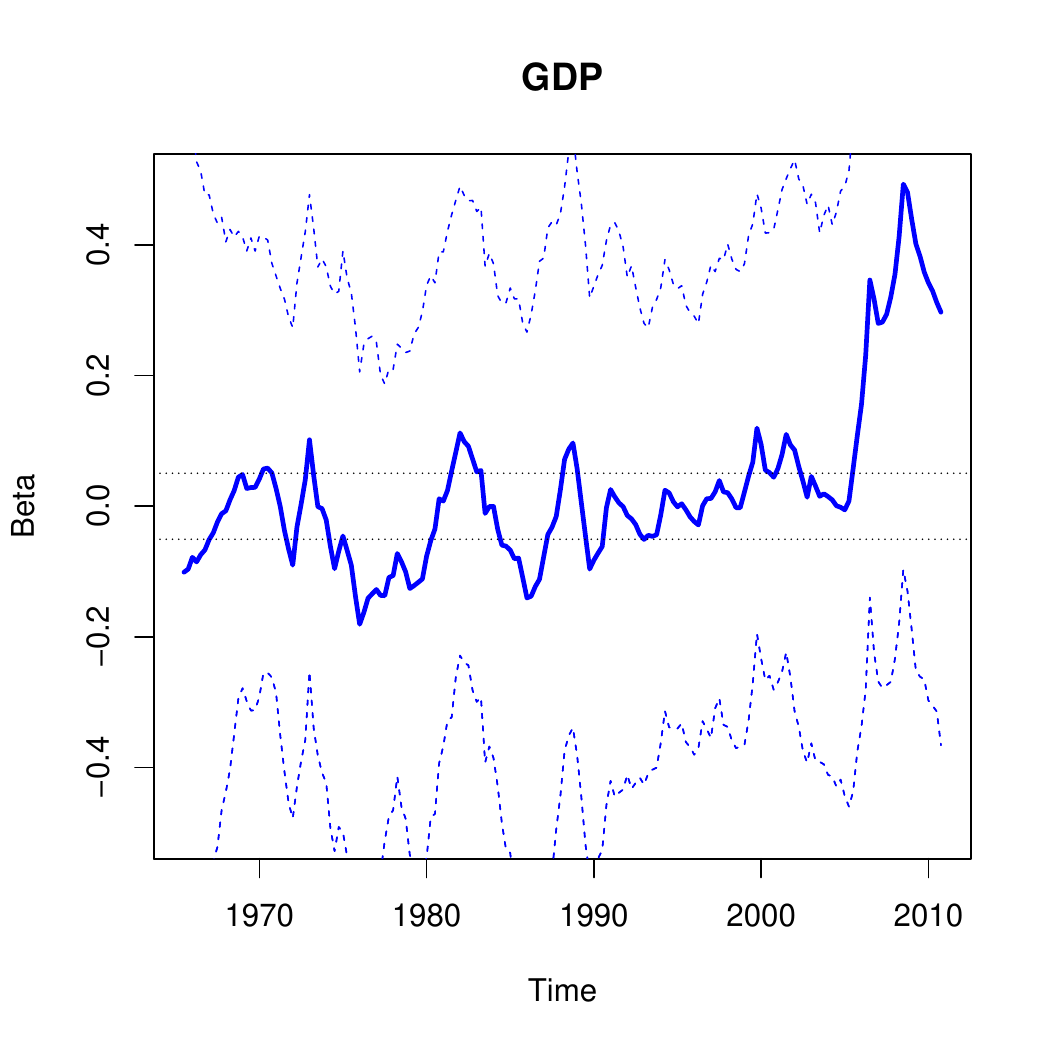}\includegraphics[width=5cm,height=2cm]{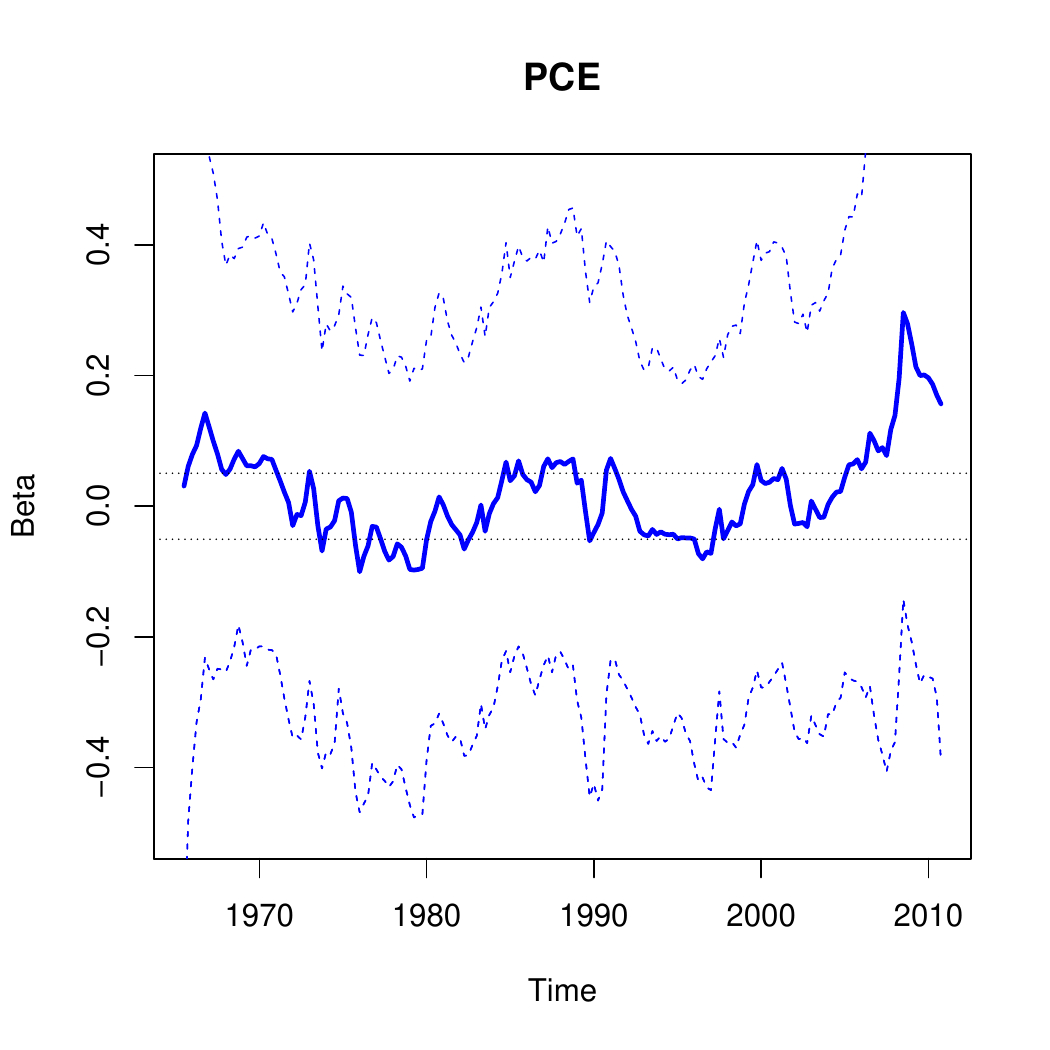}\includegraphics[width=5cm,height=2cm]{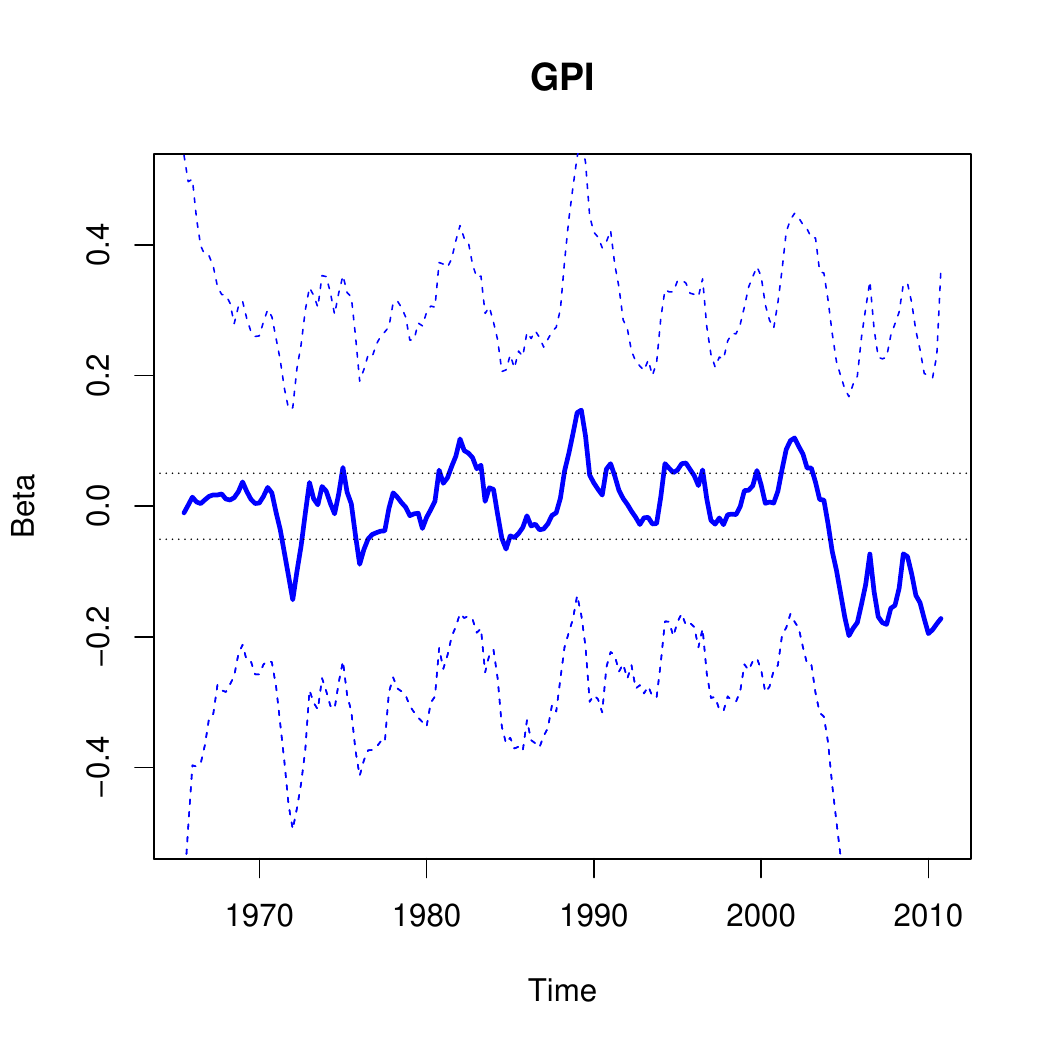}
\includegraphics[width=5cm,height=2cm]{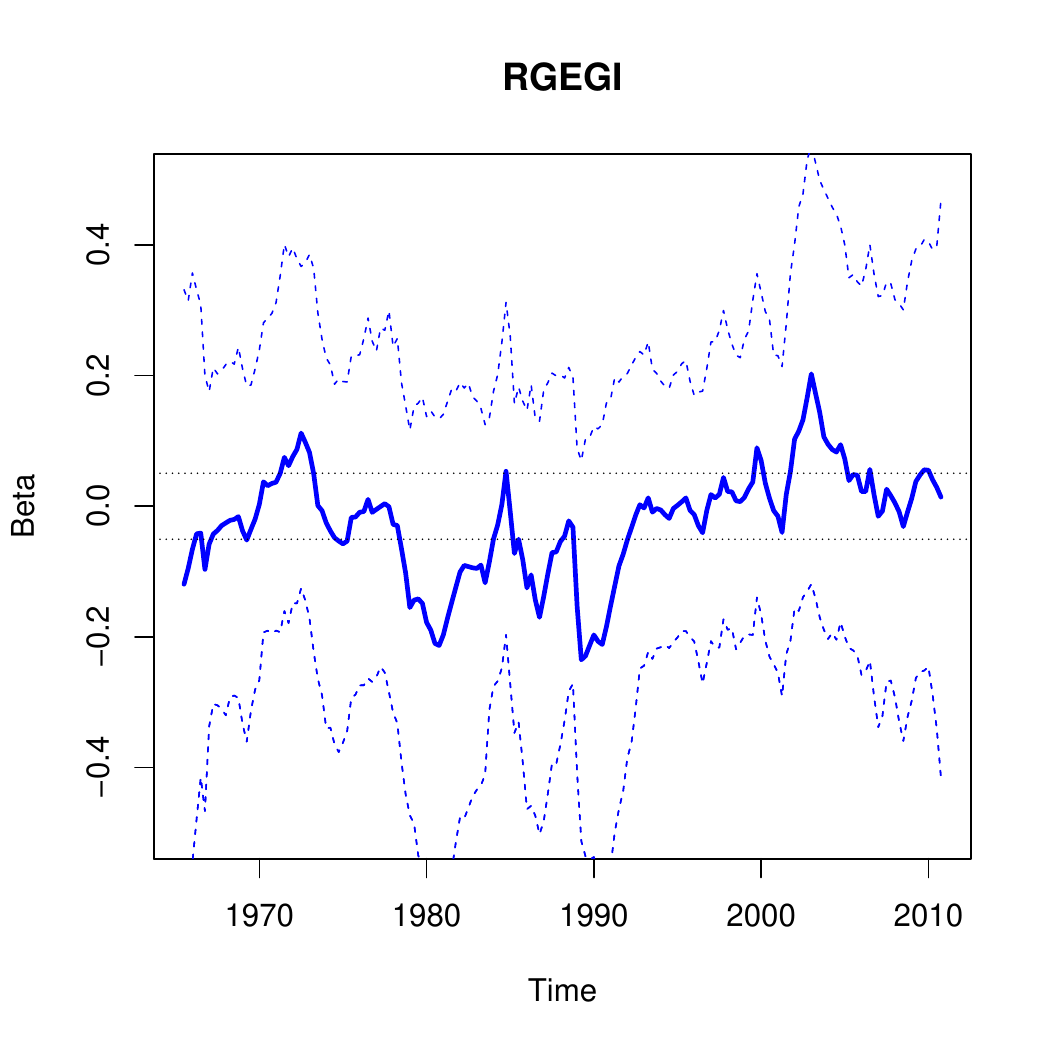}\includegraphics[width=5cm,height=2cm]{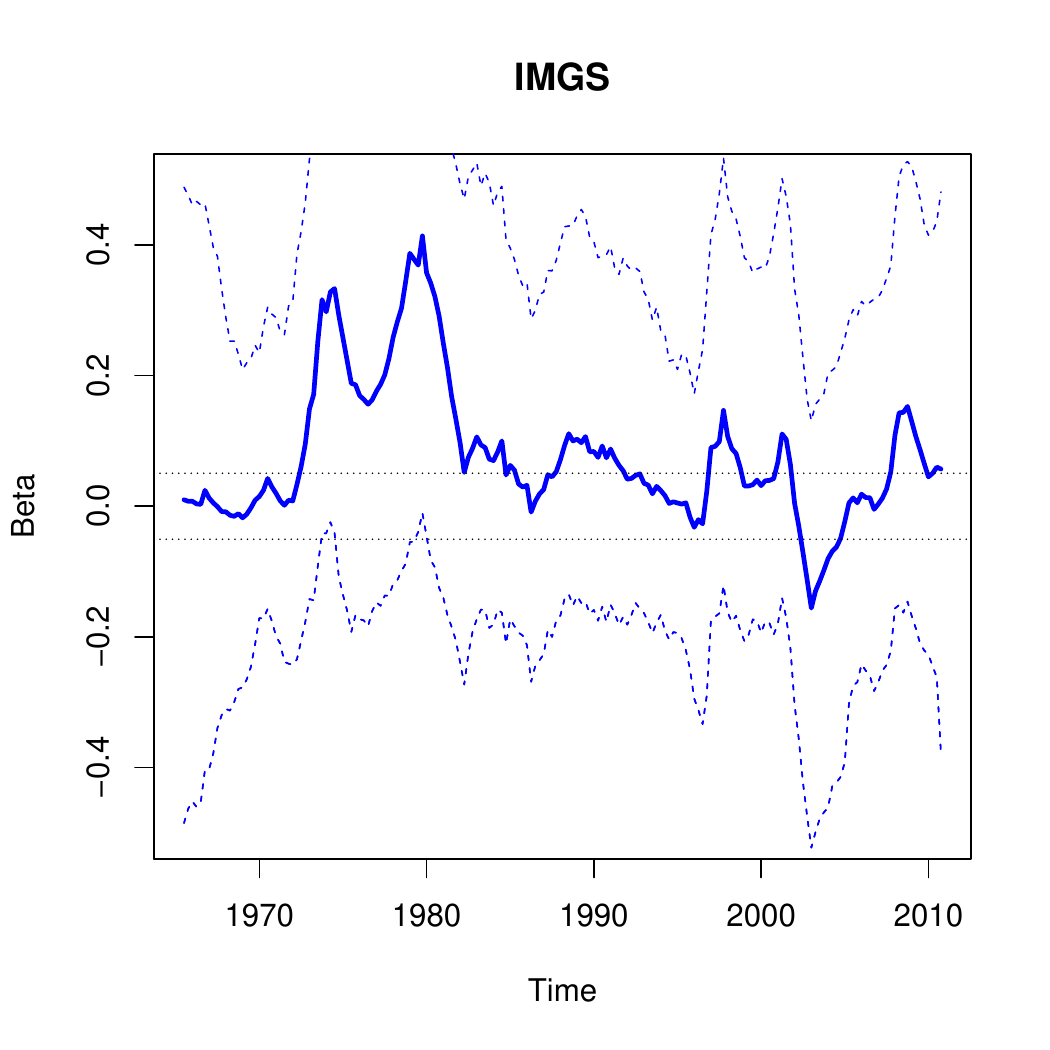}\includegraphics[width=5cm,height=2cm]{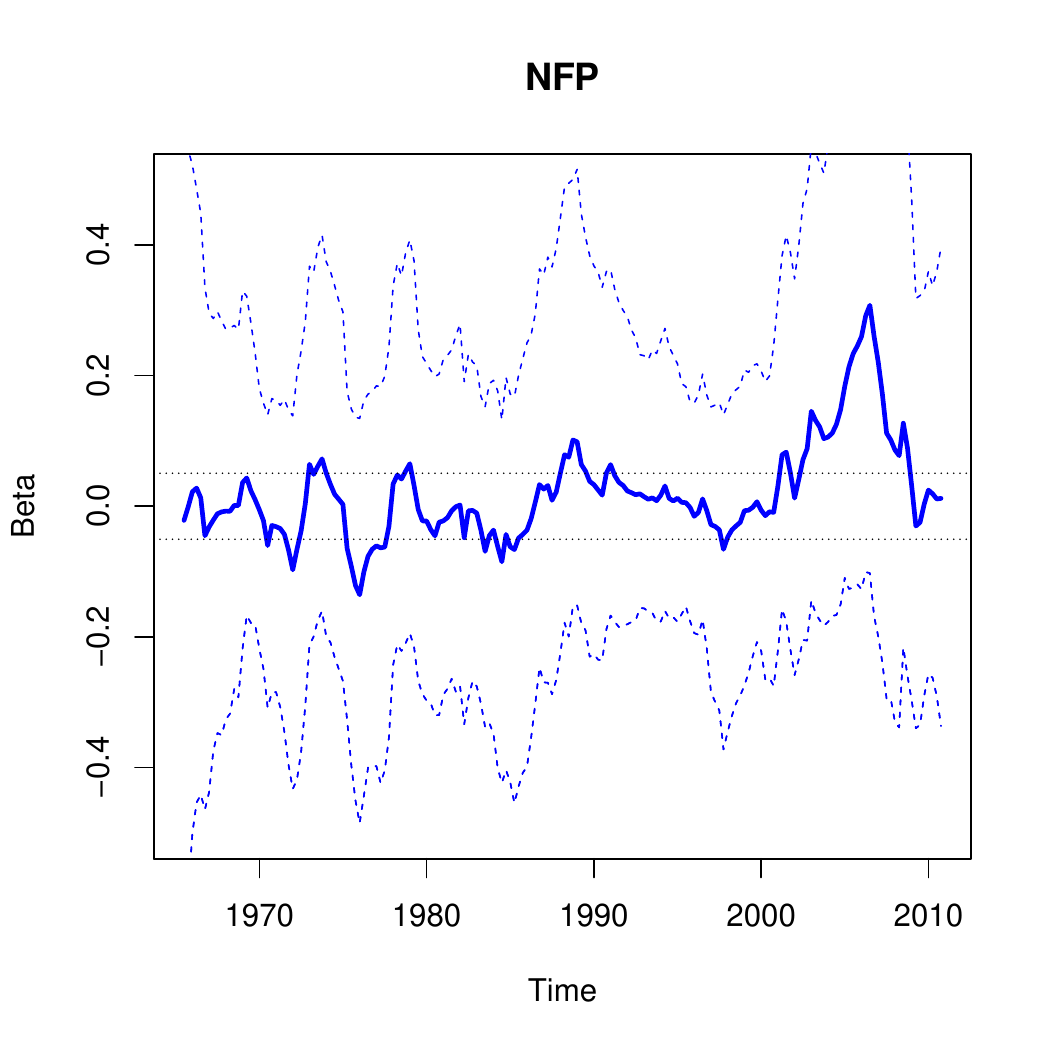}
\includegraphics[width=5cm,height=2cm]{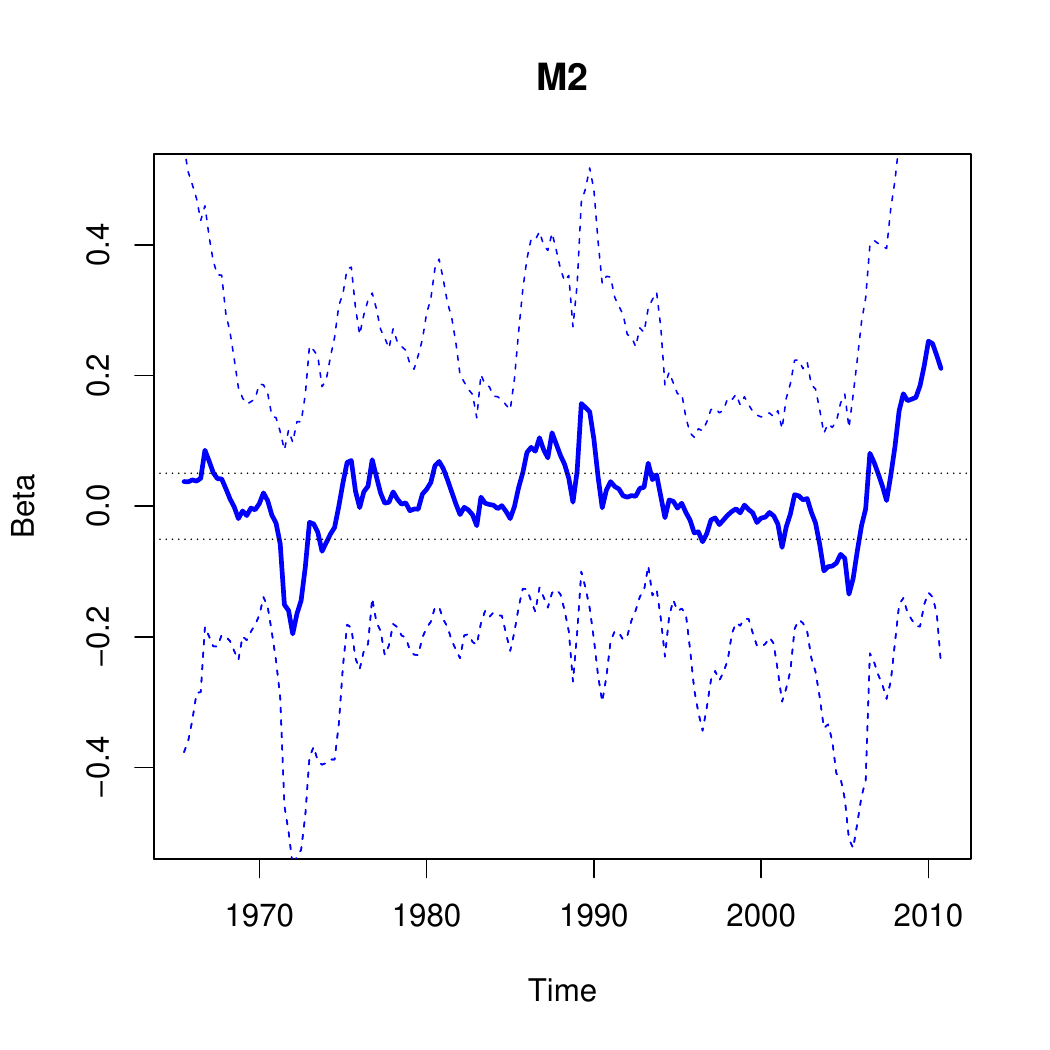}\includegraphics[width=5cm,height=2cm]{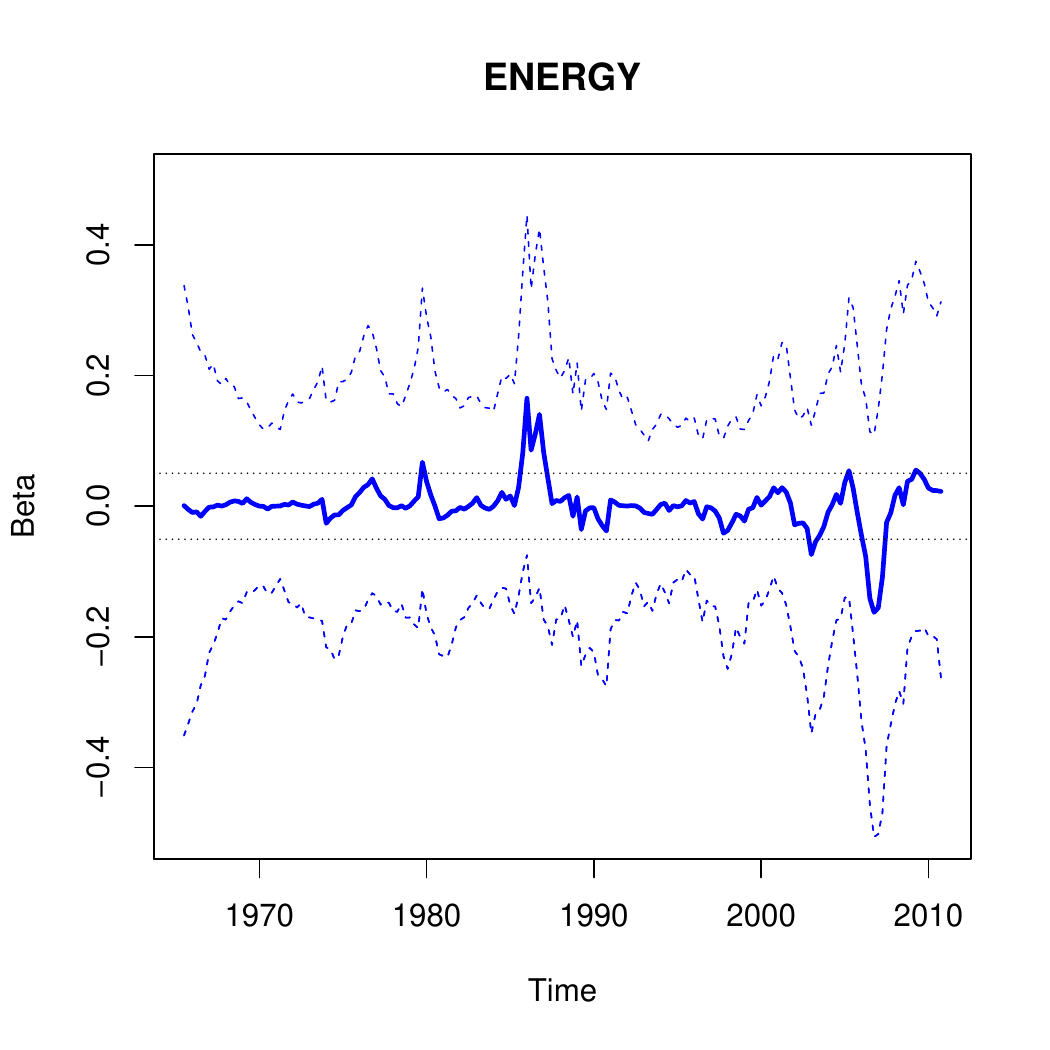}\includegraphics[width=5cm,height=2cm]{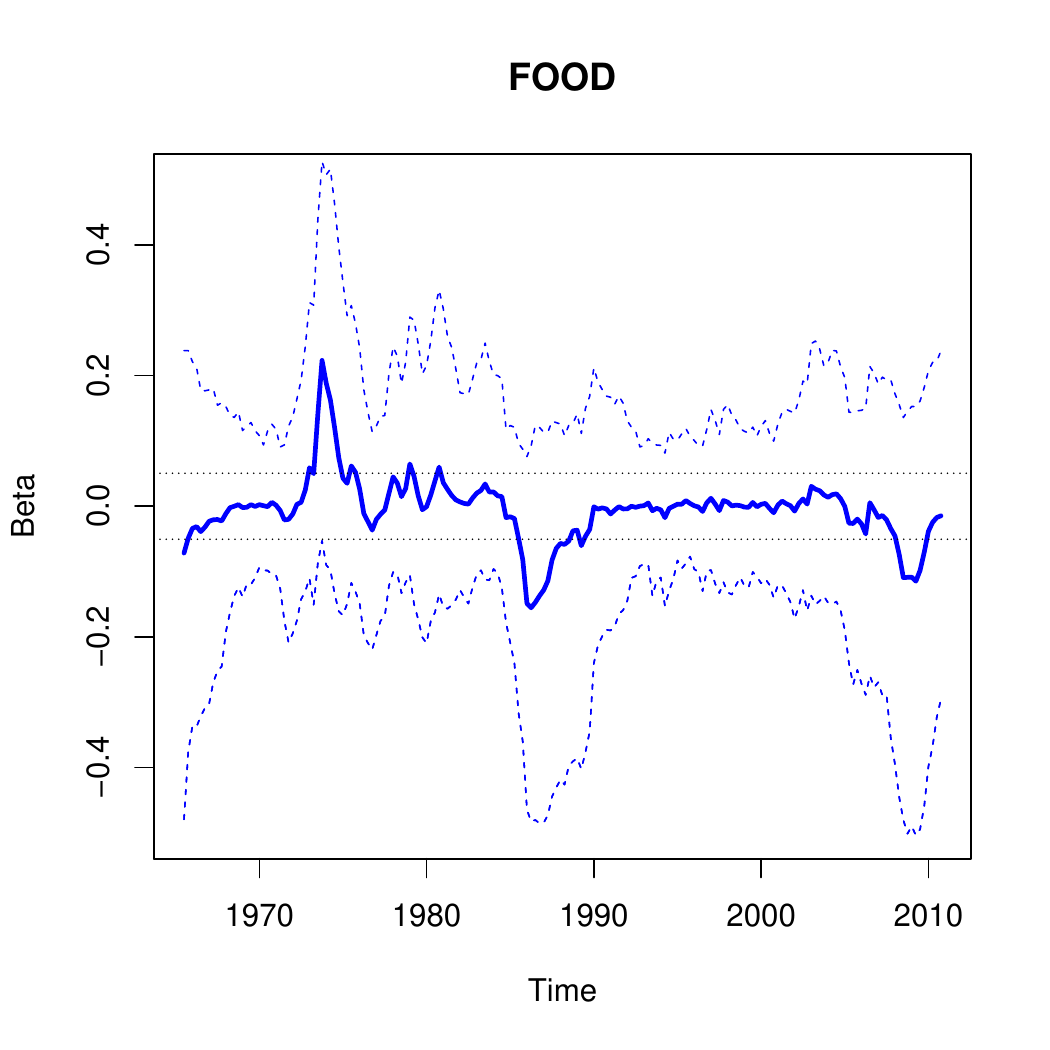}
\caption{Estimated coefficient evolutions (posterior means) of $9$ top predictors together with $95\%$ point-wise credible bands (dotted lines) using Dynamic SSVS. The horizontal lines correspond to the selection threshold $0.05$. }\label{fig:evolutions}
\end{figure}

Through this macroeconomic example, we were able to demonstrate the efficacy of $DSS$, in terms of forecast ability and interpretability, on a real, topical macroeconomic dataset.
With $DSS$, we were able to significantly outperform standard methods and successfully recover interesting signals.
MCMC benefits from the  added uncertainty statements as well as  full posterior/predictive distributions, leading to potentially better informed forecasts/decisions.


 \section{Discussion}\label{sec:dis}
 This paper introduces a new class of dynamic shrinkage priors, where  the stationary distribution is fully known and characterized by spike-and-slab marginals.  A key to obtaining this stabilizing property is the careful hierarchical construction of adaptive mixing weights that allows them  to depend on the lagged value of the process, thereby reflecting sparsity of past coefficients. 
 We propose  various versions of dynamic spike-and-slab ($DSS$) priors, using Laplace/Gaussian spike/slab distributions. For implementation, we resort to both optimization as well as posterior sampling.
 For Gaussian $DSS$ prior variants, we develop a Dynamic SSVS MCMC algorithm for posterior sampling.  For fast MAP smoothing, we develop a complementary procedure called Dynamic EMVS which can quickly  glean into the signal structure. For the Laplace spike variant,
  we implement a one-step-late EM algorithm for MAP estimation which iterates over one-site closed-form thresholding rules. Through simulation and a macroeconomic dataset, we demonstrate that $DSS$ are well suited for the dual purpose of dynamic variable selection (through thresholding to exact zero) and smoothing (through an autoregressive slab process) for forecasting and inferential goals. 
 
{
Many variants and extensions are possible for our $DSS$ prototype constructions. While our development has focused on stationary situations,  our priors can accommodate random walk evolutions (as we point out in our Remark \ref{remark:nonstat}). 
\cite{schotman} argue that  ``There is no need to look at the data from the specific viewpoint of stationarity or nonstationarity. Given the data one can determine which of the two is the most likely." We view stationarity as a modelling assumption which may be suitable for some data sets and less appropriate for others. For instance, \cite{schotman} discovered that for real exchange rate data, stationarity is a posteriori as probable as the random walk hypothesis. 
Recently,  \cite{lopes_mcc_tsay} propose a mixture prior which switches between stationary and non-stationary specifications and enables the quantification of  posterior plausibility of the unit root hypothesis.   An extension of our approach along these lines would be very interesting. We provide both stationary and non-stationary variants for the practitioners to choose from. Another interesting extension will be embedding our $DSS$ priors within the TVP-VAR models  \citep{Cogley2005,Primiceri2005,lopes_mcc_tsay,nakajima_west,
pettenuzzo2016bayesian,gefang2014bayesian,giannone2014short,korobilis2013var,banbura2010large}.

}


An R code is available  from the first author upon request.

\section*{Acknowledgments}
The authors would like to thank the Reviewers and the Associate Editor for providing thorough feedback which lead to substantial improvements of our paper.}


\bibliographystyle{ba}

\clearpage
\setcounter{page}{1}
\begin{center}
{\bf\LARGE Dynamic Variable Selection \\with Spike-and-Slab Process Priors} 

\bigskip
{\Large Veronika Rockova \& Kenichiro McAlinn} 

\bigskip
{\large  Supplementary Material} 

\bigskip\bigskip
\end{center}

\appendix
\section{Additional Plots}
Figure \ref{fig:evolutions2} plots coefficient evolutions for additional variables in the inflation study.
\begin{figure}[!h]
\includegraphics[width=7cm,height=4cm]{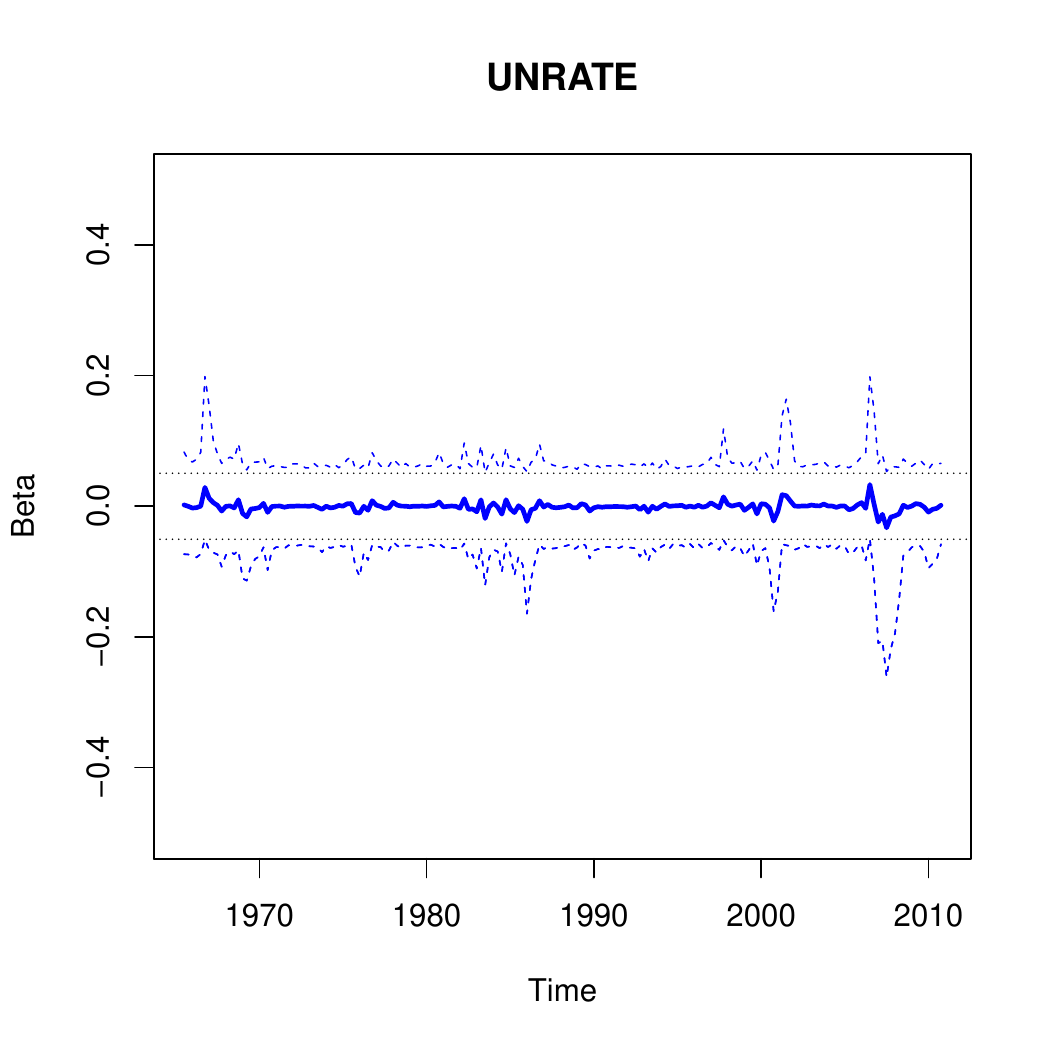}\includegraphics[width=7cm,height=4cm]{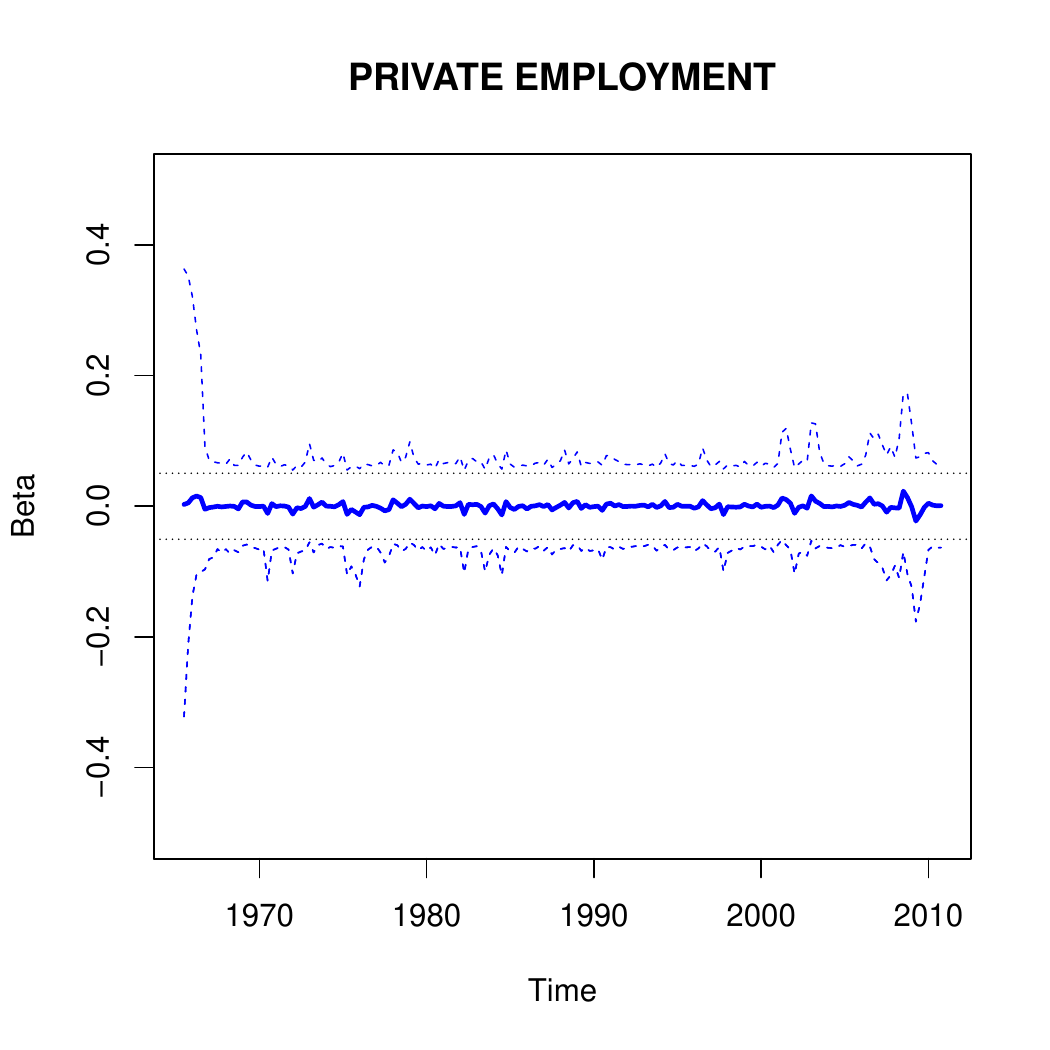}
\includegraphics[width=7cm,height=4cm]{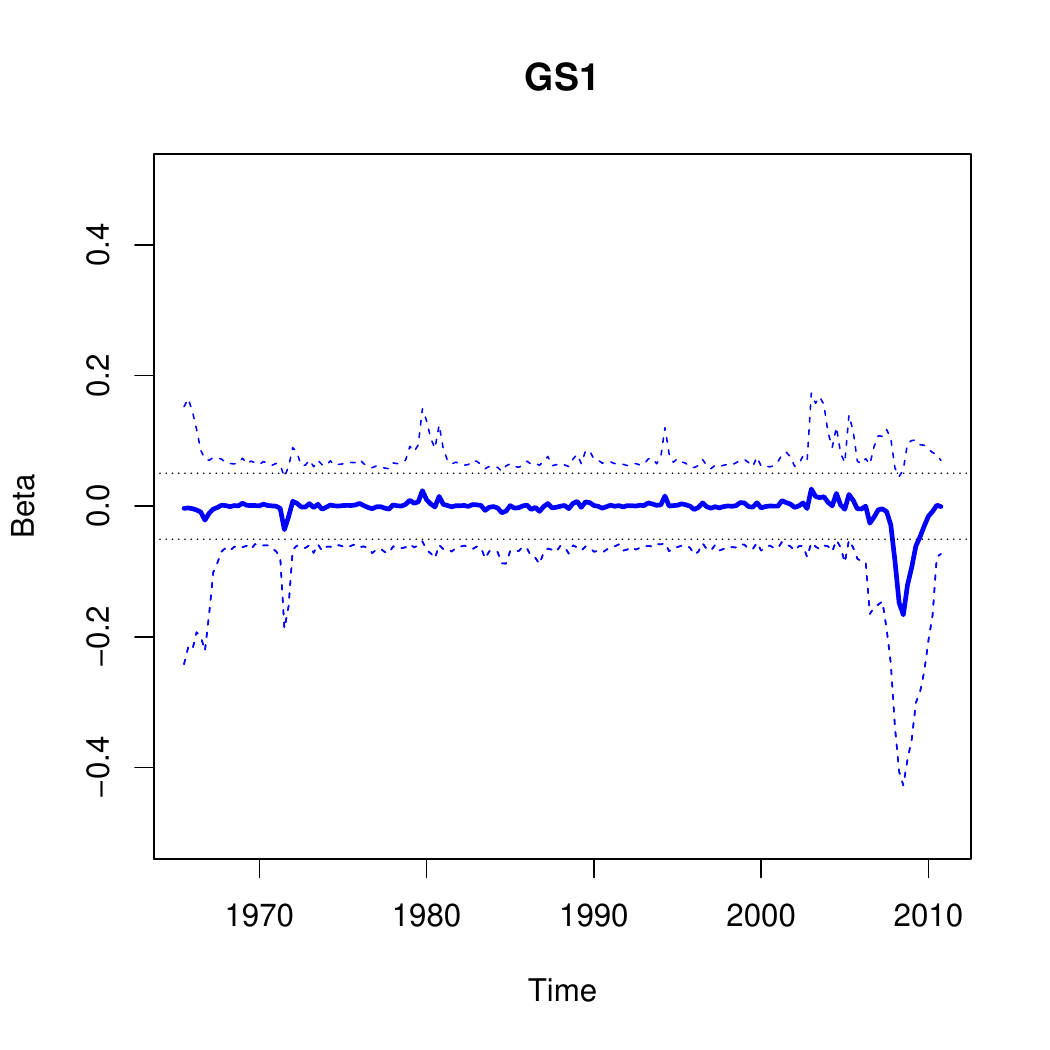} \includegraphics[width=7cm,height=4cm]{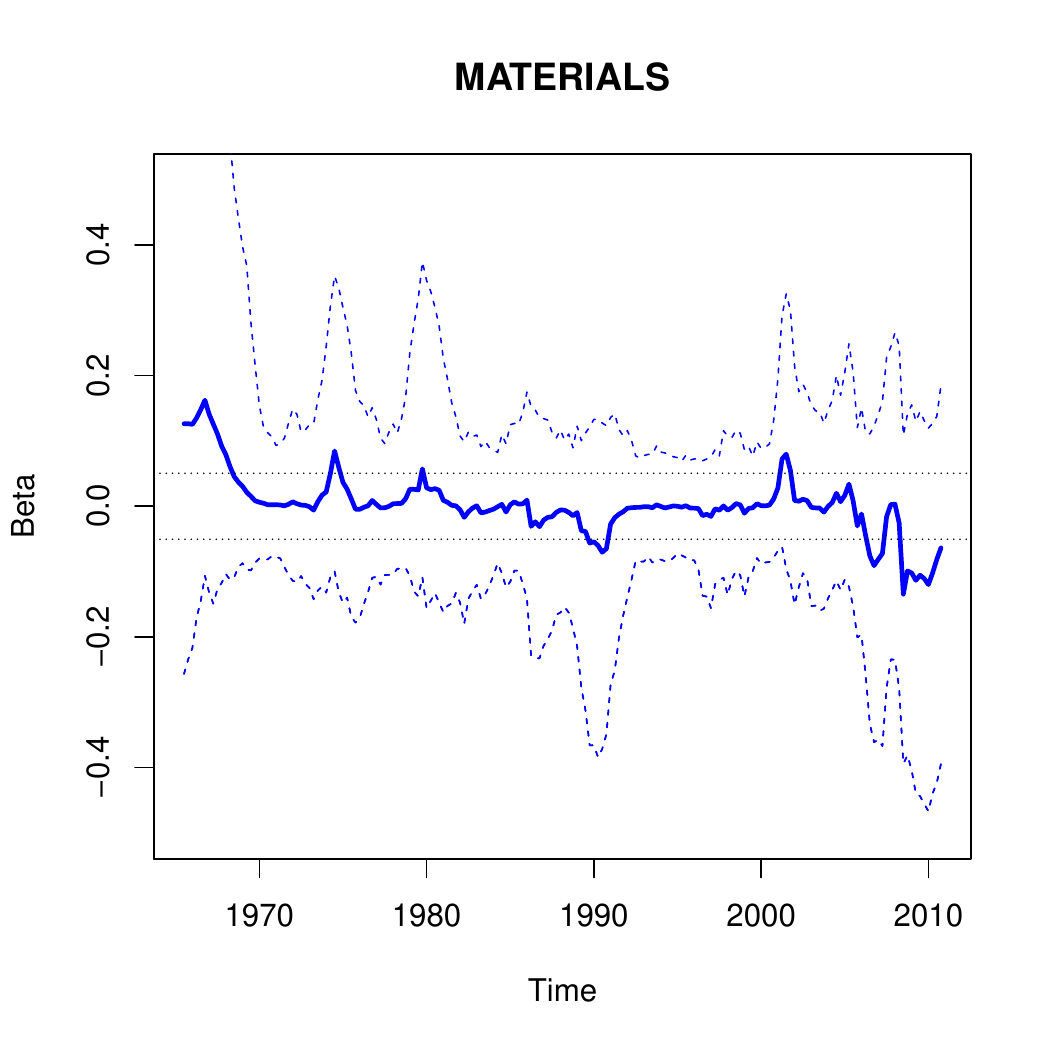}
\caption{Estimated coefficient evolutions (posterior means) together with $95\%$ point-wise credible bands (dotted lines) using Dynamic SSVS.}\label{fig:evolutions2}
\end{figure}

\section{Additional Simulations (Laplace spike) \label{supp:add}}

We perform sensitivity analysis for the Laplace EMVS version by studying the effect of tuning parameters $\phi_1,\lambda_1,\lambda_0$ and $\Theta$. We perform  $10$ new experiments, generating different responses and regressors using a similar set of coefficients as in Section 6. We compare the average sum of squared error (SSE) and average Hamming distance between the MAP estimate $\wh{\B}$ and the true series $\B_0$. Table \ref{tab_aux1} reports average performance metrics over the $10$ experiments.
The performance of $DSS$ is compared to the full DLM model \citep{WestHarrison1997book2} and LASSO.
For $DSS$, we now explore a multitude of combinations of hyper-parameters with $\phi_1=\{0.95,0.98\}$, $\lambda_0=\{0.7,0.9\}$,  $\lambda_1=\{10(1-\phi_1^2),25(1-\phi_1^2)\}$, and $\Theta=\{0.9,0.95,0.98\}$.  All these parameters are in the mild sparsity range; not over-emphasizing the spike.
We initialize the calculation with a zero matrix.

Looking at Table~\ref{tab_synth}, $DSS$ performs better in terms of both SSE and Hamming distance compared to DLM  and LASSO for the majority of the hyperparameters considered. To gain more insights, the table is divided into three blocks: overall performance on $\beta_{1:50}$, active coefficients $\beta_{1:4}$ and noise coefficients $\beta_{5:50}$.
The Hamming distance is reported in percentages. The number of false positives (and thereby the overall Hamming distance) is seen to increase with $\Theta$, where  large values of $\Theta$ have to be compensated with a larger spike parameter $\lambda_0$ to shrink the noisy coefficients to zero. The stationary slab variance $\lambda_1/(1-\phi_1^2)$ also affects variable selection, where larger values increase the selection threshold and produce less false discoveries. The reverse is true for the signal coefficients. In terms of SSE, $DSS$ outperforms the other two methods in estimating  $\beta_{5:50}$, demonstrating great success in suppressing unwanted parameters. 
Regarding the choice of $\phi_1$, larger values seem beneficial for  the signal coefficients, where borrowing more strength enhances stability in predictive periods.

Although there are some settings where $DSS$ underperforms, it is clear that $DSS$ has the potential to greatly improve over existing methods for a wide range of hyperparameters. In terms of SSE, the less well-performing settings are associated with large $\lambda_1$ (e.g. $\lambda_1=25/(1-\phi_1^2)$ and $\phi_1=0.95$), where the slab process is allowed to meander away from the previous value. The lack of stickiness (smaller $\phi_1)$ also provides an opportunity for the spike to threshold. The best performing setting for SSE is seen for a sticky prior ($\phi_1=0.98$) with a small slab variance ($\lambda_1=10/(1-0.98^2)$), a larger spike penalty ($\lambda_0=0.9$) and not excessively large $\Theta$. This combination seems to strike the right balance between selection and shrinkage.

\begin{table}[!t]
\centering
\caption{Performance evaluation of the methods compared for the high-dimensional simulated example with $p=50$. The results are split for the signal parameters ($x_{1:4}$) and noise parameters ($x_{5:50}$). Hamming distance is in percentages. Number in brackets for $DSS$ is using the hyperparameter set $\{\phi_1,\lambda_0,\lambda_1/(1-\phi_1^2),\Theta\}$. Best 5 results in $DSS$ are in bold.}
\label{tab_aux1}
\scalebox{0.8}{\begin{tabular}{lrlrrlrrlrr}
\hline\hline
                         &       &  & \multicolumn{2}{c}{$x_{1:50}$} &  & \multicolumn{2}{c}{$x_{1:4}$} &  & \multicolumn{2}{c}{$x_{5:50}$} \\
$p=50$                   & Time (s) &  & SSE         & Ham.      &  & SSE         & Ham.     &  & SSE         & Ham.      \\ \hline
DLM                      & 0.2   &   & 629.6       & 94.2        &   & 530.8      & 27.0        &   & 98.8       & 100       \\
LASSO                    & 10.2 &   & 552.4       & 19.8        &   & 342.9      & 52.0        &   & 219.8       & 17.0        \\
DSS                      &       &   &            &             &   &           &             &   &            &             \\
\hline
$\{$.95, .7, 10, .9$\}$  & 19.7  &   & 123.0       & 12.1        &   & 120.9      & 19.4        &   & 2.1       & 11.5        \\
$\{$.95, .7, 10, .95$\}$ & 19.3  &   & \textbf{ 89.5}       & 18.7        &   & \textbf{ 87.5}      & \textbf{ 15.6}        &   & \textbf{ 2.0}       & 19.0        \\
$\{$.95, .7, 10, .98$\}$ & 17.0  &   & 108.2       & {\color{black} 33.8}        &   & 104.0      & 17.2        &   & 4.2       & {\color{black} 35.3}        \\
$\{$.95, .9, 10, .9$\}$  & 19.9  &   & 206.3       & \textbf{ 7.8}         &   & 202.7      & 28.0        &   & 3.6       & 6.1         \\
$\{$.95, .9, 10, .95$\}$ & 19.5  &   & 127.6       & 10.1        &   & 125.0      & 19.9        &   & 2.6       & 9.3         \\
$\{$.95, .9, 10, .98$\}$ & 19.0  &   & 91.8       & 17.9        &   & 89.7      & \textbf{ 15.8}        &   & 2.1       & 18.0        \\
\hline
$\{$.95, .7, 25, .9$\}$  & 21.8  &   & {\color{black} 425.1}       & 7.9         &   & {\color{black} 413.6}      & {\color{black} 45.4}        &   & 11.5       & \textbf{ 4.7}         \\
$\{$.95, .7, 25, .95$\}$ & 20.1  &   & {\color{black} 341.8}       & 9.4         &   & {\color{black} 333.5}      & {\color{black} 40.3}        &   & 8.2       & 6.7         \\
$\{$.95, .7, 25, .98$\}$ & 21.0  &   & 233.9       & 12.6        &   & 229.1      & 30.8        &   & 4.7       & 11.0        \\
$\{$.95, .9, 25, .9$\}$  & 19.0  &   & {\color{black} 468.8}       & \textbf{ 6.1}         &   & {\color{black} 455.4}      & {\color{black} 48.1}        &   & {\color{black} 13.4}       & \textbf{ 2.4}         \\
$\{$.95, .9, 25, .95$\}$ & 17.4  &   & {\color{black} 404.4}       & \textbf{ 6.6}         &   & {\color{black} 392.8}      & {\color{black} 44.4}        &   & 11.5       & \textbf{ 3.3}         \\
$\{$.95, .9, 25, .98$\}$ & 16.6  &   & {\color{black} 303.7}       & 8.4         &   & {\color{black} 297.6}      & {\color{black} 36.8}        &   & 6.1       & \textbf{ 6.0}         \\
\hline
$\{$.98, .7, 10, .9$\}$  & 18.5  &   & 130.6       & {\color{black} 39.7}        &   & 115.2      & 21.7        &   & {\color{black} 15.4}       & {\color{black} 41.2}       \\
$\{$.98, .7, 10, .95$\}$ & 22.8  &   & 215.1       & {\color{black} 59.3}        &   & 174.0      & 26.1        &   & {\color{black} 41.1}       & {\color{black} 62.2}        \\
$\{$.98, .7, 10, .98$\}$ & 14.5  &   & 290.8       & {\color{black} 84.1}        &   & 239.1      & 26.7        &   & {\color{black} 51.7}       & {\color{black} 89.1}        \\
$\{$.98, .9, 10, .9$\}$  & 20.0  &   & \textbf{ 79.7}       & 17.2        &   & \textbf{ 77.7}      & \textbf{ 14.2}        &   & \textbf{ 2.0}       & 17.4        \\
$\{$.98, .9, 10, .95$\}$ & 20.9  &   & \textbf{ 81.5}       & 28.2        &   & \textbf{ 77.4}      & 17.5        &   & 4.1       & 29.1        \\
$\{$.98, .9, 10, .98$\}$ & 9.2   &   & 148.0       & {\color{black} 48.1}        &   & 132.8      & 22.7        &   & {\color{black} 15.2}       & {\color{black} 50.3}        \\
\hline
$\{$.98, .7, 25, .9$\}$  & 17.7  &   & 162.9       & 10.0        &   & 160.4      & 22.7        &   & \textbf{ 2.5}       & 8.9         \\
$\{$.98, .7, 25, .95$\}$ & 13.7  &   & 94.5       & 13.5        &   & 92.8      & \textbf{ 15.9}        &   & \textbf{ 1.6}       & 13.3        \\
$\{$.98, .7, 25, .98$\}$ & 17.1  &   & \textbf{ 75.2}       & 23.8        &   & \textbf{ 73.1}      & \textbf{ 15.5}        &   & 2.0       & 24.5        \\
$\{$.98, .9, 25, .9$\}$  & 12.3  &   & 254.1       & \textbf{ 7.2}        &   & 249.8      & 32.4        &   & 4.3       & \textbf{ 5.1}         \\
$\{$.98, .9, 25, .95$\}$ & 12.0  &   & 152.7       & \textbf{ 7.8}         &   & 150.0      & 21.9        &   & 2.7       & 6.6         \\
$\{$.98, .9, 25, .98$\}$ & 10.6  &   &\textbf{  91.3}       & 12.6        &   & \textbf{ 89.5}      & 16.0        &   & \textbf{ 1.8}       & 12.3       \\
\hline\hline
\end{tabular}}
\end{table}

Now, we explore a far more challenging scenario, repeating the example with $p=1000$ instead of 50.
The coefficients and data generating process are the same with $p=50$, but now instead of 46 noise regressors, we have 996.
This high regressor redundancy rate is representative of the ``$p>>n$" paradigm (``$p>>T$" for time series data) and can test the limits of any sparsity inducing procedure.
Under this setting, we will be able to truly evaluate the efficacy of $DSS$ and compare it to other methods when there is a large number of predictors with sparse signals.
The results are collated in Table~\ref{tab_synth1000}.  The same set of hyperparameters that performed best for $p=50$ also does extremely well  for $p=1000$, dramatically reducing SSE over DLM and LASSO.
We also note that, while LASSO does perform well in terms of  the Hamming distance, it does not do so well in terms of SSE.
Because LASSO lacks dynamics, the pattern of sparsity is not smooth over time, leading to erratic coefficient evolutions.
Because of the smooth nature of its sparsity, $DSS$ harnesses the dynamics to discern signal from noise, improving in both SSE and Hamming distance.

\begin{table}[!t]
\centering
\caption{Performance evaluation of the methods compared for the high-dimensional simulated example with $p=1000$. The results are split for the signal parameters ($x_{1:4}$) and noise parameters ($x_{5:1000}$). Hamming distance is in percentages. Number in brackets for $DSS$ is using the hyperparameter set $\{\phi_1,\lambda_0,\lambda_1/(1-\phi_1^2),\Theta\}$. Best 5 results in $DSS$ are in bold.}
\label{tab_synth1000}
\scalebox{0.8}{\begin{tabular}{lrlrrlrrlrr}
\hline\hline
                         &        &  & \multicolumn{2}{c}{$x_{1:1000}$} &  & \multicolumn{2}{c}{$x_{1:4}$} &  & \multicolumn{2}{c}{$x_{5:1000}$} \\
$p=1000$                 & Time   &  & SSE              & Ham.          &  & SSE             & Ham.        &  & SSE              & Ham.          \\ \hline
DLM                      & 6.2    &   & 949.5         & 99.7        &   & 936.9      & 27.0        &   & 12.6         & 100       \\
LASSO                     & 717.1 &   & 589.4         &     1.7    &   & 537.2      &    57.4      &   & 52.2         &     1.4  \\
DSS                      &        &   &              &             &   &           &             &   &              &             \\
\hline
$\{$.95, .7, 10, .9$\}$  & 854.2  &   & 294.7         & 2.1         &   & 293.6      & 36.5        &   & 1.1         & 2.0         \\
$\{$.95, .7, 10, .95$\}$ & 899.4  &   & 273.2         & 3.7         &   & 272.2      & 29.2        &   & 0.9         & 3.6         \\
$\{$.95, .7, 10, .98$\}$ & 782.8  &   & 302.3         & {\color{black} 6.7}         &   & 300.6      & \textbf{ 19.2}        &   & 1.7         & {\color{black} 6.7}         \\
$\{$.95, .9, 10, .9$\}$  & 658.7  &   & 376.0         & 1.1         &   & 373.9      & 43.0        &   & 2.1         & 0.9         \\
$\{$.95, .9, 10, .95$\}$ & 738.9  &   & 289.3         & 1.7         &   & 288.1      & 35.4        &   & 1.2         & 1.6         \\
$\{$.95, .9, 10, .98$\}$ & 785.4  &   & 288.9         & 3.6         &   & 287.8      & 28.6        &   & 1.1         & 3.5         \\
\hline
$\{$.95, .7, 25, .9$\}$  & 490.7  &   & {\color{black} 597.2}         & \textbf{ 0.7}         &   & {\color{black} 590.0}      & {\color{black} 54.6}        &   & {\color{black} 7.2}         & \textbf{ 0.5}         \\
$\{$.95, .7, 25, .95$\}$ & 630.9  &   & {\color{black} 511.7}         & 1.0         &   & {\color{black} 507.7}      & {\color{black} 50.2}        &   & 4.1         & 0.8         \\
$\{$.95, .7, 25, .98$\}$ & 814.4  &   & 375.8         & 1.7         &   & 373.5      & 43.5        &   & 2.4         & 1.5         \\
$\{$.95, .9, 25, .9$\}$  & 388.8  &   & {\color{black} 663.2}         & \textbf{ 0.5}         &   & {\color{black} 652.2}      & {\color{black} 56.7}        &   & {\color{black} 11.1}         & \textbf{ 0.2}         \\
$\{$.95, .9, 25, .95$\}$ & 527.4  &   & {\color{black} 574.8}         & \textbf{ 0.5}         &   & {\color{black} 567.3}      & {\color{black} 52.6}        &   & {\color{black} 7.5}         & \textbf{ 0.3}         \\
$\{$.95, .9, 25, .98$\}$ & 563.4  &   & 439.5        & \textbf{ 0.9}         &   & 436.0      & {\color{black} 46.8}        &   & 3.5         & \textbf{ 0.7}         \\
\hline
$\{$.98, .7, 10, .9$\}$  & 1018.1 &   & 344.2         & {\color{black} 10.4}        &   & 343.3      & 27.0        &   & 0.9         & {\color{black} 10.3}        \\
$\{$.98, .7, 10, .95$\}$ & 466.6  &   & 397.5         & {\color{black} 13.5}        &   & 389.2      & 28.6        &   & {\color{black} 8.3}         & {\color{black} 13.4}        \\
$\{$.98, .7, 10, .98$\}$ & 300.7  &   & {\color{black} 475.8}         & {\color{black} 16.5}        &   & {\color{black} 450.6}      & 27.5        &   & {\color{black} 25.2}         & {\color{black} 16.4}        \\
$\{$.98, .9, 10, .9$\}$  & 836.6  &   & \textbf{ 235.6}         & 3.4         &   & \textbf{ 235.0}      & \textbf{ 20.5}        &   & \textbf{ 0.6}         & 3.3         \\
$\{$.98, .9, 10, .95$\}$ & 740.9  &   & \textbf{ 251.4}         & 6.4         &   & \textbf{ 250.6}      & \textbf{ 23.7}        &   & \textbf{ 0.8}         & 6.3         \\
$\{$.98, .9, 10, .98$\}$ & 383.3  &   & 413.6         & {\color{black} 11.5}        &   & 409.0      & \textbf{ 26.8}        &   & 4.6         & {\color{black} 11.4}        \\
\hline
$\{$.98, .7, 25, .9$\}$  & 837.2  &   & 341.6         & 1.5         &   & 340.2      & 36.7        &   & 1.5         & 1.4         \\
$\{$.98, .7, 25, .95$\}$ & 953.6  &   & \textbf{ 242.6}         & 2.4         &   & \textbf{ 241.8}      & 28.8        &   & \textbf{ 0.8}         & 2.2         \\
$\{$.98, .7, 25, .98$\}$ & 787.0  &   & \textbf{ 218.1}         & 4.4         &   & \textbf{ 217.3}      & \textbf{ 19.8}        &   & \textbf{ 0.8}         & 4.3         \\
$\{$.98, .9, 25, .9$\}$  & 701.9  &   & 413.6         & \textbf{ 0.8}         &   & 410.9      & 44.7        &   & 2.7         & \textbf{ 0.7}         \\
$\{$.98, .9, 25, .95$\}$ & 760.3  &   & 320.4         & 1.2         &   & 318.9      & 37.5        &   & 1.5         & 1.1         \\
$\{$.98, .9, 25, .98$\}$ & 781.3  &   & \textbf{ 245.4}         & 2.3         &   & \textbf{ 244.5}      & 29.6        &   & \textbf{ 0.9}         & 2.1     \\  
\hline\hline 
\end{tabular}}
\end{table}

\end{document}